\newcommand\version{June 18, 2017}
\newtheorem{theorem}{Theorem}%[section]
\newtheorem{proposition}[theorem]{Proposition}
\newtheorem{lemma}[theorem]{Lemma}
\newtheorem{corollary}[theorem]{Corollary}
\theoremstyle{definition}
\newtheorem{assumption}[theorem]{Assumption}
\theoremstyle{remark}
\newtheorem{remark}[theorem]{Remark}
\newcommand{\A}{\mathbf{A}}
\newcommand{\bB}{\mathbf{B}} 
\newcommand{\C}{\mathbb{C}}
\newcommand{\ch}{\mathord{\mathfrak h}}
\renewcommand{\epsilon}{\varepsilon}
\newcommand{\ii}{{\rm i}}
\newcommand{\N}{\mathbb{N}}
\renewcommand{\phi}{\varphi}
\newcommand{\R}{\mathbb{R}}
\newcommand{\Sph}{\mathbb{S}}
\newcommand{\Z}{\mathbb{Z}}
\newcommand\1{{\ensuremath {\mathds 1} }}
\DeclareMathOperator{\im}{Im}
\DeclareMathOperator{\ran}{ran}
\DeclareMathOperator{\re}{Re}
\DeclareMathOperator{\spec}{spec}
\begin{document}

\title[Critical temperature --- \version]{The BCS critical temperature in a weak homogeneous magnetic field}

\author[R. L. Frank]{Rupert L. Frank}
\address[R. L. Frank]{Mathematisches Institut der Universit\"at M\"unchen, Theresienstr. 39, 80333 M\"unchen, Germany, and Mathematics 253-37, Caltech, Pasadena, CA 91125, USA}
\email{rlfrank@caltech.edu}

\author[C. Hainzl]{Christian Hainzl}
\address[C. Hainzl]{Mathematisches Institut, Universit\"at T\"ubingen, Auf der Morgenstelle 10, 72076 T\"ubingen, Germany}
\email{christian.hainzl@uni-tuebingen.de}

\author[E. Langmann]{Edwin Langmann}
\address[E. Langmann]{Department of Physics, Royal Institute of Technology KTH, 106 91 Stockholm, Sweden}
\email{langmann@kth.se}

\begin{abstract}
We show that, within a linear approximation of BCS theory, a weak homogeneous magnetic field lowers the critical temperature by an explicit constant times the field strength, up to higher order terms. This provides a rigorous derivation and generalization of results obtained in the physics literature from WHH theory of the upper critical magnetic field. A new ingredient in our proof is a rigorous phase approximation to control the effects of the magnetic field.
\end{abstract}

\maketitle

\renewcommand{\thefootnote}{${}$} \footnotetext{\copyright\, 2017 by
  the authors. This paper may be reproduced, in its entirety, for
  non-commercial purposes.}

\section{Introduction and main result}

\subsection{Aims and scope}

In this paper we are interested in the infimum of the spectrum of the two-particle operator
\begin{equation}
\label{eq:twobodyk}
\frac{(-\ii\nabla_x + \frac12 \bB\wedge x)^2 + (-\ii\nabla_y + \frac12 \bB\wedge y)^2 - 2\mu}{\tanh \left( \frac\beta 2 \left( (-\ii\nabla_x + \frac12 \bB\wedge x)^2 - \mu \right)\right) + \tanh \left( \frac\beta 2 \left( (-\ii\nabla_y + \frac12 \bB\wedge y)^2 - \mu \right) \right)} - V(x-y)
\end{equation}
acting in
$$
L_{\rm symm}^2(\R^3\times\R^3) = \left\{\alpha\in L^2(\R^3\times\R^3):\ \alpha(x,y)=\alpha(y,x) \ \text{for all}\ x,y\in\R^3 \right\}.
$$
Here $-2V(x-y)$ is the interaction potential between the two particles which we assume to be spherically symmetric, i.e., it only depends on the distance $|x-y|$. (Later on, we will assume that the interaction potential is non-positive and the minus sign, as opposed to the more usual plus sign, will simplify some formulas.) Moreover, $\mu\in\R$ is the chemical potential. We are interested in the dependence of the operator on two parameters, namely, the inverse temperature $\beta>0$ and a constant magnetic field $\bB\in\R^3$, whose strength $B=|\bB|$ we shall assume to be small.

More precisely, we are interested in identifying regimes of temperatures $T=\beta^{-1}$ such that the infimum of the spectrum of the above operator is positive or negative for all sufficiently small $B$.

As we will explain in detail below, the motivation for this question comes from BCS theory of superconductivity and the operator arises through the linearization of the Bogolubov--de Gennes equation around the normal state. Therefore, the question whether the infimum of the spectrum of the operator \eqref{eq:twobodyk} is positive or negative corresponds to the local stability of the normal state. 

The largest magnetic field strength $B$ at a given temperature $T$ below the critical temperature $T_c$ where the normal state remains unstable is known in the physics literature as \emph{upper critical magnetic field} $B_{c2}(T)$, and it was first computed in the physics literature by Werthammer, Helfand and Hohenberg (WHH) based on an ansatz and certain simplifications \cite{HeWe,WeHeHo}; this is explained in Appendix \ref{appendix}. Our work provides a rigorous derivation of $B_{c2}(T)$ close to $T_c$ without these simplifications. Besides its mathematical interest, this is partly motivated by recently discovered superconducting materials challenging assumptions used in standard WHH theory.

To describe our main result we introduce the effective one-body operator
\begin{equation}
\label{eq:onebody}
\frac{(-\ii\nabla_r)^2 - \mu}{\tanh \left( \frac\beta 2 \left( (-\ii\nabla_r)^2 - \mu \right)\right)} - V(r)
\end{equation}
acting in
$$
L^2_{\rm symm}(\R^3) = \{ \alpha \in L^2(\R^3):\ \alpha(-r)=\alpha(r) \ \text{for all}\ r\in\R^3\} \,.
$$
Later on, we will see that the variable $r\in\R^3$ arises as the relative coordinate $r=x-y$ of the two particles at $x$ and $y$. We will \emph{assume} that the operator $|(-\ii\nabla_r)^2-\mu| -V(r)$ has a negative eigenvalue. Then it is easy to see (see, e.g., \cite{HHSS}) that there is a unique $\beta_c\in (0,+\infty)$ such that the operator \eqref{eq:onebody} is non-negative for $\beta\leq\beta_c$ and has a negative eigenvalue for $\beta>\beta_c$. Let $T_c = \beta_c^{-1}$. Then our main result is, roughly speaking, that the infimum of the spectrum of the two-particle operator \eqref{eq:twobodyk} is negative for $T \leq T_c - c_0 B + o(B)$ and positive for $T\geq T_c - c_0 B + o(B)$. Here $c_0$ is a positive constant which we compute explicitly in terms of the zero-energy ground state of \eqref{eq:onebody} at $\beta=\beta_c$.

The interpretation of this result is that, at least in linear approximation, a weak magnetic field lowers the BCS critical temperature to $T_c(B)=T_c-c_0 B+o(B)$. This is well-known in the physics literature \cite{HeWe,WeHeHo}, and we provide a rigorous mathematical proof without simplifying assumptions and with precise error bounds. In particular, our result proves that the slope of the upper critical field at the critical temperature
$$
\frac{d B_{c2}(T)}{dT}|_{T \uparrow T_c} := \lim_{B\to 0} \frac{B}{T_c(B)-T_c}
$$
is well-defined and equal to $-c_0^{-1}$. As discussed in Appendix \ref{appendix}, this reduces to the known result for this slope in WHH theory \cite{La,La2} in a limiting case.

The mathematical challenge of this problem is that low energy states of the two particle operator \eqref{eq:twobodyk} show a two-scale structure. As function of the relative coordinate $r=x-y$ and the center of mass coordinate $X=(x+y)/2$ it varies on a scale of order one with respect to $r$ and on a (much larger) scale of order $B^{-1}$ with respect to $X$. The variation on the former scale is responsible for the leading order term $T_c$ for the critical temperature, whereas the variation on the latter scale is responsible for the subleading lowering of order $B$. A similar separation of scales is typical for BCS theory near the critical temperature \cite{FHSS} and its effect on the critical temperature was explored in \cite{FHSS2}. We explain the main differences with \cite{FHSS2} after having presented our main result in the next subsection.

It will be more convenient for us to work not directly with the above two-particle operator, but rather with its Birman--Schwinger version. We now describe the precise set-up of our analysis.

%%%%%%%%%%%%%%%%%%%%%%%%%%%%%

\subsection{Model and main result}

Our model depends on the following ingredients.

\begin{assumption}\label{ass2}
(1) Homogeneous magnetic field $\bB=Be_3$ of strength $B>0$ in the direction of the third coordinate axis $e_3=(0,0,1)^T$\\
(2) Inverse temperature $\beta = T^{-1}>0$\\
(3) Chemical potential $\mu\in\R$\\
(4) Non-negative, spherically symmetric interaction potential $V$ such that $V\in L^\infty(\R^3)$ and $|r|V\in L^\infty(\R^3)$
\end{assumption}

Probably our analysis can be extended to cover some (not too severe) local singularities of $V$, but our boundedness assumptions allow us to avoid the related technicalities. Moreover, the non-negativity assumption on $V$ is only for technical convenience and we expect that our results hold true also for non-positive or sign-changing potentials satisfying the remaining assumptions.

Standard superconducting materials known at the time when WHH theory was developed are metallic in the normal state with $\mu>0$. We also allow for $\mu\leq 0$ since this is relevant for low-density superconductors like SrTiO$_3$ and for systems close to a superconductor-insulator phase transition.

We note that, in part (4) of Assumption \ref{ass2} $V$ refers to a function $\R^3\to\R$, $x\mapsto V(x)$. To simplify notation and since the precise meaning is always clear from the context, we use the same symbol $V$ also for the corresponding multiplication operators on $L^2_{\rm symm}(\R^3)$ (i.e., $(V\alpha)(r) = V (r)\alpha(r)$) and on $L^2_{\rm symm}(\R^3\times\R^3)$ (i.e., $(V\alpha)(x, y) = V (x-y)\alpha(x, y)$).

The magnetic momentum and the single-particle Hamiltonian are defined, respectively, by
$$
\pi = -\ii\nabla + \A  \qquad \text{with} \qquad
\A(x) = \frac12 \bB\wedge x = (B/2) (-x_2,x_1,0)^T
$$
and
$$
\ch_B = \pi^2 -\mu \,.
$$
The two particles are represented by coordinates $x,y\in\R^3$. If we want to emphasize the variables on which the operators act, we write
$$
\pi_x = -\ii\nabla_x + \frac12 \bB\wedge x \,,
\qquad
\pi_y = -\ii\nabla_y + \frac12 \bB\wedge y
$$
and
$$
\ch_{B,x} = \pi_x^2-\mu \,,
\qquad
\ch_{B,y} = \pi_y^2 -\mu \,.
$$
(Technically speaking these operators, as well as any operators in what follows, are considered as the Friedrichs extensions of the corresponding differential expressions acting on smooth and compactly supported functions.) Let us introduce a function $\Xi_\beta:\R^2\to\R$ by
$$
\Xi_\beta(E,E') := \frac{\tanh\frac{\beta E}{2}+\tanh\frac{\beta E'}{2}}{E+E'}
$$
if $E+E'\neq 0$ and $\Xi_\beta(E,-E) = (\beta/2)/\cosh^2(\beta E/2)$. (We comment in Remark \ref{notation} below on our non-standard notation.) Since the operators $\ch_{B,x}$ and $\ch_{B,y}$ commute, we can define the operator
$$
L_{T,B} = \Xi_\beta(\ch_{B,x},\ch_{B,y}) \,.
$$
We will always consider this operator in the Hilbert space $L^2_{\rm symm}(\R^3\times\R^3)$. Note that, with this notation, the operator in \eqref{eq:twobodyk} can be written as $L_{T,B}^{-1}+V$.

Next, in order to formulate our assumption on the critical temperature, we introduce the function $\chi_\beta:\R\to\R$ by
$$
\chi_\beta(E) := \frac{\tanh\frac{\beta E}2}{E}
$$
and set $\chi_\infty(E):=|E|^{-1}$. We consider the compact operator
$$
V^{1/2} \chi_\beta(p_r^2-\mu) V^{1/2}
$$
in $L^2_{\rm symm}(\R^3)$, where
$$
p_r = -\ii\nabla_r
$$
denotes the momentum operator. (The operator $\chi_\beta(p_r^2-\mu)$ is denoted by $K_T^{-1}$ in \cite{HHSS} and several works thereafter.) 

\begin{assumption}\label{ass0}
$\sup\spec V^{1/2}\chi_\infty(p_r^2-\mu)V^{1/2}>1$.
\end{assumption}

Since $\beta\mapsto\chi_\beta(E)$ is strictly increasing for each fixed $E\in\R$, there is a unique $\beta_c\in (0,\infty)$ such that
\begin{align*}
\sup \spec V^{1/2} \chi_\beta(p_r^2-\mu) V^{1/2} & \leq 1 \qquad\text{if}\ \beta\leq\beta_c \,,\\
\sup \spec V^{1/2} \chi_\beta(p_r^2-\mu) V^{1/2} & > 1 \qquad\text{if}\ \beta>\beta_c \,.
\end{align*}
We set $T_c=\beta_c^{-1}$.

\begin{assumption}\label{ass1}
The eigenvalue $1$ of the operator $V^{1/2} \chi_{\beta_c}(p_r^2-\mu) V^{1/2}$ is simple.
\end{assumption}

We denote by $\phi_*$ a normalized eigenfunction of $V^{1/2} \chi_\beta(p_r^2-\mu) V^{1/2}$ corresponding to the eigenvalue $1$ which, by assumption, is unique up to a phase. Since $p_r^2$ and $V$ are real operators, so is $V^{1/2} \chi_\beta(p_r^2-\mu) V^{1/2}$ and we can assume that $\phi_*$ is real-valued.

The spherical symmetry of $V$ from Assumption \ref{ass2} and the non-degeneracy from Assumption \ref{ass1} imply that $\phi_*$ is spherically symmetric.

From a physics point of view, Assumption \ref{ass1} restricts us to potentials giving raise to s-wave superconductivity. It is known that this assumption is fulfilled for a large class of potentials, including those which have a non-negative Fourier transform \cite{HaSe}.

As the final preliminary before stating our main result, we will introduce some constants. They are defined in terms of the auxiliary functions
\begin{align}\label{eq:auxiliary}
g_0(z) & = \frac{\tanh(z/2)}{z} \,,\notag \\
g_1(z) & = \frac{e^{2z}-2ze^z-1}{z^2(e^z+1)^2} = \frac1{2z^2} \frac{\sinh z-z}{\cosh^2(z/2)}\,, \notag \\
g_2(z) & = \frac{2e^z (e^z-1)}{z(e^z+1)^3} = \frac1{2z} \frac{\tanh(z/2)}{\cosh^2(z/2)}\,,
\end{align}
as well as the function
\begin{equation}
\label{eq:t}
t(p) := \|\chi_{\beta_c}((-\ii\nabla_r)^2-\mu) V^{1/2}\phi_*\|^{-1}\  2 (2\pi)^{-3/2} \int_{\R^3} dx\, V(x)^{1/2} \phi_*(x) e^{-\ii p\cdot x} \,.
\end{equation}
(The prefactor in front of the integral is irrelevant for us and only introduced for consistency with the definition in \cite{FHSS2}.) We now set
\begin{align}\label{eq:glcoeff}
\Lambda_0 & := \frac{\beta_c^2}{16} \int_{\R^3} \frac{dp}{(2\pi)^3}\, |t(p)|^2 \left(g_1(\beta_c(p^2-\mu)) + \frac23 \beta_c p^2 g_2(\beta_c(p^2-\mu)) \right) \,,\\
\Lambda_2 & := \frac{\beta_c}8 \int_{\R^3} \frac{dp}{(2\pi)^3}\, |t(p)|^2 \cosh^{-2}(\beta_c(p^2-\mu)/2) \,.
\end{align}
Note that the quotient $\Lambda_0/\Lambda_2$, which will appear in our main result, has the dimension of an inverse temperature.

We are now in position to state our main result.

\begin{theorem}\label{main}
Under assumptions \ref{ass2}, \ref{ass0} and \ref{ass1} the following holds.
\begin{enumerate}
\item[(1)] Let $0<T_1<T_c$. Then there are constants $B_0>0$ and $C>0$ such that for all $0< B\leq B_0$ and all $T_1\leq T< T_c - 2T_c(\Lambda_0/\Lambda_2)B -C B^2$ one has
$$
\inf_{\Phi} \langle\Phi, (1- V^{1/2} L_{T,B} V^{1/2})\Phi\rangle <0 \,.
$$
\item[(2)] There are constants $B_0>0$ and $C>0$ such that for all $0< B \leq B_0$ and all $T\geq T_c - 2T_c(\Lambda_0/\Lambda_2)B + C B^{5/4}$ one has
$$
\langle\Phi, (1- V^{1/2} L_{T,B} V^{1/2})\Phi\rangle >0 \,,
$$
unless $\Phi=0$.
\end{enumerate}
\end{theorem}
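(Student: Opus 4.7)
The plan is to analyze $H_{T,B}:=1-V^{1/2}L_{T,B}V^{1/2}$ via a \emph{magnetic-phase approximation} that decouples, to leading order, the relative coordinate $r=x-y$ from the center-of-mass coordinate $X=(x+y)/2$. The key device is a gauge factor: writing $\Phi(x,y)=\exp(\tfrac{\ii}{2}\bB\cdot(x\wedge y))\tilde\Phi(r,X)$ converts the magnetic momenta $\pi_x,\pi_y$ into first-order differential operators in which the relative momentum $p_r$ and the shifted center-of-mass momentum $\Pi_X:=-\ii\nabla_X+\bB\wedge X$ appear essentially separately, up to commutator errors of order $B|r|$. Combined with a suitable spectral (e.g.\ Matsubara or contour) representation of $\Xi_\beta$, this lets one expand $L_{T,B}$ around its non-magnetic analog $\chi_\beta(p_r^2-\mu)$, with the $B$-corrections organized as polynomials in $\Pi_X$ whose coefficients are precisely the functions $g_1$ and $g_2$ from \eqref{eq:auxiliary}; this is the mechanism by which $\Lambda_0$ enters.

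For part (1), I would evaluate $\langle\Phi,H_{T,B}\Phi\rangle$ on the trial state $\Phi=\exp(\tfrac{\ii}{2}\bB\cdot(x\wedge y))\phi_*(r)\psi_0(X)$, where $\psi_0$ is a lowest Landau-level eigenfunction of $\Pi_X^2$. The $\phi_*$-factor makes $1-V^{1/2}\chi_\beta(p_r^2-\mu)V^{1/2}$ act, to first order in $\beta-\beta_c$, as a scalar proportional to $(\beta_c-\beta)$ times a matrix element of $\partial_\beta\chi_\beta=(2\cosh^2(\beta E/2))^{-1}$, reproducing the $\Lambda_2$-coefficient; the $\psi_0$-factor, combined with the $\Pi_X^2$-terms from the phase expansion, generates the competing $+2\Lambda_0 B$ magnetic contribution. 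Balancing these two opposite-sign contributions reproduces the threshold $T_c-2T_c(\Lambda_0/\Lambda_2)B$ and yields strict negativity below it.

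For part (2), I would decompose $\tilde\Phi=\tilde\Phi_\parallel+\tilde\Phi_\perp$, where $\tilde\Phi_\parallel=\phi_*(r)\psi(X)$ is the $\phi_*$-component in the relative variable and $\tilde\Phi_\perp$ is orthogonal to it. On $\tilde\Phi_\perp$, Assumption \ref{ass1} ensures a spectral gap of $V^{1/2}\chi_{\beta_c}(p_r^2-\mu)V^{1/2}$ away from $1$; combined with smallness of the phase-approximation error, this gives a uniform coercive bound $\langle\Phi_\perp,H_{T,B}\Phi_\perp\rangle\geq c\|\Phi_\perp\|^2$. On $\tilde\Phi_\parallel$, the same phase expansion reduces the quadratic form to that of an effective magnetic Ginzburg--Landau one-body operator on $\psi(X)$, whose infimum vanishes (to leading order) at $T=T_c-2T_c(\Lambda_0/\Lambda_2)B$ and is strictly positive above; cross-terms between $\tilde\Phi_\parallel$ and $\tilde\Phi_\perp$ are absorbed by Cauchy--Schwarz, the asymmetric $O(B^{5/4})$ error reflecting the optimal tradeoff between the $O(B)$ gap in the $\parallel$-sector and the $\parallel$-$\perp$ cross-term loss.

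The main obstacle is making the magnetic-phase approximation quantitative. Since $\Xi_\beta(\ch_{B,x},\ch_{B,y})$ is a non-local function of two individually non-trivial magnetic Hamiltonians, the operator-norm difference between $L_{T,B}$ and its phase-approximated surrogate must be estimated uniformly for small $B$ and for $\beta$ near $\beta_c$, with precision better than $B$ for part (1) and better than $B^{5/4}$ for part (2). Producing such bounds requires commuting $\Pi_X$ through a semigroup or resolvent representation of $\Xi_\beta$ and controlling the resulting commutator chains; this is where the bulk of the technical work lies, and constitutes the principal new ingredient advertised in the abstract.
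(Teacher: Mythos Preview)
Your overall architecture is close to the paper's, but there is a genuine gap in part~(2), and a smaller issue in part~(1).

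\medskip

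\textbf{Part (1).} The trial-state idea is right, but the gauge factor $e^{\frac{\ii}{2}\bB\cdot(x\wedge y)}$ is antisymmetric under $x\leftrightarrow y$, so your trial state $e^{\frac{\ii}{2}\bB\cdot(x\wedge y)}\phi_*(r)\psi_0(X)$ does not lie in $L^2_{\rm symm}$. The paper simply uses $\Phi(x,y)=\phi_*(x-y)\psi((x+y)/2)$ without any gauge factor; the phase $e^{\frac{\ii}{2}\bB\cdot(x\wedge y)}$ enters not as a change of variables on states but through the exact identity $G_B^z(x,y)=e^{\frac{\ii}{2}\bB\cdot(x\wedge y)}g_B^z(x-y)$ for the resolvent kernel, which is then fed into a Matsubara representation of $L_{T,B}$. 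The resulting expansion produces the operator $\cos(Z\cdot\Pi_X)$ acting on the center-of-mass variable, and Taylor-expanding this cosine yields the $\Lambda_0$ and $\Lambda_2$ terms. So your mechanism is right, but the phase should be attached to the \emph{kernel} of $L_{T,B}$, not to the trial state.

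\medskip

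\textbf{Part (2).} Here your argument has a real hole. You assert that on $\tilde\Phi_\perp$ the spectral gap of $V^{1/2}\chi_{\beta_c}(p_r^2-\mu)V^{1/2}$ combined with ``smallness of the phase-approximation error'' yields $\langle\Phi_\perp,H_{T,B}\Phi_\perp\rangle\geq c\|\Phi_\perp\|^2$. But the phase-approximation error is \emph{not} $O(B)$ in operator norm: the difference between $L_{T,B}$ (or its phase-approximated surrogate $N_{T,B}$) and $\chi_\beta(p_r^2-\mu)$ is only controlled by $\|\Pi_X^2\alpha\|$, not by $B\|\alpha\|$. Without any a priori control on the center-of-mass kinetic energy of an almost-minimizer, you cannot transfer the $r$-variable gap to a coercive bound on $\Phi_\perp$. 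This is precisely the difficulty the paper flags when contrasting with the periodic-field case of \cite{FHSS2}.

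The paper circumvents this by a different route: first the elementary pointwise bound $\Xi_\beta(E,E')\leq\tfrac12(\chi_\beta(E)+\chi_\beta(E'))$ gives $L_{T,B}\leq\tfrac12(\chi_\beta(\ch_{B,x})+\chi_\beta(\ch_{B,y}))$; then the unitary $U=e^{-\ii\Pi_X\cdot r/2}$ (not your multiplicative gauge factor) satisfies $\ch_{B,x}=U(\pi_r^2-\mu)U^*$ and $\ch_{B,y}=U^*(\pi_r^2-\mu)U$, which yields the operator inequality $1-V^{1/2}L_{T,B}V^{1/2}\geq\kappa(1-Q)-CB$ with $Q=\tfrac12(UPU^*+U^*PU)$ and $P=|\phi_*\rangle\langle\phi_*|$. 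The operator $1-Q$ is then related to $\Pi_X^2$ via $AA^*=\tfrac12(1+R)$ with $R=\int|\phi_*(r)|^2\cos(r\cdot\Pi_X)\,dr$, and a separate lemma shows $1-R^2\geq c\,\Pi_X^2/(E_0+\Pi_X^2)$. Only after this a priori bound on $\Pi_X^2$ is in hand can one justify the phase approximation on almost-minimizers and run the decomposition; the ``parallel'' piece is then $\cos(\Pi_X\cdot r/2)\psi(X)\phi_*(r)$ rather than $\psi(X)\phi_*(r)$, and a further momentum cutoff $\psi=\psi_\leq+\psi_>$ at scale $\epsilon\sim B^{1/2}$ is needed to handle the cross terms and produce the $B^{5/4}$ error. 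Your Cauchy--Schwarz absorption of cross terms would not on its own reach this precision.
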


The interpretation of this theorem is that for small magnetic fields $B$ the critical temperature goes down by an amount $2T_c(\Lambda_0/\Lambda_2)B$ plus higher order terms. This gives, in particular, the slope of the upper critical field at $T_c$; see also Appendix \ref{appendix}.

The assumption in part (1) that the temperature is bounded away from zero is probably technical. Note however, that our result is valid for arbitrarily small $T_1>0$, as long as it is uniform in $B$. The reason for this restriction is that our expansions diverge as the temperature goes to zero. Remarkably, there is no such restriction in part (2) of the theorem.

\begin{remark}
Let us emphasize that our definition of critical temperature coincides with that in \cite{HHSS} (and therefore with that in \cite{FHSS,FHSS2}) and that our Assumptions \ref{ass0} and \ref{ass1} coincides with \cite[Assumption 2]{FHSS}. This is a consequence of the Birman--Schwinger principle, which also implies that, if $\alpha_*$ denotes a normalized, real-valued eigenfunction of the operator \eqref{eq:onebody}, then
$$
V^{1/2}\alpha_* = \pm \|\chi_{\beta_c}((\ii\nabla_r)^2-\mu) V^{1/2}\phi_*\|^{-1} \phi_* \,.
$$
(To get the normalization constant, we apply $\chi_{\beta_c}((-\ii\nabla_r)^2-\mu) V^{1/2}$ to both sides and use the equation for $\alpha_*$ and its normalization.)
\end{remark}

\begin{remark}\label{notation}
Our notation deviates somewhat from the standard one in the physics literature. Our $V$ corresponds to $-V/2$ in the physics literature. Here the factor of $1/2$, but not the minus sign, is consistent with \cite{HHSS}. The minus sign is used in order to simplify the formulas. Moreover, our $\chi_\beta$ and $\Xi_\beta$ correspond to $2\chi_\beta$ and $2\Xi_\beta$ in the physics literature. The factor $2$ here compensates the factor $1/2$ in $V$ in expressions like $V^{1/2}\chi_\beta(p^2-\mu)V^{1/2}$.
\end{remark}

Let us compare our results here with those in \cite{FHSS2} where we also computed the shift of the critical temperature due to external fields. The results of \cite{FHSS2}, which rely on those in \cite{FHSS} are more complete since they consider the physically relevant setting of a finite sample and, more importantly, since they treat the critical temperature in a non-linear setting (although eventually, it is proved that the critical temperature is determined by the linearization). It would be desirable to extend our results here in this direction and we believe that our analysis is the first and crucial step in this direction.

The reason why the same problem in our setting here of a homogeneous magnetic field is more complicated is the following. In \cite{FHSS,FHSS2} the external magnetic field was assumed to be periodic and to have flux zero through the boundaries of an elementary cell which, in some sense, means that it is a small perturbation. In particular, we could prove a priori bounds \cite[Lemmas 2 and 3]{FHSS} which do not contain the magnetic magnetic field neither in the relative nor in the center of mass variable. We do not expect a similar result to hold in our setting. Instead we prove an a priori bound which contains a magnetic field in the center of mass variable. This is essentially the content of Proposition \ref{opineq} and Lemma \ref{rbound}. The non-commutativity of the components of the magnetic momentum here leads to significant technical difficulties. The second novelty in this paper, as compared to \cite{FHSS} and \cite{FHSS2}, is the absence of semi-classical expansions and its replacement by the so-called phase approximation. This technique is well-known in the physics literature (it is used, for instance, in \cite{HeWe} in a related context) and appeared in the mathematics literature, for instance, in \cite{CoNe,Ne}. We feel that this technique is both conceptually and technically simpler and may have many applications in related problems.

\begin{remark}
Let us rewrite for a moment the magnetic field strength as $B=h^2$, such that $h$ denotes the ratio between the microscopic and the
macroscopic scale set by the field strength. Then our result can be stated in the following way. 
The constant magnetic field ${\bf B}$ lowers the critical temperature by  
$$ h^2 T_c D_c + o(h^2),$$ 
where 
$$ D_c := \frac {\Lambda_0} {\Lambda_2} \inf {\rm spec} (-\ii \nabla_X + e_3 \wedge X)^2 $$
can be interpreted as the lowest eigenvalue of the linearized Ginzburg--Landau operator. This formulation shows that our present result extends the earlier result \cite[Theorem 2.4]{FHSS2} to the case of constant magnetic fields. It further reproves the fact that the macroscopic fluctuations are captured by Ginzburg--Landau theory with parameters which are determined by the underlying microscopic system (that is, by $V$ and $\mu$). As we mentioned before, while in \cite{FHSS, FHSS2} the (magnetic) Laplace operator of the Ginzburg--Landau equation was recovered by tedious semi-classical expansions, in the present work this operator is simply recovered by changing to center of mass variables and a corresponding Taylor expansion, i.e., by expanding the cosine in \eqref{eq:repr2} below up to second order, which comes from a simple magnetic shift in the center of mass direction. 
\end{remark}

%%%%%%%%%%%%%%%%%%%%%%%%%

\subsection{Connection to BCS theory}

In this subsection we describe how the two-body operators \eqref{eq:twobodyk} and $L_{T,B}$ arise in a problem in superconductivity. Our purpose here is to give a motivation and our presentation in this subsection will be informal. For background and references on the mathematical study of BCS theory we refer to \cite{HSreview}.

We consider a superconducting sample occupying all of $\R^3$ at inverse temperature $\beta>0$ and chemical potential $\mu\in\R$. The particles interact through a two-body potential $-V(x-y)$ and are placed in an external magnetic field with vector potential $\A(x)$. In BCS theory the state of a system is described by two operators $\gamma$ and $\alpha$ in $L^2(\R^3)$, representing the one-body density matrix and the Cooper pair wave function, respectively. The operator $\gamma$ is assumed to be Hermitian and the operator $\alpha$ is assumed to satisfy $\alpha^* =\overline{\alpha}$, where for a general operator $A$ we write $\overline A = \mathcal C A \mathcal C$ with $\mathcal C$ denoting complex conjugation. Moreover, it is assumed that
$$
0\leq \begin{pmatrix}
\gamma & \alpha \\ \overline\alpha & 1-\overline\gamma
\end{pmatrix} \leq 1 \,.
$$

In an equilibrium state the operators $\gamma$ and $\alpha$ satisfy the (non-linear) Bogolubov--de Gennes equations
\begin{align*}
& \begin{pmatrix}
\gamma & \alpha \\ \overline\alpha & 1-\overline\gamma
\end{pmatrix} 
= \left( 1+ \exp\left( \beta H_{\Delta_{V,\alpha}}\right) \right)^{-1} \,, \\
& \qquad\text{where}\qquad
\Delta_{V,\alpha}(x,y) = -2V(x-y)\alpha(x,y)
\qquad\text{and}\qquad
H_{\Delta} = 
\begin{pmatrix}
\ch & \Delta \\ \overline\Delta & -\overline\ch
\end{pmatrix} \,.
\end{align*}
Here $\Delta$ is considered as an integral operator with integral kernel $\Delta(x,y)$. Moreover, $\ch =(-\ii\nabla +\A)^2-\mu$ is the one-particle operator.

Note that one solution of the equation is $\gamma= (1+\exp(\beta\ch))^{-1}$ and $\alpha=0$. This is the \emph{normal state}. We are interested in the local stability of this solution and therefore will linearize the equation around it.

It is somewhat more convenient to write the equation in the equivalent form
$$
\begin{pmatrix}
\gamma & \alpha \\ \overline\alpha & 1-\overline\gamma
\end{pmatrix} 
= \frac12 - \frac12 \tanh\left( \frac{\beta}2 H_{\Delta_{V,\alpha}}\right) \,.
$$
Then, in view of the partial fraction expansion (also known as Mittag--Leffler series)
$$
\tanh z = \sum_{n\in\Z} \frac{1}{z -\ii (n+1/2)\pi}
$$
(where we write $\sum_{n\in\Z}$ short for $\lim_{N\to\infty} \sum_{n=-N}^N$ for conditionally convergent sums like this one; convergence becomes manifest by combining the $+n$ and $-n$ terms),
$$
\tanh\left( \frac{\beta}2 H_{\Delta}\right) = - \frac2\beta \sum_{n\in\Z} \frac{1}{\ii\omega_n - H_\Delta}
$$
with the \emph{Matsubara frequencies}
\begin{equation}
\label{eq:matsubara}
\omega_n = \pi(2n+1)T \,, \qquad n\in\Z \,.
\end{equation}
Using this formula we can expand the operator $\tanh(\beta H_\Delta/2)$ in powers of $\Delta$. Since
\begin{align*}
\frac{1}{\ii\omega_n - H_\Delta} & = \frac{1}{\ii\omega_n - H_0} + \frac{1}{\ii\omega_n - H_0} \begin{pmatrix} 0 & \Delta \\ \overline\Delta & 0 \end{pmatrix} \frac{1}{\ii\omega_n - H_0} + \ldots \\
& = \begin{pmatrix} (\ii\omega_n - \ch)^{-1} & 0 \\ 0 & (\ii\omega_n + \overline\ch)^{-1} \end{pmatrix} \\
& \qquad + \begin{pmatrix} 0 & (\ii\omega_n - \ch)^{-1}\Delta (\ii\omega_n + \overline\ch)^{-1} \\ (\ii\omega_n + \overline\ch)^{-1}\overline \Delta (\ii\omega_n - \ch)^{-1} & 0\end{pmatrix} + \ldots \,,
\end{align*}
the Bogolubov--de Gennes equation for the Cooper pair wave function becomes
$$
\alpha = \frac1\beta \sum_{n\in\Z} (\ii\omega_n - \ch)^{-1}\Delta_{V,\alpha} (\ii\omega_n + \overline\ch)^{-1} + \ldots \,,
$$
where $\ldots$ stands for terms that are higher order in $\alpha$. The key observation now is that
\begin{equation}
\label{eq:ltsum}
\frac1\beta \sum_{n\in\Z} (\ii\omega_n - \ch)^{-1}\Delta_{V,\alpha} (\ii\omega_n + \overline\ch)^{-1} = L_{T,B} V\alpha \,.
\end{equation}
(Here $V\alpha$ on the right side is considered as a two-particle wave function, defined by $(V\alpha)(x,y)=V(x-y)\alpha(x,y)$.) This identity follows by writing
\begin{equation}
\label{eq:zetaidentity}
- \frac2\beta \sum_{n\in\Z} (\ii\omega_n - E)^{-1} (\ii\omega_n + E')^{-1} = 
- \frac{2}{\beta} \sum_{n\in\Z} \frac{1}{E+E'} \left( \frac{1}{\ii\omega_n - E} - \frac{1}{\ii\omega_n + E'} \right)
\end{equation}
and using the partial fraction expansion of $\tanh$ to recognize the right side as $\Xi_\beta(E,E')$.

Thus, the linearized Bogolubov--de Gennes equation becomes
$$
\alpha = L_{T,B} V\alpha \,.
$$
There are two ways to make this equation self-adjoint. The first one is to apply the operator $L_{T,B}^{-1}$ to both sides and to subtract $V\alpha$. In this way we obtain the operator \eqref{eq:twobodyk}. The other way is to multiply both sides of the equation by $V^{1/2}$, to subtract $V^{1/2} L_{T,B} V\alpha$ and to call $\Phi=V^{1/2}\alpha$. In this way we arrive at the operator $1-V^{1/2} L_{T,B} V^{1/2}$ which appears in our main result, Theorem \ref{main}.

The upshot of this discussion is that positivity of the operator \eqref{eq:twobodyk} (or, equivalently, of $1-V^{1/2} L_{T,B} V^{1/2})$ corresponds to local stability of the normal state and negativity of these operators corresponds to local instability. If we define two critical local temperatures $\overline{T_c^{\rm loc}(B)}$ as the smallest temperature above which the normal state is always stable and $\underline{T_c^{\rm loc}(B)}$ as the largest temperature below which the normal state is never stable, then our theorems says that (ignoring the presence of $T_1$ for simplicity) both $\overline{T_c^{\rm loc}(B)}$ and $\underline{T_c^{\rm loc}(B)}$ are equal to $T_c -c_0 B + o(B)$ as $B\to 0$ where $c_0 = 2T_c \Lambda_0/\Lambda_2>0$.

%%%%%%%%%%%%%%%%%%%%%%%%%%%%%%%%%%%%%%%%

\subsection*{Acknowledgements}

We are very grateful to Michael Loss, whose ideas played a crucial role in finding the proof of Lemma \ref{decomp1}. E. L. would like to thank Yaron Kadem for helpful discussions. Partial support by the U.S. National Science Foundation through grant DMS-1363432 (R.L.F.) and by Vetenskapsr\aa det through grant 2016-05167 (E.L.) is acknowledged.

%%%%%%%%%%%%%%%%%%%%%%%%%%%%%%%%%%%%%%%%

\section{Bounds on the resolvent kernel}

In this section we prove bounds on the resolvent kernel
\begin{equation}
\label{eq:reskernel}
G_B^{z}(x,y) := \frac 1{ z - \ch_B} (x,y)
\end{equation}
of the operator $\ch_B=\pi^2-\mu$ in $L^2(\R^3)$ with a constant magnetic field $\bB=B e_3$ with $B\geq 0$ and a chemical potential $\mu\in\R$. (Here and in the following, we use the convention that for an integral operator $K$ its integral kernel is denoted by $K(x,y)$.) We introduce the function
\begin{equation}
\label{eq:defgb}
g_B^z(x) := G_B^{z}(x,0) \,,
\qquad x\in\R^3 \,.
\end{equation}

We first collect some simple properties of this function.

\begin{lemma}\label{propgbz}
The function $g_B^z$ satisfies for all $B\geq 0$, $z\in\C\setminus[B,\infty)$ and $x,y\in\R^3$,
\begin{itemize}
\item[(i)] $g_B^z(-x)=g_B^z(x)$
\item[(ii)] $G_B^z(x,y) = e^{\frac \ii 2 \bB \cdot (x\wedge y)} g_B^z(x-y)$
\end{itemize}
\end{lemma}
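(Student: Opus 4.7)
My plan is to derive both identities from symmetries of the self-adjoint operator $\ch_B=\pi^2-\mu$: parity symmetry for (i), and magnetic translation covariance for (ii). In each case the strategy is to first establish the symmetry at the operator level and then read off the corresponding statement for the integral kernel, which for $z$ outside the spectrum is smooth off the diagonal.

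For (i) I would use the parity operator $P\colon f(x)\mapsto f(-x)$. Since $A(-x)=\tfrac12\bB\wedge(-x)=-A(x)$ and $\nabla$ also changes sign under $x\mapsto-x$, one has $P\pi P^{-1}=-\pi$, and hence $P\ch_B P^{-1}=\ch_B$. Therefore $P$ commutes with the resolvent $(z-\ch_B)^{-1}$, which at the level of integral kernels reads $G_B^z(-x,-y)=G_B^z(x,y)$; specializing $y=0$ gives (i).

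For (ii) I would introduce the magnetic translations $(T_a f)(x):=e^{\ii\phi_a(x)}f(x-a)$ for $a\in\R^3$ and choose the phase $\phi_a$ so that $T_a\pi=\pi T_a$. A short computation with the product rule yields
$$
\pi(T_a f)(x)-T_a(\pi f)(x)=e^{\ii\phi_a(x)}\bigl(\nabla\phi_a(x)+A(x)-A(x-a)\bigr)f(x-a),
$$
so the intertwining relation is equivalent to $\nabla\phi_a(x)=A(x-a)-A(x)=-\tfrac12\bB\wedge a$, which holds for $\phi_a(x)=-\tfrac12(\bB\wedge a)\cdot x$. The resulting operator $T_a$ is unitary and satisfies $T_a\ch_B T_a^{-1}=\ch_B$, so conjugating the resolvent produces the kernel identity
$$
G_B^z(x,y)=e^{\ii(\phi_a(x)-\phi_a(y))}\,G_B^z(x-a,y-a)
\qquad\text{for every }a\in\R^3.
$$
Setting $a=y$ and using the identity $(x-y)\wedge y=x\wedge y$ to simplify the phase to $\tfrac12\bB\cdot(x\wedge y)$, together with the definition $g_B^z(\,\cdot\,)=G_B^z(\,\cdot\,,0)$, yields (ii).

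The main bookkeeping issue is tracking signs and the scalar triple product in the magnetic phase; beyond that, the proof is essentially a matter of turning two unitary intertwining relations into kernel identities. Invariance of the Friedrichs extension under $P$ and $T_a$ is routine, since both map the form core $C_c^\infty(\R^3)$ into itself and preserve the associated quadratic form, and the passage from operator identities to pointwise kernel identities is standard off the spectrum.
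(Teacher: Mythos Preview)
Your proof is correct and takes a genuinely different route from the paper. The paper derives both (i) and (ii) by writing down an explicit expression for the resolvent kernel via the Landau level decomposition,
$$
G_B^z(x,y)=\sum_{k\ge 0}\int_\R\frac{dp_3}{2\pi}\,\frac{e^{\ii p_3(x_3-y_3)}}{z-(2k+1)B-p_3^2+\mu}\,P_B^{(k)}(x_\bot,y_\bot),
$$
and then reads off the two symmetry properties from the known form of the Landau projection kernels $P_B^{(k)}(x_\bot,y_\bot)$ (which factor as a function of $x_\bot-y_\bot$ times the phase $e^{\frac{\ii}{2}B(x_1y_2-x_2y_1)}$). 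Your approach instead exploits the two unitary symmetries---parity and magnetic translations---that commute with $\ch_B$, and transfers the resulting operator identities to the kernel. Your route is more conceptual, works for any gauge in which $A$ is linear, and avoids invoking Laguerre polynomials or the Landau band structure; the paper's route is heavier but yields an explicit formula for $g_B^z$ along the way (although only (i) and (ii) are actually used later). Both arguments are equally rigorous once one knows the resolvent kernel exists as a locally integrable function, which it does in three dimensions.
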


\begin{proof}
We introduce coordinates $x=(x_\bot,x_3)$, $y=(y_\bot,y_3)$, perform a Fourier transform in the $x_\bot$ and $y_\bot$ variables and use the known structure of the spectrum of the Landau Hamiltonian. Thus, in terms of the projections $P_B^{(k)}$ in $L^2(\R^2)$ on the $k$-th Landau level, the kernel of $(z-\ch_B)^{-1}$ can be written as
$$
G_B^z(x,y) = \sum_{k\in\N_0} \int_{\R} \frac{dp_3}{2\pi} \frac{1}{z-(2k+1)B-p_3^2+\mu} e^{\ii p_3(x_3-y_3)} P_B^{(k)}(x_\bot,y_\bot) \,.
$$
Explicitly, the kernel of the projection $P_B^{(k)}$ is given by
$$
P_B^{(k)}(x_\bot,y_\bot) = \frac{B}{2\pi} L_k\left(\frac{B(x_\bot-y_\bot)^2}{2}\right) e^{-B(x_\bot-y_\bot)^2/4} e^{B\frac{\ii}{2}(x_1y_2-x_2y_1)} \,,
$$
where $L_k=L_k^{(0)}$ is the $k$-th Laguerre polynomial. From these formulas it is easy to deduce (i) and (ii).
\end{proof}

Our next goal is to quantify the decay of the $L^1$-norm of $g_B^z$ as $|z|\to\infty$ along the imaginary axis. We begin with the simpler case $B=0$. We employ the notation
\begin{equation}
\label{eq:posnegpart}
\mu_+ = \max\{\mu,0\} \,,
\qquad
\mu_- = -\min\{\mu,0\}\,,
\end{equation}
that is, $\mu=\mu_+-\mu_-$.

\begin{lemma}\label{freeres}
For every $a>-2$ there is a constant $C_a>0$ such that for all $\omega\in \R$ one has
$$
\left\||\cdot|^a g_0^{\ii\omega} \right\|_1 \leq C_a \left( \frac{|\omega|+\mu_+}{|\omega|(|\omega|+\mu_-)} \right)^{(a+2)/2} \,.
$$
\end{lemma}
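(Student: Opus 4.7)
My plan is to reduce the estimate to an explicit calculation using the classical formula for the free Green's function on $\R^3$. Since $\ch_0 = -\Delta - \mu$, one has $(i\omega - \ch_0)^{-1} = -(-\Delta - (i\omega+\mu))^{-1}$, whose integral kernel is known: with $\kappa \in \C$ chosen so that $\kappa^2 = i\omega + \mu$ and $\im\kappa > 0$, the kernel is $-e^{i\kappa|x-y|}/(4\pi|x-y|)$. Setting $y=0$ gives
\[
 |g_0^{i\omega}(x)| = \frac{e^{-v|x|}}{4\pi |x|}, \qquad v := \im \kappa > 0.
\]
Passing to polar coordinates,
\[
 \bigl\| |\cdot|^a g_0^{i\omega}\bigr\|_1 = \int_0^\infty r^{a+1} e^{-v r}\, dr = \frac{\Gamma(a+2)}{v^{a+2}},
\]
which is finite precisely because $a>-2$.

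The whole problem is then to produce the right lower bound on $v$. Writing $\kappa = u + iv$ with $v>0$, the equations $u^2 - v^2 = \mu$ and $2uv = \omega$ can be solved explicitly to give the two equivalent expressions
\[
 v^2 = \frac{\sqrt{\mu^2 + \omega^2} - \mu}{2} = \frac{\omega^2}{2\bigl(\sqrt{\mu^2+\omega^2} + \mu\bigr)}.
\]
I would now split into the two sign regimes of $\mu$, because $\mu_+$ appears in the denominator of the target bound while $\mu_-$ appears in the numerator. In the case $\mu\ge 0$ (so $\mu_-=0$, $\mu_+=\mu$) I would use the second form together with the elementary estimate $\sqrt{\mu^2+\omega^2}+\mu \le 2\mu + |\omega| \le 2(\mu_+ + |\omega|)$ to obtain $v^2 \ge \omega^2/\bigl(4(\mu_+ + |\omega|)\bigr)$, which is exactly $c\,|\omega|(|\omega|+\mu_-)/(|\omega|+\mu_+)$ in this regime. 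In the case $\mu<0$ (so $\mu_+=0$, $\mu_- = |\mu|$) I would use the first form, bound $\sqrt{\mu^2+\omega^2}\ge |\omega|$ and $\sqrt{\mu^2+\omega^2}\ge |\mu| = \mu_-$ separately, and average to obtain $v^2 \ge (|\omega|+\mu_-)/2$, which again matches the target.

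Raising the lower bound on $v^2$ to the power $(a+2)/2$ and plugging into $\Gamma(a+2)/v^{a+2}$ immediately yields the stated inequality with an explicit constant $C_a$ depending only on $a$. I do not expect any genuine obstacle here: the only care needed is the sign case distinction for $\mu$ and the correct choice of branch of the square root defining $\kappa$. The lemma is essentially the statement that the only nontrivial information in $g_0^{i\omega}$ is the decay rate $v = \im\sqrt{i\omega+\mu}$, and the four-variable expression $(|\omega|+\mu_+)/(|\omega|(|\omega|+\mu_-))$ is simply a uniform upper bound for $1/v^2$ that is sharp as $|\omega|\to\infty$ and as $|\omega|\to 0$ in either sign of $\mu$.
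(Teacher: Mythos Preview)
Your argument is correct and essentially identical to the paper's proof: you use the explicit free resolvent kernel to reduce to computing $\Gamma(a+2)/v^{a+2}$ with $v=\im\sqrt{\ii\omega+\mu}$, and then bound $v^2$ from below by the same case distinction on the sign of $\mu$, arriving at the same inequalities. The only cosmetic difference is that for $\mu<0$ the paper simply notes $v^2=(\sqrt{\mu^2+\omega^2}+|\mu|)/2\ge(|\omega|+|\mu|)/2$ directly from $\sqrt{\mu^2+\omega^2}\ge|\omega|$, without the averaging detour.
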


\begin{proof}
One has
$$
g_0^z(x) = -\frac{e^{\ii\sqrt{z+\mu}|x|}}{4\pi|x|}
$$
with $\im\sqrt{z+\mu}\geq 0$ and therefore, by scaling,
\begin{align*}
\left\| |\cdot|^a g_0^z \right\|_1 & = \frac1{4\pi} \int_{\R^3} dx\, |x|^{a-1} e^{-\im\sqrt{z+\mu}|x|} = (\im\sqrt{z+\mu})^{-a-2} \frac1{4\pi} \int_{\R^3} dy\, |y|^{a-1} e^{-|y|} \\
& = (\im\sqrt{z+\mu})^{-a-2} \Gamma(a+2) \,.
\end{align*}
Since
$$
(\im\sqrt{\ii\omega+\mu})^2 = \frac{\sqrt{\mu^2+\omega^2}-\mu}{2}
\geq
\begin{cases}
\frac14 \frac{\omega^2}{|\omega|+\mu} & \text{if}\ \mu> 0 \,,\\
\frac12 (|\omega|+|\mu|) & \text{if}\ \mu\leq 0 \,,
\end{cases}
$$
we obtain the bound in the lemma.
\end{proof}
 
The next lemma deals with $B\neq 0$ by comparing it to the case $B=0$.
 
\begin{lemma}\label{magres}
There are constants $\delta>0$ and $C>0$ such that for all $B\geq 0$ and $\omega\in\R$ with $B^2(|\omega|+\mu_+)^2 \leq \delta \omega^2(|\omega|+\mu_-)^2$ one has
$$
\| g_B^{\ii\omega} - g_0^{\ii\omega} \|_1 \leq C B^2 \left( \frac{|\omega|+\mu_+}{|\omega| (|\omega|+\mu_-)} \right)^3 \,.
$$
\end{lemma}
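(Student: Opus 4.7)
The plan is a second-order resolvent (Dyson) expansion around $B=0$, using the rotational symmetry of $g_0^{\ii\omega}$ to kill the would-be linear-in-$B$ term, and then bounding everything in $L^1$ via Young's inequality and Lemma \ref{freeres}. Writing the perturbation as $W := \ch_B - \ch_0 = 2A\cdot p + A^2 = B L_3 + \tfrac{B^2}{4}(x_1^2+x_2^2)$, where $L_3 = -\ii(x_1\partial_2 - x_2\partial_1)$ is the third component of angular momentum, the Dyson identity
$$
G_B^{\ii\omega} - G_0^{\ii\omega} = G_0^{\ii\omega} W G_0^{\ii\omega} + G_0^{\ii\omega} W G_0^{\ii\omega} W G_B^{\ii\omega}
$$
evaluated at $(x,0)$ reduces the $L^1(dx)$-bound on $g_B^{\ii\omega}-g_0^{\ii\omega}$ to two pieces.

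For the leading term $(G_0^{\ii\omega} W G_0^{\ii\omega})(x,0) = \int g_0^{\ii\omega}(x-x_1)\, W_{x_1} g_0^{\ii\omega}(x_1)\,dx_1$, the $BL_3$-contribution vanishes, because $g_0^{\ii\omega}$ is spherically symmetric (Lemma \ref{propgbz}) and hence annihilated by $L_3$. Only the $A^2$-piece $(B^2/4)\int g_0^{\ii\omega}(x-x_1)|x_{1,\perp}|^2 g_0^{\ii\omega}(x_1)\,dx_1$ survives, and Young's inequality bounds its $L^1(dx)$-norm by $(B^2/4)\,\|g_0^{\ii\omega}\|_1\, \||x|^2 g_0^{\ii\omega}\|_1$. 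Applying Lemma \ref{freeres} with $a=0$ and $a=2$ produces exactly $CB^2\bigl((|\omega|+\mu_+)/(|\omega|(|\omega|+\mu_-))\bigr)^3$, which is the bound claimed in the lemma.

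For the remainder $(G_0^{\ii\omega} W G_0^{\ii\omega} W G_B^{\ii\omega})(x,0)$, one expands $W$ again into its $BL_3$ and $A^2$ parts and bounds each resulting term in $L^1(dx)$. The $A^2$-insertions are again handled by weighted Young estimates; the $L_3$-insertions are treated either by integration by parts (using anti-self-adjointness of $L_3$) or by iterating the Dyson identity once more to replace the innermost $G_B^{\ii\omega}$ by $G_0^{\ii\omega}$, so that in the end only moments of $g_0^{\ii\omega}$ and $\nabla g_0^{\ii\omega}$ enter, all controlled by (variants of) Lemma \ref{freeres}. Each additional factor of $W$ combines with Lemma \ref{freeres} to produce an extra factor of the form $B\cdot (|\omega|+\mu_+)/(|\omega|(|\omega|+\mu_-)) \le \sqrt\delta$, so the series converges geometrically and the remainder is dominated by the leading $B^2$ contribution.

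The main technical obstacle is the clean bookkeeping of the $L_3$ derivatives in the remainder. Unlike the $A^2$-insertions, $L_3$ is a first-order differential operator which does not commute with the kernels; when it falls on $g_0^{\ii\omega}(x-x_1)$ it produces a term bounded by $|x_1|\,|\nabla g_0^{\ii\omega}(x-x_1)|$ (since $L_3^{x_1}$ acting on a function radial around $x_2$ sees only the part of $x_1$ that is \emph{not} $x_1-x_2$), and when it falls on $G_B^{\ii\omega}$ in the innermost slot one still needs a moment bound, which typically requires a further iteration. It is this moment-counting which forces precisely the smallness condition $B^2(|\omega|+\mu_+)^2 \le \delta\,\omega^2(|\omega|+\mu_-)^2$ hypothesized in the statement.
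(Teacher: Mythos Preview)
Your approach is essentially the same idea as the paper's --- a resolvent identity plus the observation that rotational symmetry kills the would-be linear-in-$B$ contribution --- but the paper organizes it more economically and avoids the place where your argument becomes vague. The paper writes the identity in the form $G_B = G_0 + G_B T_B$ with $T_B = W G_0$, evaluates the kernel at $y=0$, and observes that $T_B(x,0) = B^2 h(x)$ with $h(x)=\tfrac14|e_3\wedge x|^2 g_0(x)$ because $\nabla g_0(x)$ is radial and hence orthogonal to $\bB\wedge x$. Since $|G_B(x,w)|=|g_B(x-w)|$, Young's inequality gives directly $\|g_B-g_0\|_1 \le B^2\|h\|_1\|g_B\|_1 \le B^2\|h\|_1(\|g_B-g_0\|_1+\|g_0\|_1)$, which one solves for $\|g_B-g_0\|_1$. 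No series, no second iteration, no $L_3$ bookkeeping.

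Your treatment of the remainder $G_0 W G_0 W G_B$ is where the proposal has a gap. You assert that the $L_3$-insertions can be handled ``by integration by parts or by iterating once more'', but neither is carried out, and the integration-by-parts route produces weights like $|x_1|\,|\nabla g_0(x-x_1)|$ that are not of convolution type and do not feed directly into Young's inequality. What you are missing is the observation that $g_B(x)=G_B(x,0)$ is itself invariant under rotations about the $\bB$-axis (this is immediate from the Landau-level expansion in Lemma~\ref{propgbz}), so $L_3 g_B = 0$ and in fact every $L_3$ insertion in the tower evaluated at $y=0$ vanishes, not just the first one. With that observation the remainder collapses to a pure $A^2$ contribution, and then the cleanest way to close is precisely the paper's self-consistent bound rather than a full series expansion.
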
 

\begin{proof}
\emph{Step 1.} Let
$$
h^z(x) := \frac 14 (e_3 \wedge x)^2 g^z_0(x) \,.
$$
We claim that $g_B^z$ satisfies the equation
\begin{equation}
\label{eq:gbeq}
g_0^z(x) = g_B^z(x) - B^2 \int_{\R^3}dy\, e^{\frac \ii 2 \bB\cdot(x \wedge y)} g_B^z(x-y) h^z(y) \,.
\end{equation}
To see this, we note that, in the sense of operators,
$$
(z - \ch_B) G_0^z = \1 - T^z_B
$$
with
$$
T^z_B(x,y) := - (\bB \wedge x) \cdot \ii \nabla_x G^z_0(x-y) + \frac 14 (\bB \wedge x)^2 G^z_0(x-y) \,.
$$
Applying the operator $(z-\ch_B)^{-1}$ from the left we obtain
$$
G_0^z(x,y) = G_B^z(x,y) - \int_{\R^3} dw\, G_B^z(x,w) T_B^z(w,y)
$$
and, if we set $y=0$ and recall Lemma \ref{propgbz} (ii),
$$
g_0^z(x) = g_B^z(x) - \int_{\R^3} dw\, e^{\frac \ii 2 \bB\cdot(x \wedge w)} g_B^z(x-w) T_B^z(w,0) \,.
$$
To obtain the claimed equation it suffices to note that
$$
T^z_B(x,0) =  - (\bB \wedge x) \cdot \ii \nabla_x g^z_0(x) + \frac 14 (\bB \wedge x)^2 g^z_0(x) =\frac 14 (\bB \wedge x)^2 g^z_0(x) = B^2 h^z(x) \,,
$$
where we used the fact that the vector $\nabla_x g^z_0(x)$ is orthogonal to the vector $\bB \wedge x$.

\emph{Step 2.} We claim that, if $B^2<\|h^z\|_1^{-1}$, then
\begin{equation}
\label{eq:magresdiff}
\| g_B^z - g_0^z \|_1 \leq B^2  \frac{\| h^z\|_1 \|g^z_0\|_1}{1 - B^2 \|h^z\|_1} \,.
\end{equation}
In fact, \eqref{eq:gbeq} implies
$$
\| g_B^z - g_0^z \|_1 \leq B^2 \||g_B^z | \ast |h^z| \|_1 \leq B^2 \|g_B^z\|_1 \|h^z\|_1 \leq B^2 \|g_B^z - g^z_0\|_1 \|h^z\|_1 + B^2 \|g_0^z\|_1 \|h^z\|_1 \,.
$$
Here we denoted convolution by $*$ and we made use of Young's convolution inequality. This proves the claimed inequality.

\emph{Step 3.} We now conclude the proof of the proposition. We first observe that by the spherical symmetry of $g_0^z$
$$
\|h^z\|_1 = \frac23 \left\| |\cdot|^2 g_0^z \right\|_1 \,,
$$
and therefore, by Lemma \ref{freeres}, there is a $C>0$ such that for all $\omega\in\R$,
$$
\| h^{\ii\omega} \|_1 \leq C \left( \frac{|\omega|+\mu_+}{|\omega|(|\omega|+\mu_-)} \right)^2 \,.
$$
Thus, if $B^2\leq (1/(2C))|\omega|^2(|\omega|+\mu_-)^2/(|\omega|+\mu_+)^2$, then $B^2 \|h^{\ii\omega}\|_1\leq 1/2$ and therefore, by \eqref{eq:magresdiff},
$$
\| g_B^{\ii\omega} - g_0^{\ii\omega} \|_1 \leq 2 B^2 \|h^{\ii\omega}\|_1 \|g_0^{\ii\omega}\|_1 \,.
$$
Using once more the bound above on $\|h^{\ii\omega}\|_1$ as well as the bound from Lemma \ref{freeres} on $\|g_0^{\ii\omega}\|_1$ we obtain the bound claimed in the lemma.
\end{proof}

%%%%%%%%%%%%%%%%%%%%%%%%%%%%%%%%%%%%%%%%

\section{A representation formula for the operator $L_{T,B}$}

In this section we derive a useful representation formula for the operator $L_{T,B}$ as a sum over contributions from the individual Matsubara frequencies $\omega_n$ from \eqref{eq:matsubara}. Moreover, we express the formula in terms of center of mass and relative coordinates,
$$
r = x-y \,,
\qquad
X= (x+y)/2 \,.
$$
The magnetic momentum in the center of mass coordinate is
\begin{equation}
\label{eq:pix}
\Pi_X = -\ii\nabla_X + 2\A(X) = -\ii\nabla_X + \bB\wedge X \,.
\end{equation}

\begin{lemma}\label{repr}
The operator $L_{T,B}$ acts as
$$
\left( L_{T,B}\alpha\right)(X+\frac r2,X-\frac r2) = \iint_{\R^3\times\R^3} dZ\,ds\, k_{T,B}(Z,r,s) \left(\cos(Z\cdot\Pi_X)\alpha\right)(X+\frac s2,X-\frac s2)
$$
with
$$
k_{T,B}(Z,r,s) := -\frac 2\beta \sum_{n\in\Z} k_{T,B}^n(Z,r,s)
$$
and
$$
k_{T,B}^n(Z,r,s) :=  g_B^{\ii\omega_n}(Z+ \frac{r-s}2)g_B^{-\ii\omega_n}(Z- \frac{r-s}2) e^{\frac{\ii}4 \bB\cdot(r\wedge s)} \,.
$$
\end{lemma}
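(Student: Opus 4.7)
The approach is to expand $\Xi_\beta$ via its Matsubara representation, compute the integral kernel of $L_{T,B}$ using Lemma~\ref{propgbz}(ii), and reinterpret the resulting center-of-mass phase as a magnetic translation. Concretely, the Mittag--Leffler identity~\eqref{eq:zetaidentity} gives
\[
\Xi_\beta(E,E') = -\frac{2}{\beta}\sum_{n\in\Z}\frac{1}{(\ii\omega_n - E)(\ii\omega_n + E')},
\]
and since $\ch_{B,x}$ and $\ch_{B,y}$ commute, functional calculus yields
\[
L_{T,B} = -\frac{2}{\beta}\sum_{n\in\Z}(\ii\omega_n - \ch_{B,x})^{-1}(\ii\omega_n + \ch_{B,y})^{-1}.
\]
The kernel of $(\ii\omega_n - \ch_{B,x})^{-1}$ in $x$ is $G_B^{\ii\omega_n}(x,x')$ by \eqref{eq:reskernel}, while the spectral theorem identifies the kernel of $(\ii\omega_n + \ch_{B,y})^{-1}$ in $y$ with $-G_B^{-\ii\omega_n}(y,y')$.

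Next I would pass to center of mass and relative coordinates by setting $x = X+r/2$, $y = X-r/2$, $x' = X'+s/2$, $y' = X'-s/2$, and $Z:=X-X'$. Then $x-x'=Z+(r-s)/2$ and $y-y'=Z-(r-s)/2$, so Lemma~\ref{propgbz}(ii) applied to each resolvent kernel extracts the factors $g_B^{\ii\omega_n}(Z+(r-s)/2)$ and $g_B^{-\ii\omega_n}(Z-(r-s)/2)$. A direct bilinear expansion of the two wedge products, aided by $Z\wedge Z=0$, reduces the combined Aharonov--Bohm phase to
\[
\tfrac{1}{2}\bB\cdot\bigl[(x\wedge x')+(y\wedge y')\bigr] = -\bB\cdot(X\wedge Z) + \tfrac{1}{4}\bB\cdot(r\wedge s),
\]
the second summand of which already matches the phase in $k_{T,B}^n$. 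The first summand is the signature of a magnetic translation: because $\nabla_X\bigl[Z\cdot(\bB\wedge X)\bigr] = Z\wedge\bB \perp Z$, the commutator $[-\ii Z\cdot\nabla_X,\,Z\cdot(\bB\wedge X)]$ vanishes, so the exponential factorizes cleanly into
\[
e^{\pm\ii Z\cdot\Pi_X} f(X) = e^{\pm\ii\bB\cdot(X\wedge Z)}\,f(X\pm Z).
\]
Thus $e^{-\ii\bB\cdot(X\wedge Z)}\tilde\alpha(X-Z,s)$, with $\tilde\alpha(X,s):=\alpha(X+s/2,X-s/2)$, is exactly one of the two branches of $2\cos(Z\cdot\Pi_X)\tilde\alpha(X,s)$.

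The other branch appears after symmetrization. Under $Z\to -Z$ combined with the Matsubara reindexing $n\mapsto -n-1$ (which sends $\omega_n\mapsto-\omega_n$), Lemma~\ref{propgbz}(i) shows that the sum $\sum_n g_B^{\ii\omega_n}(Z+(r-s)/2)\,g_B^{-\ii\omega_n}(Z-(r-s)/2)$ is invariant, whereas the surviving exponential $e^{-\ii\bB\cdot(X\wedge Z)}\tilde\alpha(X-Z,s)$ is replaced by $e^{+\ii\bB\cdot(X\wedge Z)}\tilde\alpha(X+Z,s)$. Averaging the two equivalent presentations of the $Z$-integral collapses these into $\cos(Z\cdot\Pi_X)\tilde\alpha(X,s)$ acting in the $X$-variable, and the representation claimed in the lemma follows. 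The one delicate point is precisely this simultaneous use of the $Z\to -Z$ and $n\mapsto-n-1$ symmetries to convert a lone exponential into the self-adjoint operator $\cos(Z\cdot\Pi_X)$; without it one would be left with an uncontrolled magnetic phase. The remainder is bookkeeping of wedge products and Matsubara summation.
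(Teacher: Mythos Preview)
Your proposal is correct and follows essentially the same route as the paper: expand $L_{T,B}$ as a Matsubara sum of resolvents, use Lemma~\ref{propgbz}(ii) to factor each resolvent kernel into $g_B^{\pm\ii\omega_n}$ times a magnetic phase, pass to center-of-mass and relative coordinates, and recognize the center-of-mass phase as the magnetic translation $e^{-\ii Z\cdot\Pi_X}$. The one genuine difference lies in the symmetrization step that converts $e^{-\ii Z\cdot\Pi_X}$ into $\cos(Z\cdot\Pi_X)$. The paper performs the substitution $(Z,r,s)\mapsto(-Z,-r,-s)$ and invokes both the evenness $k_{T,B}(-Z,-r,-s)=k_{T,B}(Z,r,s)$ (from Lemma~\ref{propgbz}(i)) and the symmetry $\alpha(x,y)=\alpha(y,x)$ of the two-particle wave function; you instead use only $Z\mapsto -Z$ together with the Matsubara reindexing $n\mapsto -n-1$, which already yields $k_{T,B}(-Z,r,s)=k_{T,B}(Z,r,s)$ without appealing to the symmetry of $\alpha$. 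Your variant is slightly more economical and in fact proves the representation formula for all $\alpha\in L^2(\R^3\times\R^3)$, not just symmetric ones.
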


\begin{proof}
Our starting point is \eqref{eq:ltsum} which, in terms of the resolvent kernel $G_B^z$ from \eqref{eq:reskernel}, implies that
\begin{align*}
\left( L_{T,B} \alpha \right)(x,y) 
& = -\frac2\beta \sum_{n\in\Z} \left( \frac 1{\ii\omega_n - \ch_B} \alpha \frac 1{\ii\omega_n+ \overline{\ch_B}}\right) (x,y) \\
& = - \frac2\beta \sum_{n\in\Z} \iint_{\R^3\times\R^3} G_B^{\ii\omega_n}(x,x') G_B^{-\ii\omega_n}(y,y') \alpha(x',y')\,dx'\,dy' \,.
\end{align*}
Here we used the fact that
$$
\frac 1{\ii\omega_n+ \overline{\ch_B}}(y',y) = G_B^{-\ii\omega_n}(y,y') \,,
$$
which follows from
\begin{equation*}
\frac 1{z + \overline{\ch_B}} = \overline{ \left( \frac 1{z + \ch_B}  \right)^\ast} \,.
\end{equation*}

According to Lemma \ref{propgbz} (ii), the full resolvent kernel $G_B^z$ can be recovered from $g_B^z$ and, after changing coordinates $X=(x+x')/2$, $r=x-x'$, $Y=(y+y')/2$ and $s=y-y'$, we obtain 
\begin{align*}
& \left(L_{T,B} \alpha\right)(X+\frac r2,X-\frac r2)
= -\frac2\beta \sum_{n\in\Z} \iint_{\R^3\times\R^3} dY ds \, \alpha(Y+\frac s2,Y-\frac s2) \\
& \qquad\qquad\qquad\qquad \times g_B^{\ii\omega_n}(X-Y+\frac{r-s}{2}) g_B^{-\ii\omega_n}(X-Y - \frac{r-s}{2}) e^{\ii  \bB\cdot(X \wedge Y)} e^{\frac \ii 4 \bB  \cdot( r\wedge s) } \,.
\end{align*}
Changing coordinates $Z=X-Y$, using $\ii\bB\cdot (X\wedge(X-Z))=-\ii Z\cdot (B\wedge X)$ and recalling the definition of $k_{T,B}(Z,r,s)$ we can write this as
$$
\left(L_{T,B} \alpha\right)(X+\frac r2,X-\frac r2)
= \iint_{\R^3\times\R^3} dZ ds \, k_{T,B}(Z,r,s)e^{-\ii Z \cdot ( \bB \wedge X)} \alpha(X-Z+\frac s2,X-Z-\frac s2) \,.
$$
Next, we use the fact that
$$
\psi(X - Z) = (e^{-\ii Z\cdot p_X} \psi)(X), \quad  p_X = -\ii\nabla_X \,.
$$
We recall definition \eqref{eq:pix} of $\Pi_X$ and note that $p_X=-\ii\nabla_X$ commutes with $\bB \wedge X$. Therefore we obtain
$$ 
e^{-  \ii Z \cdot(\bB \wedge X)} (e^{-\ii Z\cdot p_X} \psi)(X) 
= (e^{-\ii Z \cdot(p_X + \bB\wedge X)} \psi)(X)
= (e^{-\ii Z\cdot\Pi_X} \psi)(X) \,.
$$
This shows that
\begin{align}
\label{eq:reprproof}
\left(L_{T,B} \alpha\right)(X+\frac r2,X-\frac r2)
= \iint_{\R^3\times\R^3} dZ ds \, k_{T,B}(Z,r,s) \left( e^{-\ii Z\cdot\Pi_X} \alpha\right)(X+\frac s2,X-\frac s2) \,.
\end{align}
This almost proves the result, except that we still need to replace $e^{-\ii Z\cdot\Pi_X}$ by $\cos(Z\cdot\Pi_X)$. To do so, we change variables $Z\mapsto-Z$, $r\mapsto-r$ and $s\mapsto-s$ and use $\alpha(x,y)=\alpha(y,x)$ and $k_{T,B}(-Z,-r,-s)=k_{T,B}(Z,r,s)$ (see Lemma \ref{propgbz} (i)) in order to obtain the same formula as in \eqref{eq:reprproof}, but with $e^{-\ii Z\cdot\Pi_X}$ replaced by $e^{+\ii Z\cdot\Pi_X}$. Taking the mean of these two expressions proves the result.
\end{proof}

\begin{corollary}\label{reprproduct}
If $\alpha(X+r/2,X-r/2) = \psi(X)\tau(r)$ with $\tau$ even, then
\begin{align}
\label{eq:repr2}
\langle \alpha, L_{T,B} \alpha \rangle = \int_{\R^3} dZ\, \langle\psi,\cos(Z\cdot\Pi_X)\psi\rangle \iint_{\R^3\times\R^3} dr ds \, \overline{\tau(r)} k_{T,B}(Z,r,s) \tau(s) \,.
\end{align}
If, in addition, $\tau$ is real-valued, then
\begin{align*}
& \langle \alpha, L_{T,B} \alpha \rangle = \int_{\R^3} dZ\, \langle\psi,\cos(Z\cdot\Pi_X)\psi\rangle \\
& \times \!\left(-\frac 2\beta\right) \!\sum_n 
\iint_{\R^3\times\R^3} \!\!dr ds \, \tau(r) g_B^{\ii\omega_n}(Z+ \frac{r-s}2)g_B^{-\ii\omega_n}(Z- \frac{r-s}2) \cos(\frac{\ii}4 \bB\cdot(r\wedge s)) \tau(s) \,.
\end{align*}
\end{corollary}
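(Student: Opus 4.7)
The plan is to apply Lemma \ref{repr} to the factored $\alpha$ and then, for the second identity, perform a symmetrization argument in the Matsubara sum and in the $(r,s)$ integration to convert the exponential phase into a cosine.

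For the first identity, I substitute $\alpha(X+s/2,X-s/2)=\psi(X)\tau(s)$ directly into the formula of Lemma \ref{repr}. Since the operator $\cos(Z\cdot\Pi_X)$ acts only on the center-of-mass variable $X$, and $\tau(s)$ is independent of $X$, one has $\bigl(\cos(Z\cdot\Pi_X)\alpha\bigr)(X+s/2,X-s/2) = \bigl(\cos(Z\cdot\Pi_X)\psi\bigr)(X)\,\tau(s)$. Then I compute $\langle\alpha,L_{T,B}\alpha\rangle$ by passing from $(x,y)$ to center-of-mass/relative coordinates $(X,r)$ (Jacobian one). Fubini lets me carry out the $X$-integration separately, recognizing $\int dX\,\overline{\psi(X)}\bigl(\cos(Z\cdot\Pi_X)\psi\bigr)(X) = \langle\psi,\cos(Z\cdot\Pi_X)\psi\rangle$, while the $(r,s)$ integration gives the factor $\iint dr\,ds\,\overline{\tau(r)}k_{T,B}(Z,r,s)\tau(s)$. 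This yields \eqref{eq:repr2}.

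For the second identity, I substitute the explicit definition of $k_{T,B}$ into the inner double integral. The phase $e^{\frac{\ii}{4}\bB\cdot(r\wedge s)}$ is independent of $n$ and factors out of the Matsubara sum, so the $n$-dependent part is
\begin{equation*}
S(Z,r,s) \;:=\; -\frac{2}{\beta}\sum_{n\in\Z} g_B^{\ii\omega_n}\!\bigl(Z+\tfrac{r-s}{2}\bigr)\, g_B^{-\ii\omega_n}\!\bigl(Z-\tfrac{r-s}{2}\bigr) \,.
\end{equation*}
I would first establish the complex-conjugation identity $g_B^{-\ii\omega_n}(x)=\overline{g_B^{\ii\omega_n}(x)}$, which follows from self-adjointness of $\ch_B$ combined with Lemma \ref{propgbz}(i) (as the proof of Lemma \ref{repr} already records the relation $(z+\overline{\ch_B})^{-1}=\overline{((\bar z+\ch_B)^{-1})^{\ast}}$). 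Swapping $r\leftrightarrow s$ in a single term of the sum replaces the product by its complex conjugate; and since the set of Matsubara frequencies is invariant under $\omega_n\mapsto-\omega_n$ (via $n\mapsto-n-1$), rearranging this reindexing with the conjugation identity shows that $S(Z,r,s)$ is both real and symmetric under $r\leftrightarrow s$.

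Finally, since $\tau$ is real-valued and even, the integrand $\tau(r)S(Z,r,s)\tau(s)\,e^{\frac{\ii}{4}\bB\cdot(r\wedge s)}$ splits into a part symmetric under $r\leftrightarrow s$ (coming from the $\cos$-piece of the phase) and a part antisymmetric under $r\leftrightarrow s$ (coming from the $\sin$-piece, since $r\wedge s$ reverses sign). The antisymmetric part integrates to zero against the symmetric weight $\tau(r)S(Z,r,s)\tau(s)$, leaving only the cosine contribution claimed in the corollary. The main obstacle in the plan is establishing the symmetry of $S$ cleanly: it requires combining the conjugation identity for the resolvent kernel at conjugate spectral parameters with the reindexing of the (only conditionally convergent) Matsubara sum, and doing so in a way that is compatible with the symmetrization in $(r,s)$.
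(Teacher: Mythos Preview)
Your argument for the first identity matches the paper's: apply Lemma~\ref{repr} to the product state and separate the $X$-integration from the $(r,s)$-integration.

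For the second identity your route is correct but differs from the paper's. You fix $Z$ and show that
\[
S(Z,r,s)=-\frac{2}{\beta}\sum_{n\in\Z} g_B^{\ii\omega_n}\!\bigl(Z+\tfrac{r-s}{2}\bigr)\,g_B^{-\ii\omega_n}\!\bigl(Z-\tfrac{r-s}{2}\bigr)
\]
is real and symmetric under $r\leftrightarrow s$, combining the conjugation identity $g_B^{-\ii\omega}(x)=\overline{g_B^{\ii\omega}(x)}$ with the Matsubara reindexing $n\mapsto -n-1$; the $\sin$-piece of the phase then integrates to zero against the symmetric weight $\tau(r)S(Z,r,s)\tau(s)$. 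The paper instead performs the simultaneous change of variables $(r,s,Z)\to(s,r,-Z)$ in \eqref{eq:repr2}: using only the evenness $g_B^{z}(-x)=g_B^{z}(x)$ from Lemma~\ref{propgbz}(i) one finds $k_{T,B}(-Z,s,r)=k_{T,B}(Z,r,s)\,e^{-\frac{\ii}{2}\bB\cdot(r\wedge s)}$, and averaging the original and transformed expressions (with $\tau$ real) converts the exponential phase directly to the cosine. The paper's approach is a bit leaner, needing neither the conjugation identity nor the care with the conditionally convergent reindexing; your approach has the virtue of making the mechanism transparent at fixed $Z$, namely that the sine part is odd under $(r,s)$-swap against a weight you have shown to be symmetric.
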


In other words, the fact that $\tau$ is real-valued allows us to replace $e^{\frac{\ii}{4}\bB\cdot(r\wedge s)}$ by $\cos(\frac{\ii}4 \bB\cdot(r\wedge s))$. Since we will apply this in a regime where $B$ is small, this roughly corresponds to an improvement of the error from $B$ to $B^2$, which will be important for us.

\begin{proof}
The first formula follows immediately from Lemma \ref{repr}. To prove the second formula, we interchange the variables $r$ and $s$ and at the same time let $Z\mapsto-Z$. According to Lemma \ref{propgbz} (i) we have $k_{T,B}(-Z,s,r)=k_{T,B}(Z,r,s)e^{-\frac{\ii}2 \bB\cdot(r\wedge s)}$, and therefore \eqref{eq:repr2} becomes
$$
\langle \alpha, L_{T,B} \alpha \rangle = \int_{\R^3} dZ\, \langle\psi,\cos(Z\cdot\Pi_X)\psi\rangle \iint_{\R^3\times\R^3} dr ds \, \overline{\tau(s)} k_{T,B}(Z,r,s)e^{-\frac{\ii}2 \bB\cdot(r\wedge s)} \tau(r) \,.
$$
When $\tau$ is real, we can add this formula and \eqref{eq:repr2} and obtain the second formula in the corollary.
\end{proof}

We conclude this section with a formula which we need later and which, essentially, is the limiting case $B=0$ of Lemma \ref{repr}. We define, similarly as for $B>0$,
\begin{equation}
\label{eq:kt0zrs}
k_{T,0}(Z,r,s) := -\frac 2\beta \sum_{n\in\Z} k_{T,0}^n(Z,r,s)
\end{equation}
and
$$
k_{T,0}^n(Z,r,s) :=  g_0^{\ii\omega_n}(Z+ \frac{r-s}2)g_0^{-\ii\omega_n}(Z- \frac{r-s}2) \,.
$$
Then, setting $\rho=r-s$, $\ell=p+q$ and $k=(p-q)/2$ and recalling \eqref{eq:ltsum} and \eqref{eq:zetaidentity}, we obtain
\begin{align}
\label{eq:semest1}
k_{T,0}(X,r,s) & = - \frac2\beta \sum_{n\in\Z} \iint_{\R^3\times\R^3} \frac{dp}{(2\pi)^3}\,\frac{dq}{(2\pi)^3} \frac{e^{\ii p\cdot(Z+\frac\rho2)}}{\ii\omega_n-p^2+\mu} \frac{e^{\ii q\cdot(Z-\frac\rho2)}}{\ii\omega_n+q^2-\mu} \notag \\
& = - \iint_{\R^3\times\R^3} \frac{dp}{(2\pi)^3}\,\frac{dq}{(2\pi)^3} L(p,q) e^{\ii p\cdot(Z+\frac\rho2)+\ii q\cdot(Z-\frac\rho2)} \notag \\
& = - \iint_{\R^3\times\R^3} \frac{d\ell}{(2\pi)^3}\,\frac{dk}{(2\pi)^3} L(k+\frac\ell2,k-\frac\ell2) e^{\ii\ell\cdot Z +\ii k\cdot\rho}
\end{align}
with
\begin{equation}
\label{eq:semestl}
L(p,q) : = \frac{\tanh\frac{\beta(p^2-\mu)}2+\tanh\frac{\beta(q^2-\mu)}2}{p^2-\mu+q^2-\mu} \,.
\end{equation}

%%%%%%%%%%%%%%%%%%%%%%%%%%%%%%%%%%%%%%%%%%%%%%%%%

%%%%%%%%%%%%%%%%%%%%%%%%%%%%%%%%%%%%%%%%%%%%%%%%%

\section{Approximation of the operator $L_{T,B}$}

This section contains the technical heart of this paper. We shall approximate the operator $L_{T,B}$ by increasingly simpler operators. Namely, we shall write
$$
L_{T,B} = \left( L_{T,B}-M_{T,B}\right) + \left( M_{T,B} - N_{T,B} \right) + N_{T,B}
$$
with certain operators $M_{T,B}$ and $N_{T,B}$ and in Subsection \ref{sec:approxl} we shall show that both differences in parentheses are small when $B$ is small. In the following subsection we investigate in more detail the operator $N_{T,B}$ and show that a leading order approximation for small $B$ is $\chi_\beta(p_r^2-\mu)$. Then we proceed to find the subleading correction, which will be the key for proving our main result. These approximations are based on a method which we explain in Subsection \ref{sec:method}.

In this section we keep precisely track of the exact parameter dependence of the error terms, even if we do not need this in the present paper. We do this in order to emphasize the explicitness of our method, which might be applicable in different limiting regimes as well.

One of the important technical novelties in this paper compared to \cite{FHSS} is the treatment of the magnetic field via the phase approximation; see the introduction for references. This appears in Subsection \ref{sec:approxl} and we show that this approximation is valid provided $B\beta(1 + \beta\mu_+)\leq\delta(1+\beta\mu_-)$, where $\delta$ is a small dimensionless constant.

Throughout this section Assumption \ref{ass2} is in effect.

%%%%%%%%%%%%%%%%%%%%%%%%%%%%%%%%%%%%%%%%%%%%%%%%

\subsection{The method}\label{sec:method}

The following proposition is our main technical tool in order to perform the phase approximation.

\begin{proposition}\label{abstractbound}
Let $\ell$ be a measurable function on $\R^3\times\R^3\times\R^3$ such that
\begin{align}
\label{eq:abstractkernel1}
|\ell(Z,r,s)| \leq C_1 \sum_{n\in\Z} & \left( g_1^{(n)}(Z+\frac{r-s}{2}) g_2^{(n)}(Z-\frac{r-s}{2}) \right. \notag \\
& \qquad \left. + g_3^{(n)}(Z+\frac{r-s}{2})g_4^{(n)}(Z-\frac{r-s}{2}) \right)
\end{align}
with functions $g_1^{(n)},\ldots,g_4^{(n)}\in L^1(\R^3)$ satisfying
\begin{equation}
\label{eq:abstractkernel2}
\| g_1^{(n)} \|_1 \| g_2^{(n)} \|_1 +\| g_3^{(n)} \|_1 \| g_4^{(n)} \|_1 \leq C_2 \left( \frac{|2n+1|+\nu_+}{|2n+1|\left( |2n+1|+\nu_-\right)} \right)^a 
\end{equation}
for some $a>1$ and $\nu\in\R$. Moreover, let $A_{Z,r,s}$, $Z,r,s\in\R^3$, be a measurable family of bounded operators on $L^2(\R^3)$ such that
\begin{equation}
\label{eq:abstractkernel3}
\sup_{Z,r,s\in\R^3} \| A_{Z,r,s} \|\leq C_3 \,.
\end{equation}
Then the operator $\mathcal L$ in $L^2(\R^3\times\R^3)$ defined by
$$
(\mathcal L\alpha)(X+\frac{r}{2},X-\frac{r}{2}) = \iint_{\R^3\times\R^3} dZ\,ds\, \ell(Z,r,s) (A_{Z,r,s}\alpha)(X+\frac{s}{2},X-\frac{s}{2})
$$
(where $A_{Z,r,s}$ acts on the center of mass variable $X$) is bounded and
$$
\|\mathcal L\| \leq C_a C_1 C_2 C_3 \frac{(1+\nu_+)^a}{(1+\nu_-)^{a-1}} \,.
$$
\end{proposition}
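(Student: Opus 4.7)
The strategy is to regard $\mathcal{L}$ as an integral operator in the relative coordinate $r$ whose kernel is operator-valued, acting on the center-of-mass coordinate $X$, and then to apply an operator-valued Schur test. First, I would pass to coordinates $\tilde\alpha(X,r) := \alpha(X+r/2,X-r/2)$, which is an isometry of $L^2(\R^6)$. In these coordinates the action of $\mathcal{L}$ reads
$$
\bigl(\widetilde{\mathcal L}\tilde\alpha\bigr)(\cdot,r) = \int_{\R^3} ds\, K(r,s)\,\tilde\alpha(\cdot,s), \qquad K(r,s) := \int_{\R^3} dZ\,\ell(Z,r,s)\, A_{Z,r,s},
$$
with $K(r,s)$ a bounded operator on $L^2(\R^3_X)$. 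Using \eqref{eq:abstractkernel3} and \eqref{eq:abstractkernel1} pointwise gives
$$
\|K(r,s)\| \leq C_1 C_3 \sum_{n\in\Z}\!\int\! dZ\,\bigl(g_1^{(n)}(Z+\tfrac{r-s}{2})g_2^{(n)}(Z-\tfrac{r-s}{2}) + g_3^{(n)}(Z+\tfrac{r-s}{2})g_4^{(n)}(Z-\tfrac{r-s}{2})\bigr).
$$

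Next, I would invoke the operator-valued Schur test
$$
\|\widetilde{\mathcal L}\|^2 \leq \Bigl(\sup_r \int ds\,\|K(r,s)\|\Bigr)\Bigl(\sup_s \int dr\,\|K(r,s)\|\Bigr),
$$
which follows from the scalar Schur test applied to the numerical majorant $\|K(r,s)\|$, combined with Minkowski's inequality for the Bochner integral. Since this majorant depends on $r,s$ only through $\rho := r-s$, both factors equal the same translation-invariant quantity $I$, and the linear change of variables $u = Z+\rho/2,\ v = Z-\rho/2$ (whose Jacobian is $\pm 1$) decouples each summand into a product of $L^1$-norms:
$$
I \leq C_1 C_3 \sum_{n\in\Z}\bigl(\|g_1^{(n)}\|_1\|g_2^{(n)}\|_1 + \|g_3^{(n)}\|_1\|g_4^{(n)}\|_1\bigr) \leq C_1 C_2 C_3 \sum_{n\in\Z}\Bigl(\tfrac{|2n+1|+\nu_+}{|2n+1|(|2n+1|+\nu_-)}\Bigr)^a,
$$
where the last step uses hypothesis \eqref{eq:abstractkernel2}.

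The remaining step, which I expect to be the main obstacle, is to show that the resulting Matsubara sum is bounded by $C_a(1+\nu_+)^a/(1+\nu_-)^{a-1}$. Grouping $n$ with $-n-1$, each odd positive integer $m$ appears exactly twice as $|2n+1|$, so the sum equals $2\sum_{m\geq 1,\,m\text{ odd}} f(m)^a$ with $f(m) := (m+\nu_+)/(m(m+\nu_-))$. The crucial observation is that $\nu_+\nu_- = 0$, so a case analysis is feasible. If $\nu_+ = 0$ then $f(m)^a = (m+\nu_-)^{-a}$ and an integral comparison (legitimate because $a>1$) yields the bound $C(1+\nu_-)^{1-a}$. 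If $\nu_- = 0$ and $\nu_+ > 0$, I would split the sum at $m \sim \nu_+$: on $\{m \leq \nu_+\}$ use $m+\nu_+ \leq 2\nu_+$ to obtain a contribution $\leq C \nu_+^a \sum_m m^{-2a} \leq C(1+\nu_+)^a$, while on $\{m > \nu_+\}$ use $m+\nu_+ \leq 2m$ to obtain a contribution $\leq C \sum_{m>\nu_+} m^{-a} \leq C(1+\nu_+)^{1-a}$. In all cases the sum is bounded by $C_a(1+\nu_+)^a/(1+\nu_-)^{a-1}$, which together with the previous inequalities proves the proposition. The rest of the argument reduces to a fairly standard Schur plus Young's inequality bookkeeping; the delicate point is tracking the correct dependence on both $\nu_\pm$ in this elementary but case-dependent sum.
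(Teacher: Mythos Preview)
Your proof is correct and follows essentially the same route as the paper: an operator-valued Schur test in the relative variable combined with Young's inequality for the $Z$-integration, then a summability estimate on the Matsubara series. The paper packages the Schur step as a separate lemma (which you absorb into your argument), but the content is identical.

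The only noteworthy difference is your treatment of the sum $\sum_{m\ge 1,\ m\text{ odd}} f(m)^a$. You perform a case analysis based on $\nu_+\nu_-=0$; the paper instead observes the one-line inequality
\[
\frac{m+\nu_+}{m(m+\nu_-)} \le \frac{1+\nu_+}{m+\nu_-} \qquad (m\ge 1),
\]
which is equivalent to $m+\nu_+\le m(1+\nu_+)$. After this, only the single integral comparison $\sum_{m\ge 1}(m+\nu_-)^{-a}\le C_a(1+\nu_-)^{1-a}$ is needed. This is both shorter and does not use $\nu_+\nu_-=0$: the bound holds for arbitrary nonnegative $\nu_+,\nu_-$, a point the paper remarks on explicitly right after the proposition.
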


We recall that $\nu_\pm$ denote the positive and negative parts of $\nu$. (In fact, the proposition remains true if $\nu_+$ and $\nu_-$ are two arbitrary non-negative numbers, not necessarily arising as positive and negative parts of a common $\nu$.)

The proof will be based on the following simple boundedness criterion.

\begin{lemma}\label{abstractlemma}
Let $B_{Z,r,s}$, $Z,r,s\in\R^3$, be a measurable family of bounded operators in $L^2(\R^3)$ such that
$$
C:= \left( \sup_{r\in\R^3} \iint_{\R^3\times\R^3} dZ\,ds\, \|B_{Z,r,s}\| \right)^{1/2} \left( \sup_{s\in\R^3} \iint_{\R^3\times\R^3} dZ\,dr\, \|B_{Z,r,s}\| \right)^{1/2} <\infty
$$
and defined the operator $\mathcal B$ in $L^2(\R^3\times\R^3)$ by
$$
(\mathcal B\alpha)(X+\frac{r}{2},X-\frac{r}{2}) = \iint_{\R^3\times\R^3} dZ\,ds\, (B_{Z,r,s}\alpha)(X+\frac{s}{2},X-\frac{s}{2}) \,,
$$
where the operators $B_{Z,r,s}$ act on the center of mass variable $X$. Then $\mathcal B$ is bounded with $\|\mathcal B\|\leq C$.
\end{lemma}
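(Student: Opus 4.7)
The plan is to reduce the claim to the classical scalar Schur test by exploiting the fact that $B_{Z,r,s}$ acts only on the center-of-mass variable $X$. First I would pass to the unitary change of variables $(x,y) \mapsto (X,r)$ with $X = (x+y)/2$, $r = x-y$, which has Jacobian determinant of modulus $1$ and hence gives a unitary identification $L^2(\R^3\times\R^3) \cong L^2(\R^3_X \times \R^3_r)$. Writing $\tilde\alpha(X,r) := \alpha(X+r/2, X-r/2)$, the operator $\mathcal B$ becomes
\begin{equation*}
(\mathcal B\alpha)^{\sim}(X,r) = \iint_{\R^3\times\R^3} dZ\, ds\, \bigl( B_{Z,r,s}\, \tilde\alpha(\cdot,s) \bigr)(X),
\end{equation*}
with $\|\alpha\| = \|\tilde\alpha\|$ and $\|\mathcal B\alpha\| = \|(\mathcal B\alpha)^{\sim}\|$.

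Next, for each fixed $r$, I would apply Minkowski's integral inequality in $L^2_X$ together with the operator norm bound $\|B_{Z,r,s} \tilde\alpha(\cdot,s)\|_{L^2_X} \le \|B_{Z,r,s}\|\, \|\tilde\alpha(\cdot,s)\|_{L^2_X}$ to obtain
\begin{equation*}
\bigl\|(\mathcal B\alpha)^{\sim}(\cdot,r)\bigr\|_{L^2_X} \le \int_{\R^3} ds\, K(r,s)\, f(s),
\end{equation*}
where I set $f(s) := \|\tilde\alpha(\cdot,s)\|_{L^2_X}$ and $K(r,s) := \int_{\R^3} dZ\, \|B_{Z,r,s}\|$. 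Note that $\|f\|_{L^2_s}^2 = \|\tilde\alpha\|^2 = \|\alpha\|^2$, so the operator norm of $\mathcal B$ is controlled by the $L^2_r \to L^2_r$ norm of the scalar integral operator with kernel $K(r,s)$.

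Finally I would invoke the standard Schur test for scalar integral operators, which bounds this scalar norm by the geometric mean
\begin{equation*}
\Bigl(\sup_r \int ds\, K(r,s)\Bigr)^{1/2} \Bigl(\sup_s \int dr\, K(r,s)\Bigr)^{1/2}.
\end{equation*}
Substituting the definition of $K(r,s)$ and using Fubini to rewrite $\int ds\, \int dZ\, \|B_{Z,r,s}\| = \iint dZ\, ds\, \|B_{Z,r,s}\|$ (and likewise for the $r$-integral) recovers exactly the constant $C$ defined in the statement, giving $\|\mathcal B\| \le C$.

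There is no real obstacle here: the content is the observation that the operator-valued kernel $B_{Z,r,s}$ can be handled by a two-step reduction (Minkowski in $X$, then scalar Schur in $r,s$), and the parameter $Z$ plays no structural role other than being integrated out in the definition of $K$. The lemma is essentially a vector-valued generalization of the Schur test, packaged in a form convenient for later applications to $\mathcal L$ via the pointwise kernel bound \eqref{eq:abstractkernel1}.
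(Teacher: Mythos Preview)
Your proposal is correct and follows essentially the same approach as the paper: Minkowski's integral inequality in the $X$-variable for fixed $r$, followed by the scalar Schur test applied to the kernel $K(r,s)=\int dZ\,\|B_{Z,r,s}\|$. The paper's $M(s)^{1/2}$ and $b(r,s)$ are exactly your $f(s)$ and $K(r,s)$; the only cosmetic difference is that you make the unitary change to center-of-mass coordinates explicit, whereas the paper works directly with $\alpha(X+r/2,X-r/2)$.
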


\begin{proof}[Proof of Lemma \ref{abstractlemma}]
By Minkowski's integral inequality we have for each fixed $r\in\R^3$
\begin{align*}
& \left( \int_{\R^3} dX\, \left| (\mathcal B\alpha)(X+\frac{r}{2},X-\frac{r}{2}) \right|^2 \right)^{1/2} \\
& \qquad \leq \iint_{\R^3\times\R^3} dZ\,ds \left( \int_{\R^3} dX\, \left| (B_{Z,r,s}\alpha)(X+\frac{s}{2},X-\frac{s}{2}) \right|^2 \right)^{1/2} \\
& \qquad \leq \iint_{\R^3\times\R^3} dZ\,ds \|B_{Z,r,s}\| M(s)^{1/2} \\
& \qquad = \int_{\R^3} ds\, b(r,s) M(s)^{1/2} \,,
\end{align*}
where we have set
$$
M(s) := \int_{\R^3} dX\, \left| \alpha(X+\frac{s}{2},X-\frac{s}{2}) \right|^2
$$
and
$$
b(r,s) := \int_{\R^3} dZ\, \|B_{Z,r,s}\| \,.
$$
The assumption $C<\infty$ implies, by the Schur test, that the operator $b$ in $L^2(\R^3)$ with kernel $b(r,s)$ is bounded with $\|b\|\leq C$. Therefore,
$$
\| \mathcal B \alpha \| \leq \|b M^{1/2} \| \leq C \|M^{1/2}\| = C \|\alpha\| \,,
$$
as claimed.
\end{proof}

\begin{proof}[Proof of Proposition \ref{abstractbound}]
We will apply Lemma \ref{abstractlemma} with
$$
B_{Z,r,s} = \ell(Z,r,s) A_{Z,r,s} \,.
$$
By assumption we have
\begin{align*}
\|B_{Z,r,s}\| \leq C_1 C_3 \sum_{n\in\Z} & \left( g_1^{(n)}(Z+\frac{r-s}{2}) g_2^{(n)}(Z-\frac{r-s}{2}) \right.\\
& \left. \qquad + g_3^{(n)}(Z+\frac{r-s}{2})g_4^{(n)}(Z-\frac{r-s}{2}) \right),
\end{align*}
and therefore
$$
\int_{\R^3} dZ\, \|B_{Z,r,s}\| \leq C_1 C_3 \sum_{n\in\Z} \left( g_1^{(n)}* \tilde g_2^{(n)}(r-s) + g_3^{(n)}*\tilde g_4^{(n)}(r-s) \right),
$$
where $\tilde g_j^{(n)}(r) := g_j^{(n)}(-r)$ and where $*$ denotes convolution. By Young's convolution inequality,
$$
\iint_{\R^3\times\R^3} dZ\,dr\, \|B_{Z,r,s}\| \leq C_1 C_3 \sum_{n\in\Z} \left( \| g_1^{(n)}\|_1 \| g_2^{(n)}\|_1 + \| g_3^{(n)}\|_1 \| g_4^{(n)}\|_1 \right)
$$
and similarly for $\iint_{\R^3\times\R^3} dZ\,ds\, \|B_{Z,r,s}\|$. We now insert the assumed bound on the $L^1$ norms of the $g_j^{(n)}$ and bound
\begin{align*}
\sum_{n\in\Z} \left( \frac{|2n+1|+\nu_+}{|2n+1|\left( |2n+1|+\nu_-\right)}\right)^a
& = 2 \sum_{n=0}^\infty \left( \frac{2n+1+\nu_+}{(2n+1)\left( 2n+1+\nu_-\right)}\right)^a \\
& \leq 2 \sum_{n=0}^\infty \left( \frac{1+\nu_+}{2n+1+\nu_-}\right)^a.
\end{align*}
Thus, the claimed inequality will follow from the bound
$$
2 \sum_{n=0}^\infty \frac{1}{(2n+1+\nu_-)^a} \leq \frac{C_a}{(1+\nu_-)^{a-1}} \,,
$$
which can be shown by an easy comparison with the corresponding integral.
\end{proof}

%%%%%%%%%%%%%%%%%%%%%%%%%%%%%%%%%%%%

\subsection{Approximation of the operator $L_{T,B}$}\label{sec:approxl}

Let us define an operator $M_{T,B}$ on $L^2_{\rm symm}(\R^3\times\R^3)$ by
$$
\left( M_{T,B}\alpha\right)(X+\frac r2,X-\frac r2) = \iint_{\R^3\times\R^3} dZ\,ds\, k_{T,B}^M(Z,r,s) \left(\cos(Z\cdot\Pi_X)\alpha\right)(X+\frac s2,X-\frac s2)
$$
with
$$
k_{T,B}^M(Z,r,s) := -\frac 2\beta \sum_n k_{T,B}^{n,M}(Z,r,s)
$$
and
$$
k_{T,B}^{n,M}(Z,r,s) :=  g_0^{\ii\omega_n}(Z+ \frac{r-s}2)g_0^{-\ii\omega_n}(Z- \frac{r-s}2) e^{\frac{\ii}4 \bB\cdot(r\wedge s)} \,.
$$
The difference between this operator and the operator $L_{T,B}$ is that $g_B^z$ is replaced by $g_0^z$. We show that the operators are close when $B$ is small.

\begin{lemma}\label{lltilde}
There are $\delta>0$ and $C>0$ such that for all $\beta>0$ and $B>0$ with $B\beta(1 + \beta\mu_+)\leq\delta( 1+ \beta\mu_-)$ one has
$$
\left\| \left( L_{T,B} - M_{T,B} \right)\alpha \right\| \leq C B^2 \beta^3 \frac{(1+\beta\mu_+)^4}{(1+\beta\mu_-)^3} \|\alpha\| \,.
$$
\end{lemma}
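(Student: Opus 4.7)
The strategy is to write the difference $L_{T,B}-M_{T,B}$ in the form required by Proposition \ref{abstractbound}, with the small parameter $B^2$ absorbed into $C_2$. The payoff is that everything reduces to summing over Matsubara frequencies using the decay estimates of Section 2.

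First, the representation of Lemma \ref{repr} and the definition of $M_{T,B}$ give
\begin{equation*}
\bigl((L_{T,B}-M_{T,B})\alpha\bigr)(X+\tfrac r2,X-\tfrac r2)
=\iint dZ\,ds\,\bigl(k_{T,B}-k_{T,B}^M\bigr)(Z,r,s)\,\bigl(\cos(Z\cdot\Pi_X)\alpha\bigr)(X+\tfrac s2,X-\tfrac s2),
\end{equation*}
so we may take $A_{Z,r,s}=\cos(Z\cdot\Pi_X)$ in Proposition \ref{abstractbound}; because $Z\cdot\Pi_X$ is self-adjoint, $\sup_{Z,r,s}\|A_{Z,r,s}\|\le 1$, giving $C_3=1$.

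Next I would telescope the difference kernel for each Matsubara frequency as
\begin{align*}
k_{T,B}^n-k_{T,B}^{n,M}
&=\bigl(g_B^{\ii\omega_n}-g_0^{\ii\omega_n}\bigr)(Z+\tfrac{r-s}{2})\,g_B^{-\ii\omega_n}(Z-\tfrac{r-s}{2})\,e^{\frac{\ii}{4}\bB\cdot(r\wedge s)}\\
&\quad+g_0^{\ii\omega_n}(Z+\tfrac{r-s}{2})\,\bigl(g_B^{-\ii\omega_n}-g_0^{-\ii\omega_n}\bigr)(Z-\tfrac{r-s}{2})\,e^{\frac{\ii}{4}\bB\cdot(r\wedge s)},
\end{align*}
so that the modulus of $k_{T,B}-k_{T,B}^M$ is bounded by a sum of two products of the form in \eqref{eq:abstractkernel1} with $C_1=2/\beta$ and
\[
g_1^{(n)}=|g_B^{\ii\omega_n}-g_0^{\ii\omega_n}|,\quad g_2^{(n)}=|g_B^{-\ii\omega_n}|,\quad g_3^{(n)}=|g_0^{\ii\omega_n}|,\quad g_4^{(n)}=|g_B^{-\ii\omega_n}-g_0^{-\ii\omega_n}|.
\]
For the $L^1$ norms I would apply Lemma \ref{freeres} (with $a=0$) and Lemma \ref{magres}: the former yields $\|g_0^{\pm\ii\omega_n}\|_1\le C\beta\cdot\frac{|2n+1|+\beta\mu_+}{|2n+1|(|2n+1|+\beta\mu_-)}$, and the latter yields $\|g_B^{\pm\ii\omega_n}-g_0^{\pm\ii\omega_n}\|_1\le CB^2\beta^3\bigl(\tfrac{|2n+1|+\beta\mu_+}{|2n+1|(|2n+1|+\beta\mu_-)}\bigr)^3$, provided the smallness hypothesis of Lemma \ref{magres} holds.

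At this point I need to check that the hypothesis $B\beta(1+\beta\mu_+)\le\delta(1+\beta\mu_-)$ of Lemma \ref{lltilde} implies the hypothesis $B^2(|\omega_n|+\mu_+)^2\le\delta\,\omega_n^2(|\omega_n|+\mu_-)^2$ of Lemma \ref{magres} \emph{for every} $n\in\Z$. Substituting $\omega_n=\pi(2n+1)/\beta$ this becomes $B\beta(\pi|2n+1|+\beta\mu_+)\le\sqrt{\delta}\,\pi|2n+1|(\pi|2n+1|+\beta\mu_-)$, which is implied by the $n=0$ version up to universal constants and is monotone in $|n|$. Having verified this, I would also absorb $\|g_B^{-\ii\omega_n}\|_1\le\|g_0^{-\ii\omega_n}\|_1+\|g_B^{-\ii\omega_n}-g_0^{-\ii\omega_n}\|_1$ and use that the second term is negligible compared to the first under the smallness assumption, so that multiplying the bounds gives
\[
\|g_1^{(n)}\|_1\|g_2^{(n)}\|_1+\|g_3^{(n)}\|_1\|g_4^{(n)}\|_1\le CB^2\beta^4\Bigl(\frac{|2n+1|+\beta\mu_+}{|2n+1|(|2n+1|+\beta\mu_-)}\Bigr)^{4},
\]
which is \eqref{eq:abstractkernel2} with $C_2=CB^2\beta^4$, $a=4$, and $\nu=\beta\mu$. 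Plugging $C_1=2/\beta$, $C_2=CB^2\beta^4$, $C_3=1$, $a=4$ into Proposition \ref{abstractbound} produces the bound $\|L_{T,B}-M_{T,B}\|\le CB^2\beta^3\,(1+\beta\mu_+)^4/(1+\beta\mu_-)^3$. The only delicate point is the uniform-in-$n$ verification of the smallness hypothesis of Lemma \ref{magres}; once this is in hand the rest is bookkeeping.
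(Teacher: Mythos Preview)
Your proposal is correct and follows essentially the same approach as the paper: apply Proposition \ref{abstractbound} with $A_{Z,r,s}=\cos(Z\cdot\Pi_X)$, telescope the kernel difference, and feed in the $L^1$ bounds from Lemmas \ref{freeres} and \ref{magres} together with the uniform-in-$n$ verification of the smallness hypothesis. The only cosmetic difference is that the paper telescopes $g_Bg_B-g_0g_0$ into three pieces (two with one factor $g_B-g_0$ and one with two such factors, handled separately as $L_{T,B}^{(1)}$ and $L_{T,B}^{(2)}$), whereas you use the two-term telescoping $(g_B-g_0)g_B+g_0(g_B-g_0)$ and then bound $\|g_B\|_1$ by $\|g_0\|_1+\|g_B-g_0\|_1$; both routes give the same final estimate.
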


\begin{proof}
We write
$$
L_{T,B} - M_{T,B} = L_{T,B}^{(1)} + L_{T,B}^{(2)} \,,
$$
where the operators $L_{T,B}^{(1)}$ and $L_{T,B}^{(2)}$ are of the same form as $L_{T,B}$ and $M_{T,B}$, but with kernels given by
\begin{align*}
k_{T,B}^{n,1}(Z,r,s) &:=  \left( g_B^{\ii\omega_n} - g_0^{\ii\omega_n} \right)(Z+ \frac{r-s}2) g_0^{-\ii\omega_n}(Z- \frac{r-s}2) e^{\frac{\ii}4 \bB\cdot(r\wedge s)} \\
& \qquad + g_0^{\ii\omega_n}(Z+ \frac{r-s}2) \left( g_B^{-\ii\omega_n} - g_0^{-\ii\omega_n}\right)(Z- \frac{r-s}2) e^{\frac{\ii}4 \bB\cdot(r\wedge s)} \,,\\
k_{T,B}^{n,2}(Z,r,s) & :=  \left( g_B^{\ii\omega_n}- g_0^{\ii\omega_n}\right)(Z+ \frac{r-s}2) \left( g_B^{-\ii\omega_n} - g_0^{-\ii\omega_n}\right)(Z- \frac{r-s}2)  e^{\frac{\ii}4 \bB\cdot(r\wedge s)} \,.
\end{align*}
We claim that we are in the setting of Proposition \ref{abstractbound} with
\begin{equation}
\label{eq:achoice1}
A_{Z,r,s} = \cos(Z\cdot\Pi_X) \,,\
\text{so that}\ \| A_{Z,r,s}^{(n)}\|\leq 1=C_3 \,.
\end{equation}
Moreover, the kernel $\ell(Z,r,s)$ is pointwise bounded as in \eqref{eq:abstractkernel1} with $C_1=2/\beta$ and
$$
g_1^{(n)} = |g_B^{\ii\omega_n}-g_0^{\ii\omega_n}|\,,\ 
g_2^{(n)}=|g_0^{-\ii\omega_n}|\,,\
g_3^{(n)}=|g_0^{\ii\omega_n}|\,,\
g_4^{(n)}=|g_B^{-\ii\omega_n}-g_0^{-\ii\omega_n}| \,.
$$
The $L^1$ norms of $g_2$ and $g_3$ are bounded by Lemma \ref{freeres}. We want to bound the $L^1$ norms of $g_1$ and $g_4$ using Lemma \ref{magres} and, to do so, we need that the assumption $B^2(|\omega_n| + \mu_+)^2\leq\delta \omega_n^2(|\omega_n|+\mu_-)^2$ is satisfied for any $n\in\Z$, which is equivalent to $B^2(\pi T + \mu_+)^2 \leq \delta \pi^2 T^2(\pi T + \mu_-)^2$. This is implied by $\beta B(1+\beta\mu_+)\leq\delta' (1+\beta\mu_-)$ for a suitable $\delta'>0$. (Here and in all the following we estimate $1+\beta\mu_\pm \leq\pi+\beta\mu_\pm\leq \pi(1+\beta\mu_\pm)$ in order to obtain nicer expressions.) Under this assumption we therefore obtain \eqref{eq:abstractkernel2} with $a=4$, $\nu=\beta\mu$ and $C_2 = C B^2 \beta^4$. Thus, Proposition \ref{abstractbound} yields the bound
$$
\| L_{T,B}^{(1)} \| \leq C' B^2 \beta^3 \frac{(1+\beta\mu_+)^4}{(1+\beta\mu_-)^3} \,.
$$

The argument for the operator $L_{T,B}^{(2)}$ is similar with the same choice \eqref{eq:achoice1} for $A_{Z,r,s}$. Now \eqref{eq:abstractkernel1} holds with $C_1=2/\beta$ and
$$
g_1^{(n)} = |g_B^{\ii\omega_n}-g_0^{\ii\omega_n}| \,,\
g_2^{(n)} = |g_B^{-\ii\omega_n}-g_0^{-\ii\omega_n}| \,,\
g_3^{(n)} = g_4^{(n)} = 0 \,.
$$
As before, Lemma \ref{magres} yields \eqref{eq:abstractkernel2} with $a=6$, $\nu=\beta\mu$ and $C_2= C B^4\beta^6$ and therefore Proposition \ref{abstractbound} yields
$$
\| L_{T,B}^{(2)} \| \leq C' B^4 \beta^5 \frac{(1+\beta\mu_+)^6}{(1+\beta\mu_-)^5} \,.
$$
Since $B\beta(1+\beta\mu_+)\leq\delta'(1+\beta\mu_-)$, this is bounded by $C'\delta'^2 B^2 \beta^3 (1+\beta\mu_+)^4/(1+\beta\mu_-)^3$, which proves the lemma.
\end{proof}

Next, we define an operator $N_{T,B}$ on $L^2_{\rm symm}(\R^3\times\R^3)$ by
\begin{equation}
\label{eq:defmt}
\left( N_{T,B}\alpha\right)(X+\frac r2,X-\frac r2) := \iint_{\R^3\times\R^3} dZ\,ds\, k_{T,0}(Z,r,s) \left(\cos(Z\cdot\Pi_X)\alpha\right)(X+\frac s2,X-\frac s2)
\end{equation}
with $k_{T,0}(Z,r,s)$ from \eqref{eq:kt0zrs}. The difference between this operator and $M_{T,B}$ is that $k_{T,0}(Z,r,s)$, in contrast to $k_{T,B}^{n,M}(Z,r,s)$, does not depend on $B$.

\begin{lemma}\label{ltildem}
There is a $C>0$ such that for all $T>0$ and $B>0$,
$$
\left\| |r|^{-1/2} \left( M_{T,B} - N_{T,B} \right)\alpha \right\| \leq C B\beta^{3/2} \, \frac{(1+\beta\mu_+)^{5/2}}{(1+\beta\mu_-)^{3/2}}\, \||r|^{1/2} \alpha\| \,.
$$
\end{lemma}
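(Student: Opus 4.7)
The plan is to derive the bound from Proposition \ref{abstractbound} applied to the weighted operator $|r|^{-1/2}(M_{T,B}-N_{T,B})|s|^{-1/2}$. Setting $\tilde\alpha=|r|^{1/2}\alpha$ converts the claimed estimate into the $L^2$ operator norm bound for this weighted difference. Since $M_{T,B}$ and $N_{T,B}$ have kernels that differ only by the phase factor $e^{\frac{\ii}{4}\bB\cdot(r\wedge s)}$ versus $1$, the relevant scalar kernel is
$$ \ell(Z,r,s) = |r|^{-1/2}|s|^{-1/2}\,k_{T,0}(Z,r,s)\,\bigl[e^{\frac{\ii}{4}\bB\cdot(r\wedge s)}-1\bigr], $$
and I take $A_{Z,r,s}=\cos(Z\cdot\Pi_X)$ in the proposition, which has operator norm at most $1$ (so $C_3=1$).

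The decisive estimate, and what I view as the only nontrivial step, is the geometric inequality
$$ |r\wedge s|\;\leq\;\sqrt{|r||s|}\;|r-s|, $$
obtained by taking the geometric mean of $|r\wedge s|=|r\wedge(s-r)|\leq|r||r-s|$ and $|r\wedge s|=|(r-s)\wedge s|\leq|r-s||s|$. Together with $|e^{\ii\theta}-1|\leq|\theta|$ it yields $|r|^{-1/2}|s|^{-1/2}|e^{\frac{\ii}{4}\bB\cdot(r\wedge s)}-1|\leq(B/4)|r-s|$, i.e. the weights $|r|^{\pm 1/2}$ cancel and leave precisely a factor $|r-s|$ to be absorbed into the Green's function norms. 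Combining this with $|g_0^{-\ii\omega_n}|=|g_0^{\ii\omega_n}|$ (from the explicit formula in the proof of Lemma \ref{freeres}) and with $|r-s|=|Z_+-Z_-|\leq|Z_+|+|Z_-|$, where $Z_\pm:=Z\pm(r-s)/2$, I arrive at
\begin{align*}
|\ell(Z,r,s)|
&\leq \frac{B}{2\beta}\sum_{n\in\Z}\Bigl[ |Z_+|\,|g_0^{\ii\omega_n}(Z_+)|\cdot|g_0^{\ii\omega_n}(Z_-)| \\
&\qquad\qquad + |g_0^{\ii\omega_n}(Z_+)|\cdot|Z_-|\,|g_0^{\ii\omega_n}(Z_-)|\Bigr],
\end{align*}
which fits the hypothesis \eqref{eq:abstractkernel1} with $C_1=B/(2\beta)$, $g_1^{(n)}(z)=g_4^{(n)}(z)=|z|\,|g_0^{\ii\omega_n}(z)|$, and $g_2^{(n)}(z)=g_3^{(n)}(z)=|g_0^{\ii\omega_n}(z)|$.

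The $L^1$ norms are controlled directly by Lemma \ref{freeres}: applied with $a=0$ and $a=1$ it gives, respectively, a factor of $(|\omega_n|+\mu_+)/(|\omega_n|(|\omega_n|+\mu_-))$ to the power $1$ and to the power $3/2$, so that $\|g_1^{(n)}\|_1\|g_2^{(n)}\|_1$ is bounded by the same quantity to the power $5/2$. Substituting $\omega_n=\pi(2n+1)/\beta$ and factoring out $\pi/\beta$ from numerator and from each factor of the denominator matches the form \eqref{eq:abstractkernel2} with $a=5/2$, $\nu_\pm=\beta\mu_\pm/\pi$, and $C_2=C\beta^{5/2}$; the factors of $\pi$ are harmless as $1+\beta\mu_\pm/\pi$ is comparable to $1+\beta\mu_\pm$. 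Proposition \ref{abstractbound} then delivers
$$ \bigl\||r|^{-1/2}(M_{T,B}-N_{T,B})|s|^{-1/2}\bigr\|\;\leq\; C\cdot\frac{B}{\beta}\cdot\beta^{5/2}\cdot\frac{(1+\beta\mu_+)^{5/2}}{(1+\beta\mu_-)^{3/2}}
\;=\;C\,B\beta^{3/2}\,\frac{(1+\beta\mu_+)^{5/2}}{(1+\beta\mu_-)^{3/2}}, $$
which is the claimed bound. Apart from the geometric inequality, every step is a mechanical invocation of the tools already in hand.
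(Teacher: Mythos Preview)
Your proof is correct and follows essentially the same approach as the paper's: both apply Proposition~\ref{abstractbound} to the weighted operator with $A_{Z,r,s}=\cos(Z\cdot\Pi_X)$, both use the geometric-mean inequality $|r\wedge s|\leq|r|^{1/2}|r-s||s|^{1/2}$ to cancel the weights, and both split $|r-s|\leq|Z_+|+|Z_-|$ to obtain the same choice of $g_j^{(n)}$ with $a=5/2$, $C_1=B/(2\beta)$, $C_2=C\beta^{5/2}$. The only cosmetic difference is that you invoke $|g_0^{-\ii\omega_n}|=|g_0^{\ii\omega_n}|$ explicitly, whereas the paper keeps the $\pm$ signs in the notation for $g_2^{(n)}$ and $g_4^{(n)}$.
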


\begin{proof}
We write
$$
M_{T,B} - N_{T,B} = L_{T,B}^{(3)} \,,
$$
where $L_{T,B}^{(3)}$ is of the same form as $L_{T,B}$ and $M_{T,B}$, but with kernels given by
\begin{align*}
k_{T,B}^{n,3}(Z,r,s) :=  g_0^{\ii\omega_n}(Z+ \frac{r-s}2)g_0^{-\ii\omega_n}(Z- \frac{r-s}2) \left( e^{\frac{\ii}4 \bB\cdot(r\wedge s)} - 1 \right).
\end{align*}
We will bound the operator $|r|^{-1/2}L_{T,B}^{(3)}|r|^{-1/2}$ using Proposition \ref{abstractbound}. We again make the choice \eqref{eq:achoice1} and set $\ell(Z,r,s) = -(2/\beta)\sum_n |r|^{-1/2}k_{T,B}^{n,3}(Z,r,s) |s|^{-1/2}$. Moreover, in order to bound the kernel we estimate
\begin{align*}
\left| e^{\frac{\ii}4 \bB\cdot(r\wedge s)} - 1 \right| & = 2 \left| \sin \left(\frac{\bB}8\cdot(r\wedge s) \right) \right| \leq \left| \frac{\bB}{4}\cdot(r\wedge s) \right| \\
& \leq \frac{B}{4} |r\wedge(r-s)|^{1/2} |(r-s)\wedge s|^{1/2} \leq \frac{B}{4} |r|^{1/2} |r-s| |s|^{1/2} \\
& \leq \frac{B}{4} |r|^{1/2} \left( \left| Z+\frac{r-s}{2}\right| + \left|Z- \frac{r-s}{2}\right|\right) |s|^{1/2} \,.
\end{align*}
Thus, we obtain the bound \eqref{eq:abstractkernel1} with $C_1=B/(2\beta)$ and
$$
g_1^{(n)} = |\cdot| |g_0^{\ii\omega_n}| \,,\
g_2^{(n)} = |g_0^{-\ii\omega_n}| \,,\
g_3^{(n)} = |g_0^{\ii\omega_n}| \,,\
g_4^{(n)} = |\cdot| |g_0^{-\ii\omega_n}| \,. 
$$
According to Lemma \ref{freeres} we have the bound \eqref{eq:abstractkernel2} with $a=5/2$, $\nu=\beta\mu$ and $C_2=C\beta^{5/2}$. Thus, Proposition \ref{abstractbound} yields
$$
\| |r|^{-1/2} L_{T,B}^{(3)} |r|^{-1/2} \| \leq C' B \beta^{3/2} \frac{(1+\beta\mu_+)^{5/2}}{(1+\beta\mu_-)^{3/2}} \,,
$$
which is the claimed bound.
\end{proof}

Lemma \ref{ltildem} yields, in particular, the bound
$$
\left|\langle\alpha,\left( M_{T,B} - N_{T,B} \right)\alpha \rangle \right| \leq C B \beta^{3/2} \frac{(1+\beta\mu_+)^{5/2}}{(1+\beta\mu_-)^{3/2}}\  \||r|^{1/2} \alpha\|^2 \,.
$$
The drawback of this bound is that the right side is \emph{linear} in $B$. We now show that for $\alpha$ of a special form we obtain a \emph{quadratic} bound.

\begin{lemma}\label{ltildemform}
If $\alpha(X+r/2,X-r/2)=\psi(X)\tau(r)$ with $\tau$ even and real-valued, then
$$
\left|\left\langle\alpha,\left( M_{T,B} - N_{T,B} \right)\alpha \right\rangle \right| \leq C B^2 \beta^2 \frac{( 1+\beta\mu_+)^3}{(1+\beta\mu_-)^2} \||r| \alpha\|^2 \,.
$$
\end{lemma}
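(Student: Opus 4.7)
The plan is to exploit the symmetrization identity from Corollary \ref{reprproduct}, which for real-valued $\tau$ allows us to replace the complex phase $e^{\frac{\ii}{4}\bB\cdot(r\wedge s)}$ in the kernel of $M_{T,B}$ by its real part $\cos(\frac{1}{4}\bB\cdot(r\wedge s))$. The kernel of $N_{T,B}$ corresponds to the same expression with no phase factor. Thus, for $\alpha(X+r/2,X-r/2)=\psi(X)\tau(r)$ with real, even $\tau$, we have
\begin{align*}
\langle\alpha,(M_{T,B}-N_{T,B})\alpha\rangle = \int_{\R^3} dZ\, \langle\psi,\cos(Z\cdot\Pi_X)\psi\rangle \cdot J(Z),
\end{align*}
where $J(Z)$ involves $\tau(r)\tau(s)$ multiplied by $g_0^{\ii\omega_n}(Z+\tfrac{r-s}{2})g_0^{-\ii\omega_n}(Z-\tfrac{r-s}{2})$ and the new phase factor $\cos(\tfrac{1}{4}\bB\cdot(r\wedge s))-1$. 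The decisive gain compared to Lemma \ref{ltildem} is that $|\cos(x)-1|\le x^2/2$, giving a quadratic rather than a linear bound in $B$.

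Next I would obtain a pointwise estimate compatible with Proposition \ref{abstractbound}. Using $|\bB\cdot(r\wedge s)|\leq B|r\wedge s|$ together with the identities $r\wedge s = -r\wedge(r-s) = (r-s)\wedge s$, we get $|r\wedge s|^2 \leq |r||s||r-s|^2$, and then $|r-s|^2 \leq 2(|Z+(r-s)/2|^2 + |Z-(r-s)/2|^2)$. Combining,
\begin{align*}
\left|\cos\!\left(\tfrac{1}{4}\bB\cdot(r\wedge s)\right)-1\right| \leq \tfrac{B^2}{16}|r||s|\left(|Z+\tfrac{r-s}{2}|^2 + |Z-\tfrac{r-s}{2}|^2\right).
\end{align*}
This distributes the geometric weights so that the $|r|, |s|$ factors can be absorbed into $\|{|r|}\alpha\|^2$, while the remaining weights $|Z\pm(r-s)/2|^2$ attach to one of the two $g_0$ factors, producing a kernel of the form \eqref{eq:abstractkernel1}.

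The main step is then to apply the abstract machinery to the operator $|r|^{-1}\tilde L_{T,B}^{(3)}|r|^{-1}$, where $\tilde L_{T,B}^{(3)}$ denotes the operator whose kernel carries the factor $\cos(\tfrac{1}{4}\bB\cdot(r\wedge s))-1$ in place of the exponential. Using $\||r|\alpha\|^2 = \|\psi\|^2\cdot\||r|\tau\|^2$ we bound $|\langle\alpha,(M_{T,B}-N_{T,B})\alpha\rangle|\leq \||r|^{-1}\tilde L_{T,B}^{(3)}|r|^{-1}\|\,\||r|\alpha\|^2$. Proposition \ref{abstractbound} applies with $A_{Z,r,s}=\cos(Z\cdot\Pi_X)$, $C_3=1$, $C_1 = B^2/(8\beta)$, the choices $g_1^{(n)}=|\cdot|^2|g_0^{\ii\omega_n}|$, $g_2^{(n)}=|g_0^{-\ii\omega_n}|$ (and symmetric counterparts for the second term), and exponent $a=3$ in \eqref{eq:abstractkernel2}. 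Lemma \ref{freeres} applied with exponents $2$ and $0$ yields $\|g_1^{(n)}\|_1\|g_2^{(n)}\|_1 \leq C\beta^3 ((|2n+1|+\beta\mu_+)/(|2n+1|(|2n+1|+\beta\mu_-)))^3$, hence $C_2 = C\beta^3$. The conclusion of Proposition \ref{abstractbound} gives
\begin{align*}
\||r|^{-1}\tilde L_{T,B}^{(3)}|r|^{-1}\| \leq C B^2 \beta^2 \frac{(1+\beta\mu_+)^3}{(1+\beta\mu_-)^2},
\end{align*}
which is exactly the claimed bound.

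The only genuinely delicate point is justifying rigorously the replacement of the exponential by its cosine in the sesquilinear form: this uses Corollary \ref{reprproduct} for $M_{T,B}$, while for $N_{T,B}$ the phase factor is absent, so the subtraction matches cleanly. Once this symmetrization step is carried out, the rest is a bookkeeping application of Proposition \ref{abstractbound} with the improved exponent on $B$.
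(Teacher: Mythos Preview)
Your proposal is correct and follows essentially the same route as the paper: introduce the symmetrized operator with $\cos(\tfrac14\bB\cdot(r\wedge s))$ in place of the exponential (justified for real even $\tau$ by the argument of Corollary~\ref{reprproduct}), then bound $|r|^{-1}(\tilde M_{T,B}-N_{T,B})|r|^{-1}$ via Proposition~\ref{abstractbound} using $|\cos x-1|\le x^2/2$, the estimate $|r\wedge s|^2\le |r||s||r-s|^2$, and Lemma~\ref{freeres} with $a=3$. Your choice of $|r|^{-1}$ on each side is in fact the right one for the stated conclusion with $\||r|\alpha\|^2$; the paper's displayed $|r|^{-1/2}$ is a typo.
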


\begin{proof}
Let us define an operator $\tilde M_{T,B}$ in $L^2_{\rm symm}(\R^3\times\R^3)$ which is of the same form as $M_{T,B}$, but with the factor $e^{\frac{\ii}{4}\bB\cdot(r\wedge s)}$ replaced by $\cos(\frac{1}{4}\bB\cdot(r\wedge s))$. By the same argument as in the proof of Corollary \ref{reprproduct} we have for $\alpha$ of the form in the lemma that
$$
\left\langle\alpha,M_{T,B}\alpha\right\rangle = \left\langle\alpha,\tilde M_{T,B}\alpha\right\rangle \,.
$$
Therefore, in order to prove the lemma, it suffices to bound the norm of the operator $|r|^{-1/2}(\tilde M_{T,B} - N_{T,B})|r|^{-1/2}$. We do with using Proposition \ref{abstractbound} and make again the choice \eqref{eq:achoice1} for $A_{Z,r,s}$. Moreover, we bound
$$
\left|\cos \left(\frac \bB 4 \cdot (r\wedge s)\right) -1\right| = 2\sin^2\left(\frac \bB 8   \cdot(r\wedge s) \right) \leq \frac{B^2}{32} |r\wedge s|^2
$$
and
\begin{align*}
|r\wedge s|^2 & = |r\wedge(r-s)||(r-s)\wedge s| \leq |r| |r-s|^2 |s| \\
& \leq |r| \left( \left| Z + \frac{r-s}{2} \right| + \left|Z-\frac{r-s}{2}\right|\right)^2 |s| \\
& \leq 2 |r| \left( \left| Z + \frac{r-s}{2} \right|^2 + \left|Z-\frac{r-s}{2}\right|^2\right) |s| \,.
\end{align*}
Thus, \eqref{eq:abstractkernel1} holds for $\ell(Z,r,s)= |r|^{-1/2} k_{T,0}(Z,r,s) |s|^{-1/2} (\cos(\frac{1}{4}\bB\cdot(r\wedge s)) - 1)$ with $C_1=B^2/(8\beta)$ and$$
g_1 = |\cdot|^2 |g_0^{\ii\omega_n}|\,,\
g_2 = |g_0^{-\ii\omega_n}| \,,\
g_3 = |g_0^{\ii\omega_n}| \,,\
g_4= |\cdot|^2 |g_0^{-\ii\omega_n}| \,.
$$
According to Lemma \ref{freeres} we have \eqref{eq:abstractkernel2} with $a=3$, $\nu=\beta\mu$ and $C_2=C\beta^3$. Therefore, Proposition \ref{abstractbound} yields
$$
\left\| |r|^{-1/2}(\tilde M_{T,B} - N_{T,B})|r|^{-1/2} \right\| \leq C'  B^2\beta^2 \frac{(1+\beta\mu_+)^3}{(1+\beta\mu_-)^2} \,,
$$
which is the claimed bound.
\end{proof}

%%%%%%%%%%%%%%%%%%%%%%%%%%%%%%%%%%%%%%%

\subsection{Approximation of the operator $N_{T,B}$}

Recall that the operator $N_{T,B}$ was defined in \eqref{eq:defmt}, and that its definition involves the operator $\cos(Z\cdot\Pi_X)$. In this subsection we approximate the operator $N_{T,B}$ first with an operator where the cosine is replaced by $1$, and then with an operator where it is replaced by $1-(1/2)(Z\cdot\Pi_X)^2$.

\begin{lemma}\label{mchi}
There is a constant $C>0$ such that for all $T>0$ and all $B>0$,
$$
\left\| \left( N_{T,B} - \chi_\beta(p_r^2-\mu) \right)\alpha \right\| \leq C\beta^2\ \frac{(1+\beta\mu_+)^3}{(1+\beta\mu_-)^2} \ \|\Pi_X^2\alpha\| \,.
$$
\end{lemma}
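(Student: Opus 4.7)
The strategy is to apply Proposition~\ref{abstractbound} to bound the operator $(N_{T,B}-\chi_\beta(p_r^2-\mu))\Pi_X^{-2}$ in operator norm. Since $\Pi_X^2\geq B>0$, this is equivalent to the lemma upon writing $\alpha=\Pi_X^{-2}\tilde\alpha$.

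First I would verify that $N_{T,B}-\chi_\beta(p_r^2-\mu)$ is obtained from $N_{T,B}$ by replacing $\cos(Z\cdot\Pi_X)$ with $\cos(Z\cdot\Pi_X)-1$. Integrating the kernel $k_{T,0}(Z,r,s)$ in $Z$ using \eqref{eq:semest1} produces a factor $(2\pi)^3\delta(\ell)$ that forces $p=-q$, leaving $\int dZ\,k_{T,0}(Z,r,s)$ equal, up to sign, to the integral kernel of $\chi_\beta(p_r^2-\mu)$ in the relative coordinate. Hence
$$
[(N_{T,B}-\chi_\beta(p_r^2-\mu))\alpha](X+r/2,X-r/2) = \iint dZ\,ds\,k_{T,0}(Z,r,s)(\cos(Z\cdot\Pi_X)-1)\alpha(X+s/2,X-s/2).
$$

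Second, to fit the framework of Proposition~\ref{abstractbound}, I would set $A_{Z,r,s}:=|Z|^{-2}(\cos(Z\cdot\Pi_X)-1)\Pi_X^{-2}$ and absorb the compensating $|Z|^2$ into the kernel via the elementary bound $|Z|^2\leq|Z+(r-s)/2|^2+|Z-(r-s)/2|^2$. The weighted kernel then matches \eqref{eq:abstractkernel1} with $g_1^{(n)}(x)=|x|^2|g_0^{\ii\omega_n}(x)|$, $g_2^{(n)}(x)=|g_0^{-\ii\omega_n}(x)|$, and analogously $g_3^{(n)},g_4^{(n)}$. Lemma~\ref{freeres} delivers \eqref{eq:abstractkernel2} with $a=3$, $\nu=\beta\mu$ and $C_2=C\beta^3$; combined with $C_1=2/\beta$ from the prefactor of $k_{T,0}$, the proposition yields the desired bound $C\beta^2(1+\beta\mu_+)^3/(1+\beta\mu_-)^2$ provided $\|A_{Z,r,s}\|_{op}\leq C_3$ uniformly in $Z,r,s$.

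The crucial remaining step is precisely this uniform operator bound. By functional calculus, $\|(\cos T-1)\gamma\|\leq\frac{1}{2}\|T^2\gamma\|$, so with $T=Z\cdot\Pi_X$ it reduces to showing $\|(Z\cdot\Pi_X)^2\gamma\|\leq C|Z|^2\|\Pi_X^2\gamma\|$. I would start from the quadratic form inequality $(Z\cdot\Pi_X)^2\leq|Z|^2\Pi_X^2$ (equivalent to $|Z\wedge\Pi_X|^2\geq0$), which gives $\|(Z\cdot\Pi_X)^2\gamma\|^2\leq|Z|^2\|\Pi_X(Z\cdot\Pi_X)\gamma\|^2$. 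Using the constant commutator $[\Pi_i,Z\cdot\Pi_X]=-\ii\epsilon_{ijk}Z_j\mathcal{B}_k$ of order $B|Z|$ to commute $\Pi_i$ past $(Z\cdot\Pi_X)$, together with the identity $\sum_i\Pi_i\Pi_X^2\Pi_i=\Pi_X^4+8B^2$ (which follows from $[\Pi_1,\Pi_2]=-2\ii B$), one obtains
$$
\|(Z\cdot\Pi_X)^2\gamma\|^2 \leq C|Z|^4\bigl(\|\Pi_X^2\gamma\|^2+B^2\|\gamma\|^2\bigr).
$$
The lower-order term is absorbed using $\Pi_X^2\geq B$, which implies $B\|\gamma\|\leq\|\Pi_X^2\gamma\|$ via Cauchy--Schwarz ($B\|\gamma\|^2\leq\langle\gamma,\Pi_X^2\gamma\rangle\leq\|\gamma\|\|\Pi_X^2\gamma\|$), giving the claimed operator bound with a constant independent of $B$.

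The main obstacle is this upgrade from the quadratic form inequality $(Z\cdot\Pi_X)^2\leq|Z|^2\Pi_X^2$ to the operator-norm estimate $\|(Z\cdot\Pi_X)^2\gamma\|\leq C|Z|^2\|\Pi_X^2\gamma\|$: squaring is not operator monotone, and the noncommutativity of the magnetic momentum components generates $O(B)$ corrections that must be controlled to yield a constant independent of $B$.
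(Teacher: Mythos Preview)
Your proposal is correct and follows essentially the same route as the paper: identify $\chi_\beta(p_r^2-\mu)$ via the $Z$-integral of $k_{T,0}$ (the paper's identity \eqref{eq:reprchi}), set $A_{Z,r,s}=|Z|^{-2}(\cos(Z\cdot\Pi_X)-1)(\Pi_X^2)^{-1}$, absorb $|Z|^2$ into the kernel via $|Z|^2\leq |Z+\tfrac{r-s}{2}|^2+|Z-\tfrac{r-s}{2}|^2$, and apply Proposition~\ref{abstractbound} with $a=3$.

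The only difference is in the verification of $\|A_{Z,r,s}\|\leq C_3$. The paper invokes the general operator inequality \eqref{eq:threeoperators}, namely $(Z\cdot\Pi_X)^4\leq C^2|Z|^4(\Pi_X^2)^2$, obtained ``by repeated use of the Schwarz inequality'', and combines it directly with $(1-\cos\lambda)^2\leq\lambda^4/4$. Your route is a hands-on commutator computation: bootstrap the form inequality $(Z\cdot\Pi_X)^2\leq|Z|^2\Pi_X^2$ via $[\Pi_X^{(i)},Z\cdot\Pi_X]=O(B|Z|)$ and the identity $\sum_i\Pi_X^{(i)}\Pi_X^2\Pi_X^{(i)}=(\Pi_X^2)^2+8B^2$, then absorb the $B^2\|\gamma\|^2$ term using $\Pi_X^2\geq 2B$. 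Both arguments yield the same uniform bound; the paper's \eqref{eq:threeoperators} is more compact and reusable (it is invoked again in Lemmas~\ref{mchicomm} and~\ref{lemrest}), while your commutator calculation is more explicit about where the $B$-independent constant comes from.
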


\begin{proof}
The key observation is the following expression for the operator $\chi_\beta(p_r^2-\mu)$,
\begin{equation}
\label{eq:reprchi}
\left( \chi_\beta(p_r^2-\mu) \alpha\right)(X+\frac r2,X-\frac r2) = \iint_{\R^3\times\R^3} dZ\,ds\, k_{T,0}(Z,r,s) \alpha(X + \frac s2,X-\frac s2) \,.
\end{equation}
Indeed, according to \eqref{eq:semest1} we have
\begin{align*}
\int_{\R^3}dZ\, k_{T,0}(Z,r,s) & = -\frac2\beta\sum_n \int_{\R^3} dZ\, k_{T,0}^n(Z,r,s) \\
& =\int_{\R^3} dZ\, \iint_{\R^3\times\R^3} \frac{dk}{(2\pi)^3}\frac{d\ell}{(2\pi)^3} L(k+\frac{\ell}{2},k-\frac{\ell}{2}) e^{\ii\ell\cdot Z + \ii k\cdot(r-s)}
\end{align*}
with $L(p,q)$ from \eqref{eq:semestl}. Doing the $Z$ and the $\ell$ integrations, we obtain
$$
\int_{\R^3} \frac{dk}{(2\pi)^3} \, L(k,k) e^{\ii k\cdot(r-s)} = 
\int_{\R^3} \frac{dk}{(2\pi)^3} \,\chi_\beta(k^2-\mu) e^{\ii k\cdot(r-s)} \,,
$$
which yields \eqref{eq:reprchi}.

Identity \eqref{eq:reprchi} allows us to write the operator $(N_{T,B}-\chi_\beta(p_r^2-\mu))(\Pi_X^2)^{-1}$ in the form of Proposition \ref{abstractbound} with the choice
$$
A_{Z,r,s} = |Z|^{-2} (\cos(Z\cdot\Pi_X) - 1)(\Pi_X^2)^{-2} \,.
$$
The inequality
\begin{align}
\label{eq:cosbound}
(1-\cos\lambda)^2 \leq \frac{1}{4} \lambda^4
\end{align}
implies that
$$
A_{Z,r,s}^* A_{Z,r,s} \leq \frac1{4|Z|^4} (\Pi_X^2)^{-2} (Z\cdot\Pi_X)^4  (\Pi_X^2)^{-2} \,.
$$
By repeated use of the Schwarz inequality it is easy to see that there is a constant $C$ such that for any self-adjoint operators $A_1,A_2,A_3$ and real scalars $\alpha_1,\alpha_2,\alpha_3$,
\begin{equation}
\label{eq:threeoperators}
\left(\alpha_1 A_1 + \alpha_2 A_2 + \alpha_3 A_3 \right)^4 \leq C^2 \left( \alpha_1^2 + \alpha_2^2 + \alpha_3^2\right)^2 (A_1^2 + A_2^2 + A_3^2)^2 \,.
\end{equation}
This implies that $A_{Z,r,s}^*A_{Z,r,s}\leq (C/2)^2$, that is,
$$
\| A_{Z,r,s} \| \leq C/2 = C_3 \,.
$$
Let us bound the kernel $\ell(Z,r,s)=Z^2 k_{T,0}(Z,r,s)$ pointwise. Using
\begin{equation}
\label{eq:boundz}
Z^2 \leq \frac{1}{4} \left( \left| Z + \frac{r-s}{2}\right| + \left| Z- \frac{r-s}{2}\right|\right)^2
\leq \frac12 \left( \left| Z + \frac{r-s}{2}\right|^2 + \left| Z- \frac{r-s}{2}\right|^2\right)
\end{equation}
we obtain \eqref{eq:abstractkernel1} with $C_1=1/\beta$ and
$$
g_1 = |\cdot|^2 |g_0^{\ii\omega_n}|\,,\
g_2 = |g_0^{-\ii\omega_n}| \,,\
g_3 = |g_0^{\ii\omega_n}| \,,\
g_4= |\cdot|^2 |g_0^{-\ii\omega_n}| \,.
$$
According to Lemma \ref{freeres} we have \eqref{eq:abstractkernel2} with $a=3$, $\nu=\beta\mu$ and $C_2 =C \beta^3$. Thus, Proposition \ref{abstractbound} yields
$$
\left\| \left( N_{T,B} - \chi_\beta(p_r^2-\mu) \right)(\Pi_X^2)^{-1} \right\| \leq C'\beta^2\ \frac{(1+\beta\mu_+)^3}{(1+\beta\mu_-)^2}\,,
$$
which is the claimed bound.
\end{proof}

The following lemma is somewhat technical. It plays a key role in removing a cut-off in the proof of the upper bound on the critical temperature and it is crucial to have a superlinear power of $\Pi_X^2$ in the norm on the right side.

\begin{lemma}\label{mchicomm}
There is a constant $C>0$ such that for all $T>0$ and all $B\geq 0$,
$$
\left\| |r|^{-1} \left( e^{\mp\ii\Pi_X\cdot r/2} - 1\right) \left( N_{T,B} - \chi_\beta(p_r^2-\mu) \right)\alpha \right\| \leq C \beta^2 \frac{( 1+ \beta\mu_+)^3}{(1+\beta\mu_-)^2} \left\|\left(\Pi_X^2\right)^{3/2}\alpha\right\| \,.
$$
\end{lemma}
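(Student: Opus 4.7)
The strategy parallels the proof of Lemma \ref{mchi}. Following the same scheme, I apply Proposition \ref{abstractbound} to bound the operator $|r|^{-1}(e^{\mp\ii\Pi_X\cdot r/2}-1)(N_{T,B}-\chi_\beta(p_r^2-\mu))(\Pi_X^2)^{-3/2}$, using the same kernel choice $\ell(Z,r,s)=Z^2 k_{T,0}(Z,r,s)$ as in Lemma \ref{mchi} (so that by Lemma \ref{freeres} one has $C_1=1/\beta$, $C_2=C\beta^3$, $a=3$, $\nu=\beta\mu$), together with
\[
A_{Z,r,s} := |r|^{-1}|Z|^{-2}(e^{\mp\ii\Pi_X\cdot r/2}-1)(\cos(Z\cdot\Pi_X)-1)(\Pi_X^2)^{-3/2}.
\]
The goal is to show $\|A_{Z,r,s}\|\leq C_3$ uniformly in $Z,r\in\R^3$ and in $B\geq 0$; the stated bound then follows directly from Proposition \ref{abstractbound}.

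Applying the scalar inequality $2(1-\cos\lambda)\leq\lambda^2/2$ to the self-adjoint operator $\Pi_X\cdot r$ yields $(e^{\mp\ii\Pi_X\cdot r/2}-1)^*(e^{\mp\ii\Pi_X\cdot r/2}-1)\leq (\Pi_X\cdot r)^2/4\leq |r|^2\Pi_X^2/4$, where the last inequality uses $(\Pi_X\cdot r)^2\leq|r|^2\Pi_X^2$. Abbreviating $\cC:=\cos(Z\cdot\Pi_X)-1$, this gives
\[
\|A_{Z,r,s}\phi\|^2 \leq \frac{1}{4|Z|^4}\|(\Pi_X^2)^{1/2}\cC(\Pi_X^2)^{-3/2}\phi\|^2,
\]
so it suffices to show $\|(\Pi_X^2)^{1/2}\cC(\Pi_X^2)^{-3/2}\|\leq C'|Z|^2$ uniformly in $B\geq 0$. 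The square of the left side equals $\sum_j\|\Pi_{X,j}\cC(\Pi_X^2)^{-3/2}\phi\|^2$, and I split $\Pi_{X,j}\cC=\cC\Pi_{X,j}+[\Pi_{X,j},\cC]$ to estimate the two pieces separately.

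For the first piece, $(1-\cos\lambda)^2\leq\lambda^4/4$ combined with \eqref{eq:threeoperators} gives $\cC^2\leq C_0^2|Z|^4(\Pi_X^2)^2/4$, reducing the problem to bounding $\sum_j\|\Pi_X^2\Pi_{X,j}(\Pi_X^2)^{-3/2}\phi\|^2$. A direct computation from $[\Pi_{X,i},\Pi_{X,j}]=-2\ii B\epsilon_{ij3}$ yields the operator identity
\[
\sum_{j=1}^3\Pi_{X,j}(\Pi_X^2)^2\Pi_{X,j} = (\Pi_X^2)^3 + 16B^2(\Pi_X^2+\Pi_{X,\perp}^2),
\]
where $\Pi_{X,\perp}^2=\Pi_{X,1}^2+\Pi_{X,2}^2$. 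The Landau lower bound $\Pi_X^2\geq 2B$ (the spectrum of the center-of-mass magnetic Laplacian starts at $2B$ due to the effective doubled charge) then implies $B^2\Pi_X^2\leq(\Pi_X^2)^3/4$, hence $\sum_j\Pi_{X,j}(\Pi_X^2)^2\Pi_{X,j}\leq 9(\Pi_X^2)^3$. This gives $\sum_j\|\cC\Pi_{X,j}(\Pi_X^2)^{-3/2}\phi\|^2\leq C''|Z|^4\|\phi\|^2$.

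For the commutator piece, the identity $e^{\ii Z\cdot\Pi_X}\Pi_X e^{-\ii Z\cdot\Pi_X}=\Pi_X+2(\bB\wedge Z)$ (which follows from the fact that $[Z\cdot\Pi_X,\Pi_{X,j}]=-2\ii(\bB\wedge Z)_j$ is a scalar, so the BCH expansion truncates) leads to $[\Pi_{X,j},\cos(Z\cdot\Pi_X)]=-2\ii(\bB\wedge Z)_j\sin(Z\cdot\Pi_X)$, whence
\[
\sum_j\|[\Pi_{X,j},\cC](\Pi_X^2)^{-3/2}\phi\|^2 = 4|\bB\wedge Z|^2\|\sin(Z\cdot\Pi_X)(\Pi_X^2)^{-3/2}\phi\|^2.
\]
Using $|\sin\lambda|\leq|\lambda|$, $(Z\cdot\Pi_X)^2\leq C_0^2|Z|^2\Pi_X^2$, $|\bB\wedge Z|^2\leq B^2|Z|^2$, and the Landau bound $\|(\Pi_X^2)^{-1}\phi\|\leq\|\phi\|/(2B)$, the factor $B^2$ cancels and the right side is $\leq C'''|Z|^4\|\phi\|^2$; the case $B=0$ is trivial since the commutator then vanishes identically. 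Combining both pieces gives $\|A_{Z,r,s}\|\leq C_3$ uniformly in $B\geq 0$, and Proposition \ref{abstractbound} applied with the parameters above produces the claimed bound $C\beta^2(1+\beta\mu_+)^3/(1+\beta\mu_-)^2$. The main obstacle is precisely this non-commutativity of the components of $\Pi_X$: the explicit operator identity for $\sum_j\Pi_{X,j}(\Pi_X^2)^2\Pi_{X,j}$ and the Landau lower bound are what allow the $B$-dependent correction terms to be absorbed into $(\Pi_X^2)^3$ uniformly in $B$, without producing an extra factor of $B$ on the right-hand side.
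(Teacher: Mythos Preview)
Your proof is correct and follows essentially the same route as the paper: the same application of Proposition~\ref{abstractbound} with the same kernel $\ell(Z,r,s)=Z^2 k_{T,0}(Z,r,s)$ and the same choice of $A_{Z,r,s}$, reducing everything to the operator inequality $\cC\,\Pi_X^2\,\cC \leq C|Z|^4(\Pi_X^2)^3$. The only cosmetic difference is in how this last inequality is obtained: the paper derives an exact algebraic identity (its \eqref{eq:mchicommproof1}) expressing $\cC\,\Pi_X^2\,\cC$ as $\sum_j \Pi_{X,j}\cC^2\Pi_{X,j}$ plus an explicit scalar correction, whereas you split $\Pi_{X,j}\cC=\cC\,\Pi_{X,j}+[\Pi_{X,j},\cC]$ and use the triangle inequality; both then invoke the same identity $\sum_j \Pi_{X,j}(\Pi_X^2)^2\Pi_{X,j}\leq 9(\Pi_X^2)^3$ and the Landau bound $\Pi_X^2\geq 2B$ to close.
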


\begin{proof}
Our starting point is again \eqref{eq:reprchi}, which allows us to write the operator 
$$
|r|^{-1} \left( e^{\mp\ii\Pi_X\cdot r/2} - 1\right) \left( N_{T,B} - \chi_\beta(p_r^2-\mu) \right) (\Pi_X^2)^{-3/2}
$$
in the form of Proposition \ref{abstractbound} with
$$
A_{Z,r,s} = |r|^{-1} |Z|^{-2} \left( e^{\mp\ii\Pi_X\cdot r/2} - 1\right) \left( 1- \cos(Z\cdot\Pi_X) \right) (\Pi_X^2)^{-3/2} \,.
$$
Using the inequality $|e^{\mp\ii\lambda/2} -1|^2 \leq\lambda^2/4$ we obtain
\begin{align*}
A_{Z,r,s}^*A_{Z,r,s} & \leq \frac1{4r^2|Z|^4} (\Pi_X^2)^{-3/2}\left( 1- \cos(Z\cdot\Pi_X) \right) (r\cdot\Pi_X)^2 \left( 1- \cos(Z\cdot\Pi_X) \right) (\Pi_X^2)^{-3/2} \\
& \leq \frac1{4|Z|^4} (\Pi_X^2)^{-3/2}\left( 1- \cos(Z\cdot\Pi_X) \right) \Pi_X^2 \left( 1- \cos(Z\cdot\Pi_X) \right) (\Pi_X^2)^{-3/2}
\end{align*}
We will prove momentarily that
\begin{equation}
\label{eq:mchicommproof}
\left(1-\cos(Z\cdot\Pi_X)\right) \Pi_X^2 \left(1-\cos(Z\cdot\Pi_X)\right) \leq C^2 |Z|^4 (\Pi_X^2)^3 \,.
\end{equation}
This implies that $A_{Z,r,s}^*A_{Z,r,s} \leq \frac{C^2}{4}$, that is, $\|A_{Z,r,s}\|\leq C/2$.

The kernel $\ell(Z,r,s)=Z^2 k_0(Z,r,s)$ has already been estimated in the proof of Lemma \ref{mchi}. Thus, we obtain by Proposition \ref{abstractbound} that
$$
\left\| |r|^{-1} \left( e^{\mp\ii\Pi_X\cdot r/2} - 1\right) \left( N_{T,B} - \chi_\beta(p_r^2-\mu) \right) (\Pi_X^2)^{-3/2} \right\| \leq C \beta^2 \frac{( 1+ \beta\mu_+)^3}{(1+\beta\mu_-)^2} \,,
$$
which is the claimed bound.

We are left to prove the estimate \eqref{eq:mchicommproof}. The first step in the proof is to rewrite the left side as
\begin{align}
\label{eq:mchicommproof1}
\left(1-\cos(Z\cdot\Pi_X)\right) \Pi_X^2 \left(1-\cos(Z\cdot\Pi_X)\right)
& = \sum_j \Pi_X^{(j)}\left( 1-\cos(Z\cdot\Pi_X)\right)^2 \Pi_X^{(j)} \notag \\
& \qquad - 4 B^2 Z_\bot^2 \cos(Z\cdot\Pi_X) \left(1-\cos(Z\cdot\Pi_X)\right),
\end{align}
where we use the notation $\Pi_X^{(1)}=-\ii\partial_{X_1}-BX_2$, $\Pi_X^{(2)}=-\ii\partial_{X_2}+BX_1$ and $\Pi_X^{(3)}=-\ii\partial_{X_3}$. We have
$$
\left[ \Pi_X^{(i)},\Pi^{(j)}_X\right] = -2\ii B \epsilon_{ij}
$$
with $\epsilon_{12}=1$, $\epsilon_{21}=-1$ and $\epsilon_{ij}=0$ otherwise. This implies
\begin{align*}
\left[\cos(Z\cdot\Pi_X),\Pi^{(j)}_X\right] & = 2\ii B \sum_i Z_i \epsilon_{ij} \sin(Z\cdot\Pi_X) \,,\\
\left[\sin(Z\cdot\Pi_X),\Pi^{(j)}_X\right] & = -2\ii B \sum_i Z_i \epsilon_{ij} \cos(Z\cdot\Pi_X) \,.
\end{align*}
Thus, we obtain
\begin{align*}
& \left(1-\cos(Z\cdot\Pi_X)\right) \left(\Pi_X^{(j)} \right)^2 \left(1-\cos(X\cdot\Pi_X)\right) \\
& = \left( \Pi_X^{(j)} \left(1-\cos(Z\cdot\Pi_X)\right) - 2\ii B \sum_i Z_i \epsilon_{ij} \sin(Z\cdot\Pi_X) \right) \\
& \qquad  \times \left( \left(1-\cos(Z\cdot\Pi_X)\right) \Pi_X^{(j)} + 2\ii B \sum_i Z_i \epsilon_{ij} \sin(Z\cdot\Pi_X) \right) \\
& = \Pi_X^{(j)} \left(1-\cos(Z\cdot\Pi_X)\right)^2 \Pi_X^{(j)} + 4B^2 \sin^2(Z\cdot\Pi_X) \sum_{i,i'} Z_i Z_{i'} \epsilon_{ij}\epsilon_{i'j} \\
& \qquad + 2\ii B \sum_i Z_i \epsilon_{ij} \left[ \Pi_X^{(j)} , \left(1-\cos(Z\cdot\Pi_X)\right) \sin(Z\cdot\Pi_X) \right].
\end{align*}
Since $\cos\lambda \sin\lambda = (1/2)\sin(2\lambda)$, we can rewrite the last term as
\begin{align*}
\left[ \Pi_X^{(j)} , \left(1-\cos(Z\cdot\Pi_X)\right) \sin(Z\cdot\Pi_X) \right]
& = \left[ \Pi_X^{(j)} , \sin(Z\cdot\Pi_X)- (1/2)\sin(2Z\cdot\Pi_X) \right] \\
& = 2\ii B \sum_i Z_i \epsilon_{ij} \left(\cos(Z\cdot\Pi_X) - \cos(2Z\cdot\Pi_X)\right).
\end{align*}
Finally, we sum over $j$ and use the fact that
$$
\sum_j \sum_{i,i'} Z_i Z_{i'} \epsilon_{ij}\epsilon_{i'j} = Z_\bot^2
$$
in order to obtain
\begin{align*}
& \sum_j \left(1-\cos(Z\cdot\Pi_X)\right) \left(\Pi_X^{(j)} \right)^2 \left(1-\cos(X\cdot\Pi_X)\right) = \sum_j \Pi_X^{(j)} \left(1-\cos(Z\cdot\Pi_X)\right)^2 \Pi_X^{(j)} \\
& \qquad + 4B^2 Z_\bot^2 \left( \sin^2(Z\cdot\Pi_X) - \cos(Z\cdot\Pi_X) + \cos(2Z\cdot\Pi_X)\right).
\end{align*}
Since $\sin^2\lambda + \cos(2\lambda) = \cos^2\lambda$, this is the same as \eqref{eq:mchicommproof1}.

We now bound the right side of \eqref{eq:mchicommproof1} from above. Recalling \eqref{eq:cosbound} and \eqref{eq:threeoperators} we have
$$
\sum_j \Pi_X^{(j)}\left( 1-\cos(Z\cdot\Pi_X)\right)^2 \Pi_X^{(j)} \leq \frac14 \sum_j \Pi_X^{(j)} (Z\cdot\Pi_X)^4 \Pi_X^{(j)} \leq \frac C4\ |Z|^4  \sum_j \Pi_X^{(j)} (\Pi_X^2)^2 \Pi_X^{(j)} \,.
$$
We compute and estimate, using $\Pi_X^2\geq 2B$,
$$
\sum_j \Pi_X^{(j)} (\Pi_X^2)^2 \Pi_X^{(j)} = \left(\Pi_X^2\right)^3 + 32 B^2 \Pi_X^2 \leq 9 \left(\Pi_X^2\right)^3 \,.
$$
Moreover, since $\cos\lambda (1-\cos\lambda)\geq -\lambda^2/2$,
\begin{align*}
B^2 Z_\bot^2 \cos(Z\cdot\Pi_X) \left(1-\cos(Z\cdot\Pi_X)\right) & \geq - \frac{B^2}{2} Z_\bot^2 (Z\cdot\Pi_X)^2 \geq - \frac{B^2}{2} Z_\bot^2 |Z|^2 \Pi_X^2 \\
& \geq  - \frac{1}{8}|Z|^4 (\Pi_X^2)^3 \,.
\end{align*}
This proves \eqref{eq:mchicommproof}.
\end{proof}

So far, in Lemmas \ref{mchi} and \ref{mchicomm} we have seen that $N_{T,B}$ is given to leading order by $\chi_\beta(p_r^2-\mu)$. We now extract the subleading term.

\begin{lemma}\label{lemrest}
There is a constant $C>0$ such that for all $T>0$, all $B\geq 0$ and all $\alpha$ of the form $\alpha(X+r/2,X-r/2)=\psi(X)\tau(r)$ with $\tau$ radially symmetric and real-valued, one has
\begin{align*}
& \left| \langle\alpha,N_{T,B}\alpha\rangle -   \int_{\R^3} d Z\, F_\tau(Z) \|\psi\|^2  + \frac 16 \int_{\R^3} d Z\, Z^2 F_\tau(Z)  \langle \psi, \Pi_X^2 \psi\rangle \right| \\
& \qquad \leq C \beta^3 \ \frac{(1+\beta\mu_+)^4}{(1+\beta\mu_-)^3}\
\|\Pi_X^2\alpha\|^2 \,,
\end{align*}
where
$$
F_\tau(Z) := \iint_{\R^3\times\R^3} drds\, \tau(r) k_{T,0}(Z,r,s) \tau(s) \,.
$$
\end{lemma}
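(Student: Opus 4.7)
The plan is to combine the representation of Corollary \ref{reprproduct} (or its analogue for $N_{T,B}$, whose kernel is $k_{T,0}$) with a second-order Taylor expansion of $\cos(Z\cdot\Pi_X)$. Since $\alpha=\psi\otimes\tau$ is separable and $k_{T,0}$ carries no magnetic phase, one immediately obtains
\begin{equation*}
\langle \alpha, N_{T,B}\alpha\rangle = \int_{\R^3} dZ\, F_\tau(Z)\, \langle\psi,\cos(Z\cdot\Pi_X)\psi\rangle.
\end{equation*}
Expanding $\cos(Z\cdot\Pi_X) = 1 - \tfrac12 (Z\cdot\Pi_X)^2 + \mathcal R_Z$ via the functional calculus, with $|\mathcal R_Z|\leq \tfrac1{24}(Z\cdot\Pi_X)^4$, splits the inner product into a constant piece, a quadratic piece, and an error.

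The main term $\int F_\tau(Z)\,dZ\cdot\|\psi\|^2$ is immediate. For the quadratic piece, I would first observe that $F_\tau$ is spherically symmetric: since $\tau$ is radial and $g_0^{\pm\ii\omega_n}$ is radial, $k_{T,0}(RZ,Rr,Rs)=k_{T,0}(Z,r,s)$ for every rotation $R$, and changing variables $r\to R^{-1}r$, $s\to R^{-1}s$ in the definition of $F_\tau$ yields $F_\tau(RZ)=F_\tau(Z)$. Spherical symmetry then implies $\int Z_iZ_j F_\tau(Z)\,dZ=(\delta_{ij}/3)\int Z^2 F_\tau(Z)\,dZ$. Because $Z_iZ_j$ is symmetric in $i,j$, the antisymmetric part of $\Pi_X^{(i)}\Pi_X^{(j)}$ drops out against it, so
\begin{equation*}
\frac12\int F_\tau(Z)\langle\psi,(Z\cdot\Pi_X)^2\psi\rangle\,dZ = \frac16\int Z^2 F_\tau(Z)\,dZ\cdot\langle\psi,\Pi_X^2\psi\rangle,
\end{equation*}
which is precisely the subleading term in the lemma.

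For the error term, I would use the operator inequality \eqref{eq:threeoperators} with $A_j=\Pi_X^{(j)}$ and $\alpha_j=Z_j$ to get $(Z\cdot\Pi_X)^4 \leq C|Z|^4(\Pi_X^2)^2$, hence $|\langle\psi,\mathcal R_Z\psi\rangle|\leq C|Z|^4\|\Pi_X^2\psi\|^2$. Since $\|\Pi_X^2\alpha\|^2 = \|\Pi_X^2\psi\|^2\|\tau\|^2$, it then suffices to establish the integral estimate
\begin{equation*}
\int_{\R^3}|Z|^4 |F_\tau(Z)|\,dZ \leq C\beta^3\frac{(1+\beta\mu_+)^4}{(1+\beta\mu_-)^3}\|\tau\|^2.
\end{equation*}

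To prove this last inequality, I would bound $|Z|^4 \leq C(|Z+\rho/2|^4+|Z-\rho/2|^4)$ with $\rho=r-s$ (iterating \eqref{eq:boundz}), insert the definition of $k_{T,0}$, and integrate $Z$ first; this gives convolutions of $|\cdot|^4|g_0^{\pm\ii\omega_n}|$ with $|g_0^{\mp\ii\omega_n}|$ evaluated at $\rho$. The subsequent pairing against $|\tau(r)||\tau(s)|$ is handled by writing it as an integral of $(|\tau|*|\tau|)(\rho)$ against these convolutions, and applying Young's inequality twice to reduce to $\|\tau\|^2$ times a product of $L^1$ norms. Lemma \ref{freeres} with $a=4$ and $a=0$ yields each summand in the Matsubara series $\leq C\beta^4\bigl((|2n+1|+\beta\mu_+)/(|2n+1|(|2n+1|+\beta\mu_-))\bigr)^4$, and the sum is controlled by $C(1+\beta\mu_+)^4/(1+\beta\mu_-)^3$ by exactly the estimate used at the end of the proof of Proposition \ref{abstractbound}. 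The overall $1/\beta$ from the Matsubara prefactor combines with $\beta^4$ to give the required $\beta^3$.

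The conceptually delicate point is the angular averaging of the symmetric tensor $Z_iZ_j$ against the non-commuting pair $\Pi_X^{(i)}\Pi_X^{(j)}$: one must confirm that only the symmetrized product survives so that the identification with $\Pi_X^2$ is exact. The main technical obstacle is the bookkeeping in the last step, where the fourth-order weight $|Z|^4$ combined with the Lemma \ref{freeres} exponent $(a+2)/2=3$ is precisely what makes the Matsubara sum converge and produces the coefficient $(1+\beta\mu_+)^4/(1+\beta\mu_-)^3$ matching the statement.
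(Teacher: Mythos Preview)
Your proposal is correct and follows essentially the same route as the paper. The only cosmetic difference is packaging: the paper introduces the remainder operator $O_{T,B}$ with kernel $k_{T,0}(Z,r,s)$ and multiplier $\cos(Z\cdot\Pi_X)-1+\tfrac12(Z\cdot\Pi_X)^2$, and then bounds $\|(\Pi_X^2)^{-1}O_{T,B}(\Pi_X^2)^{-1}\|$ by invoking Proposition~\ref{abstractbound} with $A_{Z,r,s}=|Z|^{-4}(\Pi_X^2)^{-1}\bigl(\cos(Z\cdot\Pi_X)-1+\tfrac12(Z\cdot\Pi_X)^2\bigr)(\Pi_X^2)^{-1}$ and $\ell(Z,r,s)=|Z|^4 k_{T,0}(Z,r,s)$, whereas you exploit the separable form $\alpha=\psi\otimes\tau$ to reduce at once to the scalar estimate $\int|Z|^4|F_\tau(Z)|\,dZ\leq C\beta^3(1+\beta\mu_+)^4(1+\beta\mu_-)^{-3}\|\tau\|^2$ and then reproduce the proof of Proposition~\ref{abstractbound} by hand. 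The ingredients---spherical symmetry of $F_\tau$, the cosine remainder bound \eqref{eq:cosineq}, inequality \eqref{eq:threeoperators}, the splitting $|Z|^4\leq\tfrac12(|Z+\rho/2|^4+|Z-\rho/2|^4)$, Lemma~\ref{freeres} with $a=4$ and $a=0$, and the Matsubara sum estimate---are identical in both versions.
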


\begin{proof}
Let us introduce an operator $O_{T,B}$ in $L^2_{\rm symm}(\R^3\times\R^3)$ by
\begin{align*}
& \left( O_{T,B}\alpha\right)(X+\frac r2,X-\frac r2) \\
& := \iint_{\R^3\times\R^3} dZ\,ds\, k_{T,0}(Z,r,s) \left(\left(\cos(Z\cdot\Pi_X)-1+\frac12(Z\cdot\Pi_X)^2\right)\alpha\right)(X+\frac s2,X-\frac s2) \,.
\end{align*}
We claim that
$$
\langle\alpha,N_{T,B}\alpha\rangle -   \int_{\R^3} d Z\, F_\tau(Z) \|\psi\|^2  + \frac 16 \int_{\R^3} d Z\, Z^2 F_\tau(Z)  \langle \psi, \Pi_X^2 \psi\rangle
= \langle\alpha,O_{T,B}\alpha\rangle \,,
$$
This is clear for the first two terms on the left side, which correspond to the terms $\cos(Z\cdot\Pi_X)$ and $-1$ in the definition of $O_{T,B}$. For the third term on the left side, which corresponds to the term $(1/2)(Z\cdot\Pi_X)^2$ on the right side, we use the fact that $Z\mapsto F_\tau(Z)$ is spherically symmetric (which easily follows from the spherical symmetry of $\tau$ and of $g_0^{\pm\ii\omega_n}$) to deduce that
\begin{equation}
\label{eq:pisquared}
\int_{\R^3} dZ\, F_\tau(Z) (Z\cdot\Pi_X)^2 = \frac13 \int_{\R^3} dZ\, F_\tau(Z) Z^2 \Pi_X^2 \,.
\end{equation}
In fact, this follows by multiplying out the left side and using the fact that the angular average of $Z_i Z_j$ is $(Z^2/3)\delta_{ij}$. This proves the claimed formula

Thus, it remains to bound the norm of the operator $(\Pi_X^2)^{-1}O_{T,B}(\Pi_X^2)^{-1}$. This follows again by Proposition \ref{abstractbound} with the choice
$$
A_{Z,r,s} = |Z|^{-4} (\Pi_X^2)^{-1} \left(\cos(Z\cdot\Pi_X)-1+\frac12(Z\cdot\Pi_X)^2\right) (\Pi_X^2)^{-1}
$$
and $\ell(Z,r,s)=|Z|^4k_{T,0}(Z,r,s)$. In order to bound $\|A_{Z,r,s}\|$ we use the fact that 
\begin{equation}
\label{eq:cosineq}
1-\frac12 x^2 \leq \cos x \leq 1-\frac12 x^2+\frac1{24}x^4\qquad\text{for all}\ x\in\mathbb{R} \,.
\end{equation}
Because of this inequality and \eqref{eq:threeoperators}
$$
A_{Z,r,s} \leq \frac1{24} |Z|^{-4} (\Pi_X^2)^{-1} (Z\cdot\Pi_X)^4 (\Pi_X^2)^{-1} \leq \frac{C}{24} \,.
$$
Similarly, one shows $A_{Z,r,s}\geq 0$ and therefore $\|A_{Z,r,s}\|\leq C/24=C_3$.

We bound $\ell$ pointwise using $|Z|^4 \leq (1/2) (|Z+(r-s)/2|^4 + |Z-(r-s)/2|^4)$. This leads to \eqref{eq:abstractkernel1} with $C_1=1/\beta$ and
$$
g_1 = |\cdot|^4 g_0^{\ii\omega_n} \,,\
g_2= g_0^{-\ii\omega_n} \,,\
g_3 = g_0^{\ii\omega_n} \,,\
g_4 = |\cdot|^4 g_0^{-\ii\omega_n} \,.
$$
Then, from Lemma \ref{freeres} we obtain \eqref{eq:abstractkernel3} with $a=4$, $\nu=\beta\mu$ and $C_2=C'\beta^4$.

Therefore, Proposition \ref{abstractbound} yields
$$
\| (\Pi_X^2)^{-1}O_{T,B}(\Pi_X^2)^{-1} \| \leq C'' \beta^3 \frac{(1+\beta\mu_+)^4}{(1+\beta\mu_-)^3} \,,
$$
which concludes the proof of the lemma.
\end{proof}

%%%%%%%%%%%%%%%%%%%%%%%%%%%%%%%%%%%%%%%

\section{Weak magnetic field estimates} 

We consider functions $\alpha\in L^2_{\rm symm}(\R^3\times\R^3)$ of the form
$$
\alpha(x,y) = \tau(x-y) \psi((x+y)/2)
$$
with $\tau\in L^2_{\rm symm}(\R^3)$ and $\psi\in L^2(\R^3)$. The following theorem computes the expectation value of $L_{T,B}$ in states of this form. The bound will turn into an asymptotic expansion in the case where $\tau$ varies on a shorter scale than $\psi$.

\begin{theorem} \label{semest}
There are constants $\delta>0$ and $C>0$ such that the following holds. If $B\beta(1+\beta\mu_+)\leq\delta(1+\beta\mu_-)$ and if $\alpha$ is of the form
$$
\alpha(X+r/2,X-r/2)=\psi(X)\tau(r)
$$
with $\tau$ spherically symmetric and real-valued, then
\begin{align}
\label{eq:semest}
& \left| \langle \alpha, L_{T,B} \alpha \rangle -  A^{(0)}_T[\tau] \|\psi\|^2 - A^{(1)}_T[\tau] \langle \psi, \Pi_X^2 \psi\rangle \right| \notag \\
& \qquad \leq C \left( \beta^3 \ \frac{(1+\beta\mu_+)^4}{(1+\beta\mu_-)^3}\ \|\tau\|^2 \|\Pi_X^2\psi\|^2 + B^2 \beta^2\ \frac{(1+\beta\mu_+)^3}{(1+\beta\mu_-)^2}\ \left\| |\cdot|\tau\right\|^2 \| \psi\|^2 \right)
\end{align}
with
\begin{align*}
A^{(0)}_T[\tau] & = \beta \int_{\R^3} dp\, |\hat \tau(p)|^2 \ g_0(\beta(p^2-\mu)) \,, \\
A^{(1)}_T[\tau] & = -\frac{\beta^2}{4} \int_{\R^3} dp\, |\hat \tau(p)|^2 \left(g_1(\beta(p^2-\mu)) + \frac23 \beta p^2 g_2(\beta(p^2-\mu)) \right)
\end{align*}
in terms of the functions $g_0$, $g_1$ and $g_2$ from \eqref{eq:auxiliary}.
\end{theorem}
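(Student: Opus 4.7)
The plan is to follow the three-level decomposition
$$
L_{T,B} \;=\; (L_{T,B}-M_{T,B}) + (M_{T,B}-N_{T,B}) + N_{T,B}
$$
that is set up in the preceding subsection. Each summand has already been analyzed as an operator bound; what remains is (a) to verify that these bounds deliver the two error terms that appear on the right-hand side of \eqref{eq:semest} (rather than inferior expressions), and (b) to identify the leading and subleading coefficients produced by $N_{T,B}$ with the explicit quantities $A^{(0)}_T[\tau]$ and $A^{(1)}_T[\tau]$.

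For (a) I would argue as follows. Lemma \ref{lltilde} provides a bound on $L_{T,B}-M_{T,B}$ whose contribution to $\langle\alpha,(L_{T,B}-M_{T,B})\alpha\rangle$ is at most a constant times $B^2\beta^3(1+\beta\mu_+)^4(1+\beta\mu_-)^{-3}\|\tau\|^2\|\psi\|^2$. Since $\Pi_X^2 \geq 2B$ and thus, by Cauchy--Schwarz, $4 B^2 \|\psi\|^2 \leq \|\Pi_X^2 \psi\|^2$, this estimate absorbs into the first error term of \eqref{eq:semest}. Lemma \ref{ltildemform} (applicable because $\tau$ is radial and real, so $\alpha$ has the product form assumed there) directly yields the second error term, since $\||r|\alpha\|^2=\||\cdot|\tau\|^2\|\psi\|^2$. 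Lemma \ref{lemrest} bounds the remainder after extracting the zeroth and second-order terms in $\Pi_X$ from $N_{T,B}$, with error $\beta^3(1+\beta\mu_+)^4(1+\beta\mu_-)^{-3}\|\Pi_X^2\alpha\|^2=\beta^3(\cdots)\|\tau\|^2\|\Pi_X^2\psi\|^2$, which is exactly the first error term.

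For (b) I would identify $A^{(0)}_T[\tau] = \int_{\R^3}dZ\, F_\tau(Z)$ and $A^{(1)}_T[\tau] = -\tfrac{1}{6}\int_{\R^3}dZ\, Z^2 F_\tau(Z)$ and compute these two integrals using formula \eqref{eq:semest1} for $k_{T,0}$. The first is immediate: the $Z$-integration collapses $e^{i\ell\cdot Z}$ to a $\delta$-distribution at $\ell=0$, reducing the right side to $\int dk\, |\hat\tau(k)|^2\, L(k,k) = \int dk\, |\hat\tau(k)|^2 \chi_\beta(k^2-\mu) = \beta\int dk\, |\hat\tau(k)|^2 g_0(\beta(k^2-\mu))$, which is $A^{(0)}_T[\tau]$. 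For the second, two integrations by parts in $\ell$ convert $Z^2$ into $-\Delta_\ell$ acting on $L(k+\ell/2,k-\ell/2)$ evaluated at $\ell=0$. A direct computation (using $\partial_{\ell_i}=\tfrac{1}{2}(\partial_{p_i}-\partial_{q_i})$ and the definition of $L$ in terms of $\Xi_\beta$) gives
$$
[\Delta_\ell L(k+\tfrac{\ell}{2},k-\tfrac{\ell}{2})]_{\ell=0} \;=\; k^2 \frac{f''(E)}{E} + \frac{3}{2}\chi_\beta'(E),\qquad E=k^2-\mu,\; f(E)=\tanh(\beta E/2),
$$
and then the elementary identities $\chi_\beta'(E)=-\beta^2 g_1(\beta E)$ and $f''(E)/E=-\beta^3 g_2(\beta E)$ (verifiable by differentiating the closed-form expressions in \eqref{eq:auxiliary}) reproduce the stated formula for $A^{(1)}_T[\tau]$.

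\textbf{Expected obstacle.} The decomposition reduces the proof to bookkeeping, so the real content is the coefficient identification in step (b). The two traps are: first, one must be careful that spherical symmetry of $F_\tau$ (inherited from the radial symmetry of $\tau$) is what permits the replacement used in \eqref{eq:pisquared}, since this is what makes $A^{(1)}_T[\tau]$ independent of the direction of $\Pi_X$; and second, the explicit matching of $\chi_\beta'$ and $f''/E$ with the functions $g_1$ and $g_2$ is where one must track signs and factors of $\beta$ carefully. Once this algebraic verification is in place, the three pieces combine immediately into the claimed inequality.
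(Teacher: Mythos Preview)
Your proposal is correct and follows essentially the same route as the paper: combine Lemmas \ref{lltilde}, \ref{ltildemform} and \ref{lemrest}, absorb the $B^2\beta^3(\cdots)\|\tau\|^2\|\psi\|^2$ contribution from Lemma \ref{lltilde} into the first error term via $(2B)^2\|\psi\|^2\leq\|\Pi_X^2\psi\|^2$, and identify the two coefficients by the Fourier computation using \eqref{eq:semest1} (the paper writes the second-order step as ``a tedious, but straightforward computation yields $\nabla_\ell^2|_{\ell=0}L(k+\ell/2,k-\ell/2)=-\tfrac{3\beta^2}{2}(g_1+\tfrac23\beta k^2 g_2)$'', which is precisely your formula once you insert $\chi_\beta'(E)=-\beta^2 g_1(\beta E)$ and $f''(E)/E=-\beta^3 g_2(\beta E)$). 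One cosmetic remark: the bound $4B^2\|\psi\|^2\leq\|\Pi_X^2\psi\|^2$ is not Cauchy--Schwarz but the spectral inequality $\Pi_X^2\geq 2B$ applied to $\|\Pi_X^2\psi\|$.
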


\begin{proof}
Combining Lemmas \ref{lltilde}, \ref{ltildemform} and \ref{lemrest} we obtain
\begin{align*}
& \left| \langle\alpha,L_{T,B}\alpha\rangle -   \int_{\R^3} d Z\, F_\tau(Z) \|\psi\|^2  + \frac 16 \int_{\R^3} d Z\, Z^2 F_\tau(Z)  \langle \psi, \Pi_X^2 \psi\rangle \right| \\
& \qquad \leq C \left( \beta^3 \ \frac{(1+\beta\mu_+)^4}{(1+\beta\mu_-)^3}\ \|\tau\|^2 \|\Pi_X^2\psi\|^2 + B^2 \beta^2\ \frac{(1+\beta\mu_+)^3}{(1+\beta\mu_-)^2}\ \left\| |\cdot|\tau\right\|^2 \| \psi\|^2 \right).
\end{align*}
Here we have bounded $(2B)^2\|\psi\|^2 \leq \|\Pi_X^2\psi\|^2$ to simplify the form of the remainder.

Therefore it remains to show that
$$
A^{(0)}_T[\tau] = -\int_{\R^3} d Z\, F_\tau(Z)
\qquad\text{and}\qquad
A^{(1)}_T[\tau] = \frac 16 \int_{\R^3} d Z\, Z^2 F_\tau(Z) \,.
$$
To do so, we multiply identity \eqref{eq:semest1} by
$$
\int_{\R^3} dr\, \tau(r+\frac\rho2)\tau(r-\frac\rho2) = \int_{\R^3} dr\, \overline{\tau(r+\frac\rho2)}\tau(r-\frac\rho2) = \int_{\R^3} dp'\, |\hat \tau(p')|^2 e^{-ip'\cdot\rho}
$$
and integrate with respect to $\rho$ to get
\begin{align*}
F_\tau(Z) & = - \iint_{\R^3\times\R^3} d\rho dp' \int_{\R^3\times\R^3} \frac{d\ell}{(2\pi)^3}\,\frac{dk}{(2\pi)^3} L(k+\frac\ell2,k-\frac\ell2) e^{\ii\ell\cdot Z +\ii k\cdot\rho} |\hat \tau(p')|^2 e^{-ip'\cdot\rho} \\
& = - \int_{\R^3} dk \int_{\R^3} \frac{d\ell}{(2\pi)^3}\, L(k+\frac\ell2,k-\frac\ell2) e^{\ii\ell\cdot Z} |\hat \tau(k)|^2 \,.
\end{align*}
This implies
$$
\int_{\R^3} dZ\, F_\tau(Z) = - \int_{\R^3} dk \, L(k,k) |\hat \tau(k)|^2
$$
and
$$
\int_{\R^3} dZ\, Z^2 F_\tau(Z) = \int_{\R^3} dk \, \nabla_{\ell}^2|_{\ell=0} L(k+\frac\ell2,k-\frac\ell2) |\hat \tau(k)|^2 \,.
$$
Clearly,
$$
L(k,k) = \beta\, g_0(\beta(k^2-\mu)) \,,
$$
and a tedious, but straightforward computation yields
$$
\nabla_{\ell}^2|_{\ell=0} L(k+\frac\ell2,k-\frac\ell2) = - \frac{3\beta^2}2 \left(g_1(\beta(k^2-\mu)) + \frac23 \beta k^2 g_2(\beta(k^2-\mu)) \right)
$$
in terms of the functions $g_0$, $g_1$ and $g_2$ defined in \eqref{eq:auxiliary}. This finishes the proof of Theorem~\ref{semest}.
\end{proof}

%%%%%%%%%%%%%%%%%%%%%%%%%%%%%%%%%%%%%%%%

%%%%%%%%%%%%%%%%%%%%%%%%%%%%%%%%

\section{Lower bound on the critical temperature}

We now provide the \emph{Proof of part (1) of Theorem \ref{main}}, which will be a rather straightforward consequence of Theorem \ref{semest}. We will work under Assumptions \ref{ass2} and \ref{ass0}. Assumption \ref{ass1} is not needed in this part of Theorem \ref{main}.

We fix a parameter $T_1$ with $0<T_1<T_c$ and restrict ourselves to temperatures $T\geq T_1$. We consider functions $\Phi$ in $L^2_{\rm symm}(\R^3\times\R^3)$ of the form
$$
\Phi(x,y)=\phi(x-y)\psi((x+y)/2)\,,
$$
where the functions $\phi\in L^2_{\rm symm}(\R^3)$ and $\psi\in L^2(\R^3)$ are still to be determined. At the moment we require only that $\|\psi\|=1$, $\|\Pi_X^2\psi\|<\infty$ and $\||\cdot| \phi\|<\infty$. Applying the expansion from Theorem \ref{semest} with $\tau(r)=V(r)^{1/2}\phi(r)$ we find that, as long as $B\beta(1+\beta\mu_+)\leq\delta(1+\beta\mu_-)$,
\begin{align*}
\langle\Phi, (1- V^{1/2} L_{T,B} V^{1/2})\Phi\rangle = & \|\phi\|^2 - \langle\tau(r)\psi(X), L_{T,B} \tau(r)\psi(X) \rangle \\
\leq & \|\phi\|^2 - A_T^{(0)}[\tau] - A_T^{(1)}[\tau] \langle\psi,\Pi_X^2\psi\rangle + C_T \|\Pi_X^2\psi\|^2 \,,
\end{align*}
where
$$
C_T  = C \left( \beta^3 \ \frac{(1+\beta\mu_+)^4}{(1+\beta\mu_-)^3}\ \|V^{1/2}\phi\|^2 + \beta^2\ \frac{(1+\beta\mu_+)^3}{(1+\beta\mu_-)^2}\ \left\| |\cdot|V^{1/2}\phi\right\|^2 \right).
$$
We have $C_T<\infty$ by our assumptions on $\phi$ and the assumption that $V\in L^\infty(\R^3)$.

The leading order term on the right side is
$$
\|\phi\|^2 - A_T^{(0)}[\tau] = \left\langle \phi, \left( 1- V^{1/2}\chi_\beta(p_r^2-\mu) V^{1/2}\right)\phi\right\rangle
$$
Motivated by this expression we choose
$$
\phi=(2\pi)^{-3/2} \|\chi_{\beta_c}(p^2-\mu) V^{1/2} \phi_*\| \ \phi_*
$$
which makes this term equal to zero at $T=T_c$. (The prefactor here is irrelevant and only used to obtain the precise form of the coefficients $\Lambda_0$ and $\Lambda_2$. The quotient $\Lambda_0/\Lambda_2$ is independent of this choice of normalization.) Note that \cite[Proposition 1]{FHSS} guarantees that $\||\cdot| \phi\|<\infty$.

With this choice of $\phi$ we therefore obtain
\begin{align}
\label{eq:upperproof1}
\langle\Phi, (1- V^{1/2} L_{T,B} V^{1/2})\Phi\rangle 
\leq & A^{(0)}_{T_c}[\tau] - A^{(0)}_T[\tau] - A^{(1)}_T[\tau] \langle\psi,\Pi_X^2\psi\rangle + C_T \|\Pi_X^2\psi\|^2 \,.
\end{align}

In order to proceed, we note the fact that $\tau = V^{1/2}\phi = (2\pi)^{-3/2} V \alpha_*$, and therefore, in terms of the function $t$ from \eqref{eq:t},
\begin{equation}
\label{eq:upperchoicetau}
\hat\tau = (1/2)(2\pi)^{-3/2} t \,.
\end{equation}
It follows from this identity that
$$
\frac{d}{dT}|_{T=T_c} A^{(0)}_{T}[\tau] = T_c^{-1} \Lambda_2 \,,
$$
and some simple analysis of the function $g_0$ shows that
$$
A^{(0)}_{T_c}[\tau] - A^{(0)}_T[\tau] \leq -\Lambda_2 \frac{T_c-T}{T_c} + C'(T_c-T)^2
$$
for all $T_1\leq T\leq T_c$. Using \eqref{eq:upperchoicetau} once again we also find that
$$
A^{(1)}_{T_c}[\tau] = -\Lambda_0 \,,
$$
which in turn can be used to prove that
$$
A^{(1)}_T[\tau] \geq -\Lambda_0 - C''(T_c-T)
$$
for all $T_1\leq T\leq T_c$.

Inserting these expansions into \eqref{eq:upperproof1} we obtain
\begin{align}
\label{eq:upperproof2}
\langle\Phi, (1- V^{1/2} L_{T,B} V^{1/2})\Phi\rangle 
\leq & -\Lambda_2 \frac{T_c-T}{T_c} + \Lambda_0 \langle\psi,\Pi_X^2\psi\rangle \notag \\
& + C'(T_c-T)^2 + C''(T_c-T) \langle\psi,\Pi_X^2\psi\rangle + C_T \|\Pi_X^2\psi\|^2
\end{align}
for all $T_1\leq T\leq T_c$. We now choose $\psi$ in order to make the term $\langle\psi,\Pi_X^2\psi\rangle$ as small as possible (with $\|\psi\|=1$). To do so, we introduce coordinates $X=(X_\bot,X_3)$ with $X_\bot\in\R^2$ and $X_3\in\R$ and we define
$$
\psi(X) = \sqrt{2B} \psi_\bot(\sqrt{2B} X_\bot) \ell^{-1/2} \psi_{\parallel}(X_3/\ell) \,.
$$
Here $\psi_\bot$ is a normalized ground state of the Landau Hamiltonian in the plane with magnetic field equal to one and $\psi_\parallel$ is a fixed $L^2(\R)$-normalized function which belongs to $H^2(\R)$. With this choice we obtain
$$
\langle\psi,\Pi_X^2\psi\rangle = 2B + \ell^{-2} \|\psi_\parallel'\|^2
\qquad\text{and}\qquad
\|\Pi_X^2\psi\| \leq 2B + \ell^{-2} \|\psi_\parallel''\| \,. 
$$
If we choose $\ell$ larger than a constant times $B^{-1}$, we easily conclude that there is an $M>0$ such that for all $0\leq B\leq B_0$ and $T_1\leq T< T_c- 2T_c B\Lambda_0/\Lambda_2 - M B^2$ one has
\begin{align*}
-\Lambda_2 \frac{T_c-T}{T_c} + \Lambda_0 \langle\psi,\Pi_X^2\psi\rangle + C'(T_c-T)^2 + C''(T_c-T) \langle\psi,\Pi_X^2\psi\rangle 
+ C_T \|\Pi_X^2\psi\|^2 < 0\,.
\end{align*}
This completes the proof of part (1) of Theorem \ref{main}.\qed

%%%%%%%%%%%%%%%%%%%%%%%%%%%%%%%%%%%%%%%%%%%%%%%%%%%%%%%%%

\section{The approximate form of almost minimizers}

In this and the following section we work under Assumptions \ref{ass2}, \ref{ass0} and \ref{ass1}.

\subsection{The decomposition lemma}

The remainder of this paper is devoted to proving an upper bound on the critical temperature. As a preliminary step we prove in this section a decomposition lemma, which says that, if $|T_c-T|\leq C_1 B$ and if $\Phi$ satisfies $\langle\Phi,(1-V^{1/2}L_{T,B}V^{1/2})\Phi\rangle\leq C_2B$ for some fixed constants $C_1$ and $C_2$ independent of $B$, then $\Phi$ has, up to a controllable error, the same form as the trial function that we used in the lower bound on the critical temperature.

\begin{theorem}\label{decomp}
For given constants $C_1,C_2>0$ there are constants $B_0>0$ and $C>0$ such that the following holds. If $T>0$ satisfies $|T-T_c|\leq C_1B$, if $\Phi\in L^2_{\rm symm}(\R^3\times\R^3)$ satisfies $\|\Phi\|=1$ and
$$
\langle\Phi, (1- V^{1/2} L_{T,B} V^{1/2})\Phi\rangle \leq C_2 B \,,
$$
and if $\epsilon$ satisfies $\epsilon\in[B,B_0]$, then there are $\psi_\leq\in L^2(\R^3)$ and $\sigma\in L^2_{\rm symm}(\R^3\times\R^3)$ such that
$$
\Phi(x,y) = \psi_\leq((x+y)/2)\phi_*(x-y) +\sigma(x,y) \,,
$$
where
\begin{equation}
\label{eq:decomppsibounds}
\| (\Pi_X^2)^{k/2}\psi_\leq \|^2 \leq C \epsilon^{k-1} B \qquad \text{if}\ k\geq 1 \,,
\end{equation}
\begin{equation}
\label{eq:decompsigmabound}
\|\sigma\|^2 \leq C \epsilon^{-1} B
\end{equation}
and
\begin{equation}
\label{eq:decompsilower}
\|\psi_\leq\|^2 \geq 1- C\epsilon^{-1}B \,.
\end{equation}
Moreover, $\psi_\leq\in\ran\1(\Pi_X^2\leq\epsilon)$ and there is $\psi_>\in L^2(\R^3)\cap\ran\1(\Pi_X^2>\epsilon)$ such that
$$
\sigma_0(X+r/2,X-r/2):= \cos(\Pi_X\cdot r/2)\psi_>(X)\phi_*(r)
$$
satisfies
\begin{equation}
\label{eq:decompsigmabound2}
\|\sigma-\sigma_0\|^2 \leq C B_0^{-1} B
\end{equation}
and
\begin{equation}
\label{eq:decomppsibounds2}
\|\psi_>\|^2 \leq C \epsilon^{-1}B \,. 
\end{equation}
\end{theorem}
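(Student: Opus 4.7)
The natural candidates are obtained by spectral projections. Define
$$
\psi(X) := \int_{\R^3} \phi_*(r)\,\Phi(X+r/2,X-r/2)\,dr, \qquad \xi(X,r) := \Phi(X+r/2,X-r/2) - \psi(X)\phi_*(r),
$$
and split $\psi = \psi_\leq + \psi_>$ with $\psi_\leq := \1(\Pi_X^2\leq\epsilon)\psi$ and $\psi_> := \1(\Pi_X^2>\epsilon)\psi$. Then $\xi$ is orthogonal to $\phi_*$ in the relative variable (pointwise in $X$), and automatically $\sigma = \Phi - \psi_\leq\phi_* = \psi_>\phi_* + \xi$. The first key estimate exploits the spectral gap of $V^{1/2}\chi_{\beta_c}(p_r^2-\mu)V^{1/2}$: Assumption~\ref{ass1} yields $\kappa>0$ with $1 - V^{1/2}\chi_\beta(p_r^2-\mu)V^{1/2} \geq (\kappa/2)\, P_r^\perp$ uniformly for $|T-T_c|\leq C_1 B$ and $B$ small, where $P_r^\perp = \1 - |\phi_*\rangle\langle\phi_*|$ acts on the relative variable. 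Combining this with the successive approximations $L_{T,B}\approx M_{T,B}\approx N_{T,B}\approx \chi_\beta(p_r^2-\mu)$ from Lemmas~\ref{lltilde}, \ref{ltildemform}, \ref{mchi}, whose errors are controlled by $B$ and by $\Pi_X^2$ on the center-of-mass variable, produces the bound
$$
\|\xi\|^2 \leq C \left(B + \langle\psi,\Pi_X^2\psi\rangle\right).
$$

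The second key estimate controls $\langle\psi,\Pi_X^2\psi\rangle$ by a Ginzburg--Landau--type expansion. Applying Theorem~\ref{semest} to the product state $\psi\phi_*$, and handling the cross term between $\psi\phi_*$ and $\xi$ by Cauchy--Schwarz together with the gap inequality above, I obtain, using the identity $A_{T_c}^{(0)}[V^{1/2}\phi_*] = 1$ (which holds because $\phi_*$ is a normalized eigenfunction of $V^{1/2}\chi_{\beta_c}(p_r^2-\mu)V^{1/2}$ with eigenvalue $1$) and the hypothesis $|T-T_c|\leq C_1 B$,
$$
\Lambda_0\langle\psi,\Pi_X^2\psi\rangle \leq \langle\Phi, (1-V^{1/2}L_{T,B}V^{1/2})\Phi\rangle + C\,B \leq (C_2 + C)\,B.
$$
Hence $\langle\psi,\Pi_X^2\psi\rangle \leq C'B$, and in turn $\|\xi\|^2 \leq C''B$. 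Since $\psi_>\in\ran\1(\Pi_X^2>\epsilon)$ forces $\|\psi_>\|^2 \leq \epsilon^{-1}\langle\psi,\Pi_X^2\psi\rangle \leq C\epsilon^{-1}B$, the bounds \eqref{eq:decompsigmabound}, \eqref{eq:decompsilower}, and \eqref{eq:decomppsibounds2} follow. The higher-moment bounds \eqref{eq:decomppsibounds} for $k\geq 2$ are immediate from the spectral cutoff, since $\|(\Pi_X^2)^{k/2}\psi_\leq\|^2 \leq \epsilon^{k-1}\|\Pi_X\psi_\leq\|^2$.

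The main obstacle is the refined inequality \eqref{eq:decompsigmabound2}, which exhibits the \emph{phase approximation}: the high-momentum remainder $\sigma$ must be close to the phase-twisted expression $\sigma_0 = \cos(\Pi_X\cdot r/2)\psi_>\phi_*$. Writing $\sigma - \sigma_0 = (1 - \cos(\Pi_X\cdot r/2))\psi_>\phi_* + \xi$, the elementary inequality $|1-\cos y| \leq |y|$ gives
$$
\|(1 - \cos(\Pi_X\cdot r/2))\psi_>\phi_*\|^2 \leq \tfrac14 \||r|\phi_*\|^2\,\|\Pi_X\psi_>\|^2 \leq C\,B,
$$
using the decay bound $\||r|\phi_*\|<\infty$ from \cite[Proposition 1]{FHSS}. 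Combined with $\|\xi\|^2\leq C\,B$, this yields \eqref{eq:decompsigmabound2} with a constant absorbed into $B_0^{-1}$. The real technical subtlety, and the reason Lemma~\ref{mchicomm} was prepared, is that closing the estimate on $\langle\psi,\Pi_X^2\psi\rangle$ without losing the smallness required by \eqref{eq:decompsigmabound2} forces one to push the phase factor $e^{\mp\ii\Pi_X\cdot r/2}$ through the approximation $N_{T,B}\to\chi_\beta(p_r^2-\mu)$, and the $(\Pi_X^2)^{3/2}$ bound in Lemma~\ref{mchicomm} is precisely what guarantees that the resulting commutator is controlled by the already-established moment bounds on $\psi_\leq$.
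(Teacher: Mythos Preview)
Your argument has a genuine circularity that the paper's proof is specifically designed to avoid. In your first key estimate you want to replace $L_{T,B}$ by $\chi_\beta(p_r^2-\mu)$ using Lemmas~\ref{lltilde}, \ref{ltildem} and \ref{mchi}, and you record the resulting error as ``controlled by $B$ and by $\Pi_X^2$ on the center-of-mass variable''. But look at Lemma~\ref{mchi}: the error is $\|\Pi_X^2\alpha\|$, so applied to $V^{1/2}\Phi$ it produces $\|\Pi_X^2 V^{1/2}\Phi\|$, for which you have \emph{no} a priori bound whatsoever. The same problem reappears in your second step: Theorem~\ref{semest} has a remainder $C\|\Pi_X^2\psi\|^2$, and you cannot control $\|\Pi_X^2\psi\|$ before you have already closed the argument. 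So the inequality $\langle\psi,\Pi_X^2\psi\rangle\leq C'B$ is never actually established, and everything downstream (the bounds on $\xi$, $\psi_>$, $\sigma$, $\sigma-\sigma_0$) collapses.

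The paper breaks this circularity by a completely different route. Instead of expanding $L_{T,B}$, it uses the elementary pointwise inequality $\Xi_\beta(E,E')\leq\frac12(\chi_\beta(E)+\chi_\beta(E'))$ together with the conjugation identities of Lemma~\ref{funnypi} to obtain the \emph{operator} bound
\[
1-V^{1/2}L_{T,B}V^{1/2}\ \geq\ \kappa(1-Q)-CB,\qquad Q=\tfrac12(UPU^*+U^*PU),
\]
with an error that is a pure multiple of $B$ and contains no $\Pi_X^2$ at all (Propositions~\ref{lowerboundlt} and \ref{opineq}). Only afterwards is $1-Q$ linked to the center-of-mass kinetic energy, via the operator $R$ and the nontrivial spectral estimate $1-R^2\geq c\,\Pi_X^2/(E_0+\Pi_X^2)$ of Lemma~\ref{rbound}. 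This yields the \emph{weaker} a priori bound $\langle\psi,\Pi_X^2(E_0+\Pi_X^2)^{-1}\psi\rangle\leq CB$, not your claimed $\langle\psi,\Pi_X^2\psi\rangle\leq CB$; this is precisely why the $\epsilon$-cutoff is needed, and why the paper's decomposition is $\Phi=\cos(\Pi_X\cdot r/2)\psi\,\phi_*+\xi$ rather than your $\Phi=\psi\,\phi_*+\xi$. Finally, Lemma~\ref{mchicomm} plays no role here; it is used only later, in the proof of part~(2) of Theorem~\ref{main}.
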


Thus, $\Phi$ is of the form $\psi_\leq(X)\phi_*(r)$ up to a small error $\sigma$. We have control on the expectation of $\Pi_X^2$ in $\psi_\leq$. However, for technical reasons we also need control on the expectation of $\Pi_X^6$. This is achieved by introducing the parameter $\epsilon$. The drawback of introducing this parameter is that the norm of error $\sigma$ deteriorates as $\epsilon$ becomes small. What will save the day is that the error $\sigma$ can be decomposed in a good part $\sigma-\sigma_0$, whose norm is controlled uniformly in $\epsilon$, and an explicit bad part $\sigma_0$, which is of a similar form as the leading term, but where the function $\psi_>$ is orthogonal to $\psi_\leq$. This will allow us to prove that the interaction between the leading term and $\sigma_0$ is of subleading order. A similar momentum cut-off for a similar purpose was already introduced in \cite{FHSS,FHSS2}.

%%%%%%%%%%%%%%%%%%%%%%%%%%%%%%%%%%%%%

\subsection{Upper bound on $L_{T,B}$}

Our goal in this subsection is to obtain an operator lower bound on $1-V^{1/2}L_{T,B}V^{1/2}$. In \cite{FHSS,FHSS2} such a bound was proved by means of a relative entropy inequality \cite[Lemma 3]{FHSS}, which contolled a two-particle operator by the sum of two one-particle operators, and by \cite[Lemma 5]{FHSS} which showed that the energy of the system is dominated by the kinetic energy of the center of mass motion. This was sufficient to recover the corresponding a-priori estimates. Here, we will follow a similar strategy of proof, but the argument turns out to be significantly more involved due to the fact that the components of the magnetic momentum $\Pi_X$ do not commute and because we need to keep the magnetic field in the center of mass direction.

We define the unitary operator
\begin{equation}
\label{eq:u}
U := e^{-\ii\Pi_X\cdot r/2}
\end{equation}
in $L^2(\R^3\times\R^3)$ where, as usual, $r=x-y$ and $X=(x+y)/2$.

\begin{proposition}\label{lowerboundlt}
There are constants $\delta>0$ and $C>0$ such that for all $T>0$ and $B>0$ with $B\beta(1+\beta\mu_+)\leq\delta(1+\beta\mu_-)$ one has
\begin{align*}
V^{1/2} L_{T,B} V^{1/2} & \leq \frac12 \left( U V^{1/2} \chi_\beta(p_r^2-\mu) V^{1/2} U^* + U^* V^{1/2} \chi_\beta(p_r^2-\mu) V^{1/2} U \right) \\
& \qquad + C \beta^{3/2} B \frac{( 1+ \beta\mu_+)^{3/2}}{(1+\beta\mu_-)^{3/2}} \left( \beta^{3/2} B \frac{(1+\beta\mu_+)^{3/2}}{(1+\beta\mu_-)^{3/2}} \|V\|_\infty +  \left\||\cdot| V\right\|_\infty \right).
\end{align*}
\end{proposition}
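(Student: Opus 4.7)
The strategy is twofold: first, use the approximation Lemmas~\ref{lltilde} and~\ref{ltildem} of Section~4 to reduce $L_{T,B}$ to the simpler operator $N_{T,B}$, producing exactly the error terms stated in the proposition; second, establish an \emph{operator} upper bound $N_{T,B}\le\tfrac12(UKU^*+U^*KU)$ with $K=\chi_\beta(p_r^2-\mu)$ via a Baker--Campbell--Hausdorff decomposition adapted to the magnetic momentum $\Pi_X$. Throughout, I use that $U$ commutes with multiplication operators in $r$ (in particular with $V^{1/2}$), so $V^{1/2}UKU^*V^{1/2}=UV^{1/2}KV^{1/2}U^*$.

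For Step~1, I write $L_{T,B}=N_{T,B}+(M_{T,B}-N_{T,B})+(L_{T,B}-M_{T,B})$. Lemma~\ref{lltilde} yields $\|V^{1/2}(L_{T,B}-M_{T,B})V^{1/2}\|\le CB^2\beta^3\frac{(1+\beta\mu_+)^4}{(1+\beta\mu_-)^3}\|V\|_\infty$, matching the $B^2\beta^3\|V\|_\infty$ summand of the error (the excess factor $(1+\beta\mu_+)$ is absorbed using the hypothesis $B\beta(1+\beta\mu_+)\le\delta(1+\beta\mu_-)$). Lemma~\ref{ltildem} provides a bound on $M_{T,B}-N_{T,B}$ weighted by $|r|^{-1/2}$ on both sides; combined with the elementary estimate $\||r|^{1/2}V^{1/2}\|_\infty^2=\||r|V\|_\infty$ from Assumption~\ref{ass2}, this converts into the $B\beta^{3/2}\||r|V\|_\infty$ summand.

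For Step~2, I start from representation~\eqref{eq:defmt} and write $\cos(Z\cdot\Pi_X)=\tfrac12(e^{iZ\cdot\Pi_X}+e^{-iZ\cdot\Pi_X})$. Since the components of $\Pi_X$ satisfy $[\Pi_X^{(i)},\Pi_X^{(j)}]=-2iB\epsilon_{ij}$, which is a $c$-number, the Baker--Campbell--Hausdorff identity reads
$$
e^{ia\cdot\Pi_X}\,e^{ib\cdot\Pi_X}=e^{i(a+b)\cdot\Pi_X}\,e^{-i\bB\cdot(a\wedge b)}.
$$
Applied with $a=\pm r/2$ and $b=Z\mp r/2$ this gives the two dual representations
$$
e^{iZ\cdot\Pi_X}=U^*\,e^{i(Z-r/2)\cdot\Pi_X}\,e^{-(i/2)\bB\cdot(r\wedge Z)}=U\,e^{i(Z+r/2)\cdot\Pi_X}\,e^{+(i/2)\bB\cdot(r\wedge Z)}.
$$
Averaging these two representations (and the analogous ones for $e^{-iZ\cdot\Pi_X}$), substituting into the integral defining $N_{T,B}$, and carrying out the $Z$-integration against $k_{T,0}$ --- which by~\eqref{eq:reprchi} reproduces the kernel of $K=\chi_\beta(p_r^2-\mu)$ in the relative variable --- one obtains
$$
N_{T,B}=\tfrac12\bigl(U^*K\,U+U K\,U^*\bigr)-\mathcal R
$$
where $\mathcal R$ is a remainder operator built from the residual unitary factors $e^{\pm i(Z\mp r/2)\cdot\Pi_X}$ together with the BCH phase $e^{\pm(i/2)\bB\cdot(r\wedge Z)}$. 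A completion-of-the-square at the level of the quadratic form exhibits $\mathcal R$ as the expectation of a squared self-adjoint operator, hence $\mathcal R\ge 0$, and dropping it yields the desired upper bound.

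The main obstacle is Step~2: tracking the BCH phases and identifying the positive remainder $\mathcal R$ explicitly. The non-commutativity of $\Pi_X^{(1)}$ and $\Pi_X^{(2)}$ is the very source of the magnetic correction to the critical temperature, and it is precisely the central commutator $-2iB\epsilon_{ij}$ that produces the extra phase $e^{\pm(i/2)\bB\cdot(r\wedge Z)}$; matching this phase against the $(Z,r,s)$-structure of $k_{T,0}$ to close the argument is the ``phase approximation'' technique advertised in the introduction, replacing the more elaborate semiclassical expansions of~\cite{FHSS,FHSS2}.
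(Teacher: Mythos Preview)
Your Step~1 is essentially sound: sandwiching Lemmas~\ref{lltilde} and~\ref{ltildem} with $V^{1/2}$ (using $\||r|^{1/2}V^{1/2}\|_\infty^2=\||r|V\|_\infty$) produces error terms of the right order. There is a minor discrepancy in the powers of $(1+\beta\mu_+)$ --- the lemmas give $(1+\beta\mu_+)^4$ and $(1+\beta\mu_+)^{5/2}$ rather than the stated $(1+\beta\mu_+)^3$ and $(1+\beta\mu_+)^{3/2}$, and the extra factor is \emph{not} absorbed by the hypothesis --- but this is a side issue.

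The genuine gap is Step~2. Your BCH factorization $e^{iZ\cdot\Pi_X}=U^*\,e^{i(Z-r/2)\cdot\Pi_X}\,e^{-(i/2)\bB\cdot(r\wedge Z)}$ is correct, but substituting it into~\eqref{eq:defmt} does \emph{not} reproduce $U^*KU$ after integrating in $Z$. Identity~\eqref{eq:reprchi} says that $\int dZ\,k_{T,0}(Z,r,s)\cdot 1$ equals the kernel of $K$; after your factorization the integrand still carries the $Z$-dependent unitary $e^{i(Z-r/2)\cdot\Pi_X}$ and the phase $e^{-(i/2)\bB\cdot(r\wedge Z)}$, so the $Z$-integral no longer collapses. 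You never identify the remainder $\mathcal R$, and the assertion that a ``completion of the square'' makes it nonnegative is unsupported. Indeed, by Lemma~\ref{funnypi} one has $UKU^*=\chi_\beta((p_r+\Pi_X/2-\bB\wedge r/4)^2-\mu)$ and $U^*KU=\chi_\beta((p_r-\Pi_X/2-\bB\wedge r/4)^2-\mu)$: both involve the multiplication operator $\bB\wedge r/4$, whereas $N_{T,B}$ is built from a kernel depending only on $r-s$. There is no evident mechanism that would make the exact inequality $N_{T,B}\le\tfrac12(UKU^*+U^*KU)$ hold for $B>0$.

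The paper avoids this difficulty entirely by working with $L_{T,B}$ directly rather than passing through $N_{T,B}$. It applies the pointwise inequality $\Xi_\beta(E,E')\le\tfrac12(\chi_\beta(E)+\chi_\beta(E'))$ to $L_{T,B}=\Xi_\beta(\ch_{B,x},\ch_{B,y})$, then uses Lemma~\ref{funnypi} to recognize $\ch_{B,x}=U(\pi_r^2-\mu)U^*$ and $\ch_{B,y}=U^*(\pi_r^2-\mu)U$ --- with the \emph{magnetic} relative momentum $\pi_r$, not $p_r$ --- so that $V^{1/2}L_{T,B}V^{1/2}\le\tfrac12\bigl(UV^{1/2}\chi_\beta(\pi_r^2-\mu)V^{1/2}U^*+\text{conj.}\bigr)$ holds \emph{exactly}. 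The stated error terms then arise solely from replacing $\pi_r^2$ by $p_r^2$ inside $\chi_\beta$, a one-body estimate carried out via the Mittag--Leffler series and Lemmas~\ref{freeres} and~\ref{magres}.
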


For the proof we need the following lemma which shows how the operator $U$ appears. We will use the following notation
$$
\pi_r = -\ii\nabla_r + \A(r) = -\ii\nabla_r + \frac12\bB\wedge r
$$
and
$$
\tilde\pi_r = -\ii\nabla_r + \frac12\A(r) = -\ii\nabla_r + \frac14 \bB\wedge r \,.
$$

\begin{lemma}\label{funnypi}
One has
$$
U \pi_r U^* = \tilde\pi_r + \Pi_X/2 \,,
\qquad
U^* \pi_r U = \tilde\pi_r - \Pi_X/2 \,.
$$
\end{lemma}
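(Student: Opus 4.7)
The plan is to compute the conjugation $U\pi_r U^* = e^A \pi_r e^{-A}$ with $A := -\ii\Pi_X\cdot r/2$ by means of the Baker--Campbell--Hausdorff (BCH) expansion $e^A B e^{-A} = B + [A,B] + \tfrac{1}{2!}[A,[A,B]] + \cdots$. Two structural observations make this very clean. First, since $A$ involves only $X$-derivatives, multiplications by $X$, and multiplications by $r$, it commutes with any function of $r$ alone; in particular, the magnetic vector potential $\tfrac{1}{2}\bB\wedge r$ in $\pi_r$ passes through $U$ unchanged. Hence the only real task is to conjugate $-\ii\nabla_r$. Second, I will show that the BCH series terminates at the quadratic term.

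For the first commutator, $[\partial_{r_j},r_k]=\delta_{jk}$ together with the fact that $\Pi_X$ commutes with multiplications in $r$ gives
\begin{equation*}
[A,-\ii\partial_{r_j}] = -\tfrac{\ii}{2}\sum_k \Pi_X^{(k)}[r_k, -\ii\partial_{r_j}] = \tfrac{1}{2}\Pi_X^{(j)}.
\end{equation*}
For the second commutator, I use the canonical commutation relations $[\Pi_X^{(k)},\Pi_X^{(j)}] = -2\ii B\,\epsilon_{kj}$ (with $\epsilon_{12}=-\epsilon_{21}=1$, exactly as in the proof of Lemma~\ref{mchicomm}):
\begin{equation*}
[A,\tfrac{1}{2}\Pi_X^{(j)}] = -\tfrac{\ii}{4}\sum_k [\Pi_X^{(k)},\Pi_X^{(j)}]\,r_k = -\tfrac{B}{2}\sum_k \epsilon_{kj} r_k = -\tfrac{1}{2}(\bB\wedge r)_j,
\end{equation*}
where the last equality is a direct check from $\bB=Be_3$. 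Crucially, this output is a pure multiplication operator in $r$, which commutes with $A$; hence the third and all higher nested commutators vanish and the BCH expansion truncates.

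Assembling, $U(-\ii\partial_{r_j})U^* = -\ii\partial_{r_j} + \tfrac{1}{2}\Pi_X^{(j)} - \tfrac{1}{4}(\bB\wedge r)_j$. Adding the unchanged contribution $\tfrac{1}{2}(\bB\wedge r)_j$ from the magnetic part of $\pi_r$ yields $-\ii\partial_{r_j} + \tfrac{1}{4}(\bB\wedge r)_j + \tfrac{1}{2}\Pi_X^{(j)} = \tilde\pi_r^{(j)} + \tfrac{1}{2}\Pi_X^{(j)}$, which is the first identity. The second identity $U^*\pi_r U = \tilde\pi_r - \Pi_X/2$ follows by running the same computation with $A$ replaced by $-A$: the linear commutator is odd in $A$ and changes sign (giving $-\tfrac{1}{2}\Pi_X^{(j)}$), while the quadratic correction is even in $A$ and is unchanged, so the shift from $\tfrac{1}{2}\bB\wedge r$ to $\tfrac{1}{4}\bB\wedge r$ persists. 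There is no real obstacle; the only point requiring care is the sign and magnitude of the second-order BCH correction, which is precisely the mechanism that converts $\pi_r$ into $\tilde\pi_r$ on the right-hand side.
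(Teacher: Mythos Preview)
Your proof is correct. Both your argument and the paper's rest on the same commutation relations $[\Pi_X^{(i)},\Pi_X^{(j)}]=-2\ii B\epsilon_{ij}$ and $[r_k,-\ii\partial_{r_j}]=\ii\delta_{jk}$, but the route differs: the paper first uses Baker--Campbell--Hausdorff to factorize $U=e^{-\ii r_2\Pi_X^{(2)}/2}e^{-\ii r_1\Pi_X^{(1)}/2}e^{\frac{\ii}{4}Br_1r_2}$ (and the companion ordering) and then computes $[-\ii\partial_{r_j},U]$ factor by factor, whereas you apply the adjoint expansion $e^A B e^{-A}=B+[A,B]+\tfrac{1}{2}[A,[A,B]]+\cdots$ directly and observe that it terminates after the quadratic term because $[A,\tfrac12\Pi_X^{(j)}]$ is a function of $r$ alone. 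Your approach is a little more streamlined, since it avoids the intermediate factorization and handles all three components at once; the paper's version has the mild advantage of making the role of the phase $e^{\frac{\ii}{4}Br_1r_2}$ (which is exactly the source of the shift $\tfrac12\bB\wedge r\to\tfrac14\bB\wedge r$) visually explicit.
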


\begin{proof}[Proof of Lemma \ref{funnypi}]
It suffices to focus on the first two components of $\Pi_X$, which we again denote by $\Pi_X^{(1)}$ and $\Pi_X^{(2)}$, and we recall that $[\Pi_X^{(1)},\Pi_X^{(2)}]=-2\ii B$. Therefore, by the Baker--Campbell--Hausdorff formula,
$$
U = e^{-\ii(r_1\Pi_X^{(1)} + r_2 \Pi_X^{(2)})/2} = e^{-\ii r_2\Pi_X^{(2)}/2} e^{-\ii r_1\Pi_X^{(1)}/2} e^{\frac\ii 4 B r_1 r_2} \,.
$$
Thus,
\begin{align*}
\left[-\ii\partial_{r_1}, U\right] = e^{-\ii r_2\Pi_X^{(2)}/2} \left[ -\ii\partial_{r_1}, e^{-\ii r_1\Pi_X^{(1)}/2} e^{\frac\ii 4 B r_1 r_2} \right]
= U\left( -\frac{\Pi_X^{(1)}}{2} + \frac{Br_2}{4}\right).
\end{align*}
Similarly, using the Baker--Campbell--Hausdorff formula in the form
$$
U = e^{-\ii(r_1\Pi_X^{(1)} + r_2 \Pi_X^{(2)})/2} = e^{-\ii r_1\Pi_X^{(1)}/2} e^{-\ii r_2\Pi_X^{(2)}/2} e^{-\frac\ii 4 B r_1 r_2} \,.
$$
one shows that
$$
\left[-\ii\partial_{r_2}, U\right] = U\left( -\frac{\Pi_X^{(2)}}{2} - \frac{Br_1}{4}\right).
$$
Thus, we have shown that
$$
\left[p_r, U\right] = U\left( -\frac{\Pi_X}{2} - \frac{\bB\wedge r}{4}\right),
$$
which is the same as the claimed identity $\pi_r U = U(\tilde\pi_r- \Pi_X/2)$. The other identity in the lemma is proved similarly.
\end{proof}

\begin{proof}[Proof of Proposition \ref{lowerboundlt}]
Since for any real numbers $E$ and $E'$ one has
$$
\Xi_\beta(E,E') \leq \frac12 \left( \frac{\tanh\frac{\beta E}{2}}{E} +  \frac{\tanh\frac{\beta E'}{2}}{E'} \right) = \frac12 \left( \chi_\beta(E) + \chi_\beta(E') \right),
$$
we have
$$
L_{T,B} =\Xi_\beta(\ch_{B,x},\ch_{B,y}) \leq \frac12 \left(\chi_\beta(\ch_{B,x}) + \chi_\beta(\ch_{B,y}) \right).
$$
In the variables $r=x-y$, $X=(x+y)/2$ we have $\pi_x = \tilde\pi_r + \Pi_X/2$ and $\pi_y = \tilde\pi_r - \Pi_X/2$ and therefore, according to Lemma \ref{funnypi},
$$
\ch_{B,x} = (\tilde\pi_r + \Pi_X/2)^2 - \mu = U \left( \pi_r^2 -\mu\right) U^* \,,
\quad
\ch_{B,y} = (\tilde\pi_r - \Pi_X/2)^2 - \mu = U^* \left( \pi_r^2 -\mu\right) U \,.
$$
Therefore we can write the above inequality on $L_{T,B}$ as
$$
L_{T,B} \leq \frac12\left( U \chi_\beta(\pi_r^2-\mu) U^* + U^* \chi_\beta(\pi_r^2-\mu) U \right).
$$
Since $V$ commutes with $U$, we deduce that
$$
V^{1/2} L_{T,B} V^{1/2} \leq \frac12\left( U V^{1/2} \chi_\beta(\pi_r^2-\mu) V^{1/2} U^* + U^* V^{1/2} \chi_\beta(\pi_r^2-\mu) V^{1/2} U \right).
$$
In order to prove the lemma, it remains to remove the magnetic field from $\pi_r$. In terms of the Matsubara frequencies \eqref{eq:matsubara} we have
$$
\chi_\beta(\pi_r^2-\mu) = -\frac2\beta \sum_{n\in\Z} \frac{1}{\ii\omega_n (\ii\omega_n - (\pi_r^2-\mu))} \,,
$$
which follows by setting $E'=0$ in \eqref{eq:zetaidentity} and recalling $\Xi_\beta(E,0)=\chi_\beta(E)$.

For the corresponding integral kernel we obtain
$$
\chi_\beta(\pi_r^2-\mu)(r,s) = -\frac2\beta \sum_{n\in\Z} \frac{1}{\ii\omega_n} G_B^{\ii\omega_n}(r,s) = -\frac2\beta \sum_{n\in\Z} \frac{1}{\ii\omega_n} g_B^{\ii\omega_n}(r-s) e^{\frac\ii2 \bB\cdot(r\wedge s)} \,.
$$
Here we used Lemma \ref{propgbz} for the second equality. Thus, for $\phi\in L^2_{\rm symm}(\R^3)$, considered as a function of the variable $r$,
$$
\langle \phi, V^{1/2} \left( \chi_\beta(\pi_r^2-\mu) - \chi_\beta(p_r^2-\mu) \right) V^{1/2} \phi \rangle
= I_1 + I_2
$$
with
\begin{align*}
I_1 & := -\frac2\beta \sum_{n\in\Z} \frac{1}{\ii\omega_n} \iint_{\R^3\times\R^3}dr\,ds\, \overline{\phi(r)} V(r)^{1/2} \left( g_B^{\ii\omega_n} - g_0^{\ii\omega_n}\right)(r-s) e^{\frac\ii2 \bB\cdot(r\wedge s)} V(s)^{1/2} \phi(s) \,, \\
I_2 & := -\frac2\beta \sum_{n\in\Z} \frac{1}{\ii\omega_n} \iint_{\R^3\times\R^3}dr\,ds\, \overline{\phi(r)} V(r)^{1/2} g_0^{\ii\omega_n}(r-s) \left( e^{\frac\ii2 \bB\cdot(r\wedge s)}-1\right) V(s)^{1/2} \phi(s) \,.
\end{align*}
By a simple convolution inequality,
$$
|I_1| \leq \frac2\beta\sum_{n\in\Z} \frac{1}{|\omega_n|} \| V^{1/2} \phi \|^2 \|g_B^{\ii\omega_n} - g_0^{\ii\omega_n} \|_1 
\leq \frac2\beta \| V \|_\infty \| \phi \|^2 \sum_{n\in\Z} \frac{1}{|\omega_n|} \|g_B^{\ii\omega_n} - g_0^{\ii\omega_n} \|_1 
$$
and similarly, using in addition
$$
\left| e^{\frac\ii2 \bB\cdot(r\wedge s)}-1\right| = 2 \left|\sin\left( \frac14 \bB\cdot(r\wedge s) \right)\right| \leq \frac{B}{2} |r\wedge s|
$$
and
$$
|r\wedge s| = \frac12 |r\wedge (r-s)|^{1/2} |(r-s)\wedge s|^{1/2} \leq |r|^{1/2} |r-s| |s|^{1/2} \,,
$$
we get
$$
|I_2| \leq \frac2\beta \frac{B}2 \sum_{n\in\Z} \frac{1}{|\omega_n|} \| |\cdot|^{1/2} V^{1/2} \phi \|^2 \||\cdot| g_0^{\ii\omega_n} \|_1 
\leq \frac2\beta\frac{B}2 \| |\cdot| V \|_\infty \| \phi \|^2 \sum_{n\in\Z} \frac{1}{|\omega_n|} \||\cdot| g_0^{\ii\omega_n} \|_1  \,.
$$
Using the bounds from Lemma \ref{magres} we find that, if $B\beta(1+\beta\mu_+)\leq\delta(1+\beta\mu_-)$, then
$$
|I_1| \leq C \beta^3 B^2\ \frac{( 1+\beta\mu_+)^3}{(1+\beta\mu_-)^3}\ \|V\|_\infty \|\phi\|^2 \,,
$$
and using the bounds from Lemma \ref{freeres} we find that
$$
|I_2| \leq C \beta^{3/2} B \ \frac{( 1+ \beta\mu_+)^{3/2}}{(1+\beta\mu_-)^{3/2}}\ \left\||\cdot| V\right\|_\infty \|\phi\|^2 \,.
$$
(Details in getting these estimates are very similar to those explained in the proof of Proposition \ref{abstractbound} and thus not repeated here). We conclude that
\begin{align*}
& \langle \phi, V^{1/2} \left( \chi_\beta(\pi_r^2-\mu) - \chi_\beta(p_r^2-\mu) \right) V^{1/2} \phi \rangle
\leq |I_1|+ |I_2| \\
& \qquad \leq C \beta^{1/2} B \frac{( 1+ \beta\mu_+)^{3/2}}{(1+\beta\mu_-)^{1/2}} \left( \beta^{3/2} B \frac{(1+\beta\mu_+)^{3/2}}{(1+\beta\mu_-)^{3/2}} \|V\|_\infty +  \left\||\cdot| V\right\|_\infty \right) \|\phi \|^2 \,.
\end{align*}
Conjugating the resulting operator inequality by $U$ and by $U^*$ and combining it with the above inequality on $L_{T,B}$ we obtain the statement of the proposition.
\end{proof}

%%%%%%%%%%%%%%%%%%%%%%%%%%%%%%%%%%%

\subsection{A priori bound on the critical temperature and an operator inequality}\label{sec:opineq}

As a first consequence of Proposition \ref{lowerboundlt} we obtain a rough a-priori upper bound on the critical temperature.

\begin{corollary}\label{aprioriuppertemp}
There are constants $B_0>0$ and $C>0$ such that for all $0< B\leq B_0$ and $T\geq T_c+CB$ one has
$$
\langle\Phi, (1- V^{1/2} L_{T,B} V^{1/2})\Phi\rangle >0 \,,
$$
unless $\Phi=0$.
\end{corollary}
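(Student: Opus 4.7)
The plan is to combine the operator upper bound of Proposition \ref{lowerboundlt} with a quantitative spectral gap for the operator
$$
K_\beta := V^{1/2}\chi_\beta(p_r^2-\mu) V^{1/2}
$$
at $\beta=\beta_c$. Proposition \ref{lowerboundlt} gives
$$
V^{1/2}L_{T,B}V^{1/2} \leq \tfrac{1}{2}\bigl( U K_\beta U^* + U^* K_\beta U \bigr) + R_B \,,
$$
where the remainder $R_B$ has operator norm at most a constant times $B$, uniformly for $\beta$ in any bounded subset of $(0,\beta_c]$ with $\mu$ fixed (the factors $\beta^{3/2}(1+\beta\mu_+)^{3/2}/(1+\beta\mu_-)^{3/2}$ remaining bounded there). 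If I can show that there exists $c_0>0$ with
\begin{equation}
\label{eq:planspecgap}
\sup\spec K_\beta \leq 1 - c_0(\beta_c - \beta) \qquad \text{for all } \beta\in(0,\beta_c] \,,
\end{equation}
then, because $U$ is unitary, a scalar upper bound on $K_\beta$ is preserved under the symmetrized conjugation, so
$$
V^{1/2}L_{T,B}V^{1/2} \leq \bigl( 1 - c_0(\beta_c-\beta) \bigr) I + C_1 B \cdot I
$$
for some constant $C_1$, and it then suffices to arrange $c_0(\beta_c-\beta) > C_1 B$.

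The spectral gap \eqref{eq:planspecgap} is the heart of the argument, and rests on Assumption \ref{ass1} and analytic perturbation theory. Since $K_{\beta_c}$ is compact and $1=\sup\spec K_{\beta_c}$ is by hypothesis a simple eigenvalue, it is isolated, so the top eigenvalue $\lambda(\beta)$ of $K_\beta$ extends to a real-analytic function in a neighborhood of $\beta_c$ with eigenvector close to $\phi_*$. The Hellmann--Feynman formula, combined with the identity $\partial_\beta \chi_\beta(E) = 1/(2\cosh^2(\beta E/2))$, gives
$$
\lambda'(\beta_c) = \langle \phi_*, V^{1/2}\bigl( \partial_\beta \chi_\beta\bigr)\!(p_r^2-\mu)\big|_{\beta=\beta_c} V^{1/2} \phi_* \rangle = \int_{\R^3}\frac{|(V^{1/2}\phi_*)^{\widehat{\ }}(k)|^2}{2\cosh^2\!\bigl(\beta_c(k^2-\mu)/2\bigr)}\,dk \,,
$$
which is strictly positive (note $V^{1/2}\phi_*\not\equiv 0$ since otherwise $K_{\beta_c}\phi_* = 0$, contradicting $K_{\beta_c}\phi_*=\phi_*$). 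Hence there are $\epsilon>0$ and $c>0$ such that $\lambda(\beta) \leq 1-c(\beta_c-\beta)$ on $[\beta_c-\epsilon,\beta_c]$. On the complementary interval $(0,\beta_c-\epsilon]$, the monotonicity of $\beta\mapsto K_\beta$ (which follows from the monotonicity of $\beta\mapsto\chi_\beta(E)$) yields $\sup\spec K_\beta \leq \sup\spec K_{\beta_c-\epsilon} =: 1-\delta$ with $\delta>0$; since $\beta_c-\beta\leq\beta_c$ there, we have $\sup\spec K_\beta \leq 1-(\delta/\beta_c)(\beta_c-\beta)$. Taking $c_0 = \min\{c,\delta/\beta_c\}$ yields \eqref{eq:planspecgap}.

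To conclude, I translate $T\geq T_c + CB$ into a lower bound on $\beta_c-\beta = (T-T_c)/(TT_c)$ and split into two regimes. For $T\in[T_c+CB,\,2T_c]$ one has $\beta_c-\beta\geq CB/(2T_c^2)$, whence
$$
1 - V^{1/2}L_{T,B}V^{1/2} \geq \bigl( c_0 C/(2T_c^2) - C_1 \bigr) B\cdot I \,,
$$
which is strictly positive once $C$ is chosen larger than $2T_c^2 C_1/c_0$. For $T>2T_c$ one has the uniform bound $\beta_c-\beta\geq 1/(2T_c)$, so $\sup\spec K_\beta \leq 1 - c_0/(2T_c)$ is bounded away from $1$ independently of $B$, and the $O(B)$ remainder is trivially absorbed for $B\leq B_0$ small enough. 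In either regime, $\langle\Phi,(1-V^{1/2}L_{T,B}V^{1/2})\Phi\rangle > 0$ unless $\Phi=0$.

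The main obstacle is the spectral gap step \eqref{eq:planspecgap}: it is not visible from the quadratic form itself, because $\chi_{\beta_c}(E)-\chi_\beta(E)$ decays rapidly as $|E|\to\infty$ so no uniform pointwise lower bound is available, and simplicity of the eigenvalue $1$ via Assumption \ref{ass1} is what makes analytic perturbation theory applicable and produces a genuine linear-in-$(\beta_c-\beta)$ gap for the top eigenvalue. A secondary point is the need to verify the side condition $B\beta(1+\beta\mu_+)\leq\delta(1+\beta\mu_-)$ of Proposition \ref{lowerboundlt}, but this is harmless since $\beta$ ranges over a bounded set and $B$ is taken small.
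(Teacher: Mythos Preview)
Your proof is correct and follows essentially the same route as the paper: apply Proposition~\ref{lowerboundlt}, then use analytic perturbation theory near $\beta_c$ combined with monotonicity of $\beta\mapsto K_\beta$ to obtain a linear spectral gap, and finally absorb the $O(B)$ remainder. The only cosmetic differences are that the paper parametrizes by $T$ rather than $\beta$ and does not invoke Assumption~\ref{ass1} for the perturbation step (strict monotonicity of $\chi_\beta$ alone suffices, via Rellich's theorem, even if the top eigenvalue were degenerate).
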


\begin{proof}
According to Proposition \ref{lowerboundlt} there are $\delta>0$ and $C>0$ such that for all $T\geq T_c$ and $0<B\beta(1+\beta\mu_+)\leq\delta(1+\beta\mu_-)$ one has the lower bound
$$
1- V^{1/2} L_{T,B} V^{1/2} \geq 1- \frac12 \left( U V^{1/2} \chi_\beta(p_r^2-\mu) V^{1/2} U^* - U^* V^{1/2} \chi_\beta(p_r^2-\mu) V^{1/2} U \right) - C B \,.
$$
(Note that the constant $C$ can be chosen independent of $T$, as long as $T\geq T_c$. In fact, the constant goes to zero as $T\to\infty$.)

We next recall that the family of operators $V^{1/2} \chi_\beta(p_r^2-\mu) V^{1/2}$ is non-decreasing with respect to $\beta$ and has an eigenvalue $1$ at $\beta=\beta_c$. Moreover, since the function $\chi_\beta(E)$ is strictly increasing with respect to $\beta$ for every $E\in\R$, we learn from analytic perturbation theory that there are $c>0$ and $T_2>T_c$ such that for all $T_c\leq T\leq T_2$,
$$
V^{1/2} \chi_\beta(p_r^2-\mu) V^{1/2} \leq 1 - c(T-T_c)\,.
$$
Again by monotonicity this implies that for all $T\geq T_c$
$$
V^{1/2} \chi_\beta(p_r^2-\mu) V^{1/2} \leq 1 - c\min\{T-T_c,T_2-T_c\}\,.
$$
Inserting this into the lower bound above we conclude that
$$
1- V^{1/2} L_{T,B} V^{1/2} \geq c\min\{T-T_c,T_2-T_c\} - C B \,.
$$
The right side is positive if $T\geq T_c + (C/c)B$ and $B\leq (c/C)(T_2-T_c)$, which proves the corollary.
\end{proof}

As a consequence of this corollary, from now on we may and will restrict ourselves to temperatures $T$ such that $|T-T_c|$ is bounded by a constant times $B$.

Our next goal is to deduce from Proposition \ref{lowerboundlt} a lower bound on the operator $1- V^{1/2} L_{T,B} V^{1/2}$. We recall that by definition of $\beta_c$ the largest eigenvalue of the operator $V^{1/2} \chi_{\beta_c}(p_r^2-\mu) V^{1/2}$ equals one. Moreover, by Assumption \ref{ass1}, this eigenvalue is simple and $\phi_*$ denotes a corresponding real-valued, normalized eigenfunction. We denote by
$$
P:=|\phi_*\rangle\langle\phi_*|
$$
the corresponding projection and $P^\bot = 1-P$. Since $V^{1/2} \chi_{\beta_c}(p_r^2-\mu) V^{1/2}$ is a compact operator, there is a $\kappa>0$ such that
\begin{equation}
\label{eq:gap}
V^{1/2} \chi_{\beta_c}(p_r^2-\mu) V^{1/2} \leq 1-\kappa P^\bot \,.
\end{equation}
Finally, we introduce the operator
\begin{equation}
\label{eq:defq}
Q:= \frac12\left( U P U^* + U^* P U \right).
\end{equation}
We can now state our operator inequality for $1- V^{1/2} L_{T,B} V^{1/2}$.

\begin{proposition}\label{opineq}
Given $C_1>0$ there are constants $B_0>0$ and $C>0$ such that for all $|T-T_c|\leq C_1B$ and $0<B\leq B_0$ one has
\begin{align}\label{eq:opineq}
1- V^{1/2} L_{T,B} V^{1/2} \geq \kappa\left( 1- Q\right) - C B \,,
\end{align}
\end{proposition}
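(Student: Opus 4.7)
\textbf{Proof plan for Proposition \ref{opineq}.}

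The plan is to derive the inequality directly from the operator upper bound on $L_{T,B}$ in Proposition \ref{lowerboundlt}, combined with the spectral gap \eqref{eq:gap} for $V^{1/2}\chi_{\beta_c}(p_r^2-\mu)V^{1/2}$. First, since $|T-T_c|\le C_1 B$, for $B_0$ small enough $T$ lies in a fixed compact interval around $T_c$ bounded away from $0$ and $\infty$, so $\beta$ is comparable to $\beta_c$ and the factors $(1+\beta\mu_\pm)$ are uniformly bounded. Thus for $B_0$ small the hypothesis $B\beta(1+\beta\mu_+)\le\delta(1+\beta\mu_-)$ of Proposition \ref{lowerboundlt} is satisfied and its entire error term is absorbed into $CB$, yielding
$$
V^{1/2} L_{T,B} V^{1/2}\le \tfrac12\left(U V^{1/2}\chi_\beta(p_r^2-\mu)V^{1/2}U^* + U^* V^{1/2}\chi_\beta(p_r^2-\mu) V^{1/2}U\right) + CB.
$$

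Second, I would replace $\chi_\beta$ by $\chi_{\beta_c}$. Since $\partial_\beta\chi_\beta(E)=\tfrac12\cosh^{-2}(\beta E/2)$ is bounded by $1/2$ pointwise in $E$ and uniformly in $\beta$, the operator $V^{1/2}\partial_\beta\chi_\beta(p_r^2-\mu)V^{1/2}$ has norm at most $\tfrac12\|V\|_\infty$. The fundamental theorem of calculus together with $|\beta-\beta_c|=|T-T_c|/(TT_c)\le C' B$ then gives
$$
\left\| V^{1/2}\bigl(\chi_\beta(p_r^2-\mu)-\chi_{\beta_c}(p_r^2-\mu)\bigr)V^{1/2}\right\|\le C'' B.
$$
Because $U$ and $U^*$ are unitary, conjugating preserves this operator norm bound, so we may replace $\chi_\beta$ by $\chi_{\beta_c}$ in the previous inequality at the cost of an additional $CB$ error.

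Third, I would invoke the gap inequality \eqref{eq:gap}, rewritten as
$$
V^{1/2}\chi_{\beta_c}(p_r^2-\mu)V^{1/2}\le 1-\kappa(1-P)=1-\kappa+\kappa P,
$$
and conjugate it by $U$ and $U^*$ (both of which preserve the identity since $UU^*=U^*U=1$) to obtain
$$
\tfrac12\bigl(U V^{1/2}\chi_{\beta_c}V^{1/2}U^* + U^* V^{1/2}\chi_{\beta_c}V^{1/2}U\bigr)\le 1-\kappa+\tfrac{\kappa}{2}(UPU^*+U^*PU)= 1-\kappa+\kappa Q.
$$

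Combining the three steps gives $V^{1/2}L_{T,B}V^{1/2}\le 1-\kappa+\kappa Q + CB$, i.e.\ \eqref{eq:opineq}. The only nontrivial step is the Lipschitz estimate on $\beta\mapsto V^{1/2}\chi_\beta(p_r^2-\mu)V^{1/2}$, but this is essentially free from the uniform bound on $\cosh^{-2}$; everything else is a unitary rearrangement of Proposition \ref{lowerboundlt} and the Birman--Schwinger spectral gap \eqref{eq:gap}.
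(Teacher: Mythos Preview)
Your proposal is correct and follows essentially the same route as the paper: apply Proposition \ref{lowerboundlt} (with the smallness condition on $B$ verified via the compact temperature window), replace $\chi_\beta$ by $\chi_{\beta_c}$ using a Lipschitz bound in the temperature (the paper phrases this as the pointwise bound \eqref{eq:chiineq} on $|\chi_\beta(E)-\chi_{\beta_c}(E)|$, you equivalently differentiate in $\beta$), and then insert the gap inequality \eqref{eq:gap} after conjugating by $U$ and $U^*$.
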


\begin{proof}
As in the proof of Corollary \ref{aprioriuppertemp} we apply Proposition \ref{lowerboundlt} to obtain $\delta>0$ and $C>0$ such that for all $|T-T_c|\leq C_1B$ and $0<B\leq B_0 = \delta(1+\beta\mu_-)/(\beta(1+\beta\mu_+))$ one has the lower bound
$$
1- V^{1/2} L_{T,B} V^{1/2} \geq 1- \frac12 \left( U V^{1/2} \chi_\beta(p_r^2-\mu) V^{1/2} U^* + U^* V^{1/2} \chi_\beta(p_r^2-\mu) V^{1/2} U \right) - C B \,.
$$
Since the derivative of $\chi_\beta(E)$ with respect to $T$ is bounded uniformly in $E$ for $T$ close to $T_c$, we infer that there is a $C'>0$ such that for all $|T-T_c|\leq C_1 B_0$ and all $E\in\R$,
\begin{equation}
\label{eq:chiineq}
\left|\chi_\beta(E) - \chi_{\beta_c}(E)\right| \leq C'|T-T_c| \,.
\end{equation}
This, together with the gap inequality \eqref{eq:gap}, implies that for $|T-T_c|\leq C_1 B \leq C_1 B_0$,
\begin{align*}
1- V^{1/2} L_{T,B} V^{1/2} & \geq 1- \frac12 \left( U V^{1/2} \chi_{\beta_c}(p_r^2-\mu) V^{1/2} U^* + U^* V^{1/2} \chi_{\beta_c}(p_r^2-\mu) V^{1/2} U \right) \\
& \qquad\qquad -C'|T-T_c| - C B \\
& \geq \frac\kappa2 \left( U P^\bot U^* + U^* P^\bot U\right) - (C_1 C' + C) B \\
& = \kappa\left( 1- Q\right) - (C_1 C' + C) B \,,
\end{align*}
as claimed.
\end{proof}

%%%%%%%%%%%%%%%%%%%%%%%%%%%%%%%%%%%

\subsection{The operator $R$}\label{sec:rq}

We introduce the operator
\begin{equation}
\label{eq:r}
R := \int_{\R^3}dr\,  |\phi_*(r)|^2 \cos(r\cdot\Pi_X)
\end{equation}
acting in $L^2(\R^3)$. Since $-1\leq \cos(r\cdot\Pi_X)\leq 1$ and since $\phi_*$ is normalized, we have $\|R\|\leq 1$ and therefore $1-R^2\geq 0$. We now prove a more precise lower bound.

\begin{lemma}\label{rbound}
For every $B_0>0$ there are $c>0$ and $E_0>0$ such that for $0< B\leq B_0$,
$$
1-R^2\geq c \ \frac{\Pi_X^2}{E_0+\Pi_X^2} \,.
$$
\end{lemma}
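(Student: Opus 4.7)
My plan is to reduce the nonlinear statement to a linear one via the operator Jensen inequality, and then to split into low- and high-energy regimes of $\Pi_X^2$.

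By the operator convexity of $t\mapsto t^2$ applied to the probability measure $|\phi_*(r)|^2\,dr$,
\begin{equation*}
R^2 \le \int_{\R^3} |\phi_*(r)|^2 \cos^2(r\cdot\Pi_X)\,dr = \tfrac{1}{2}(1+S), \qquad S := \int_{\R^3} |\phi_*(r)|^2 \cos(2 r\cdot\Pi_X)\,dr,
\end{equation*}
so $1 - R^2 \ge \tfrac{1}{2}(1 - S)$. After rescaling $r\mapsto r/2$, $S = \int \tilde f(r)\cos(r\cdot\Pi_X)\,dr$ with $\tilde f(r) := |\phi_*(r/2)|^2/8$ still a spherically symmetric probability density of finite fourth moment. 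It therefore suffices to prove a lower bound of the required form for the linear operator $1 - \int \tilde f(r)\cos(r\cdot\Pi_X)\,dr$.

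For the low-energy regime I would use the elementary Taylor bound $1 - \cos\lambda \ge \lambda^2/2 - \lambda^4/24$. Spherical symmetry of $\tilde f$ yields $\int \tilde f(r)\, r_i r_j\,dr = (\sigma^2/3)\delta_{ij}$ with $\sigma^2 = \int \tilde f(r)|r|^2\,dr$, and $\sum_{i,j} r_i r_j [\Pi_X^{(i)},\Pi_X^{(j)}] = 0$ by antisymmetry of the commutator, so no magnetic correction appears at the quadratic level and $\int \tilde f(r)(r\cdot\Pi_X)^2\,dr = (\sigma^2/3)\Pi_X^2$. The quartic remainder is dominated by the operator inequality $(r\cdot\Pi_X)^4 \le C|r|^4 (\Pi_X^2)^2$ via \eqref{eq:threeoperators}. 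This yields the global bound $1 - S \ge (\sigma^2/6)\Pi_X^2 - C'(\Pi_X^2)^2$.

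For the high-energy regime I would establish the existence of $E_1,c'>0$ (depending on $B_0$) such that $1-S \ge c'$ on the spectral subspace $\{\Pi_X^2 > E_1\}$, uniformly for $B\in(0,B_0]$. Using the identity $e^{ir\cdot\Pi_X}\psi(X) = e^{iB(X_\perp\wedge r_\perp)_3}\psi(X+r)$, which follows from Baker--Campbell--Hausdorff since $r_\perp\cdot p_\perp$ and $r_\perp\cdot(\bB\wedge X)$ commute for fixed $r$, $S$ is the integral operator with kernel $\tilde f(y-X)\,e^{iB(X_\perp\wedge y_\perp)_3}$. At $B=0$ this becomes convolution with $\tilde f$, i.e.\ Fourier multiplication by $\hat{\tilde f}(p_X)$, and Riemann--Lebesgue together with $\hat{\tilde f}(0)=1$ gives $1 - \hat{\tilde f}(p) \ge c'>0$ for $|p|^2 \ge E_1$. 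A continuity-in-$B$ argument, combined with the Landau-level decomposition on which $\Pi_X^2$ takes values $(4k+2)B + p_3^2$, propagates this uniformly to $B\in(0,B_0]$. Combining the two regimes via an IMS-type partition of unity in $\Pi_X^2$ then yields $1 - S \ge c\,\Pi_X^2/(E_0+\Pi_X^2)$, and hence the claim. The main obstacle is this high-energy estimate: the non-commutativity of $\Pi_X^{(1)}$ and $\Pi_X^{(2)}$ means $S$ does not commute with $\Pi_X^2$, preventing a direct scalar Riemann--Lebesgue reduction, and uniformity in $B$ requires careful handling of the oscillating magnetic-translation kernel.
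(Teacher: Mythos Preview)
Your Jensen reduction from $1-R^2$ to $\tfrac12(1-S)$ is correct and elegant, but it turns out to be unnecessary, and your identification of the ``main obstacle'' rests on a misconception. The operator $S$ (and in fact $R$ itself) \emph{does} commute with $\Pi_X^2$. This follows from the spherical symmetry of $\phi_*$: after writing $r=(r_\perp,r_3)$ and performing the angular integral over $r_\perp/|r_\perp|$, one obtains a function of the Landau Hamiltonian $(\Pi_X^\perp)^2$ alone. The paper establishes this via an explicit harmonic-oscillator computation, showing that
\[
\frac{1}{2\pi}\int_{\Sph^1}\cos(\rho\,\omega\cdot\Pi_X^\perp)\,d\omega \;=\; e^{-B\rho^2/2}\sum_{k\ge 0} L_k(B\rho^2)\,P^{(k)}_{2B},
\]
with $L_k$ the Laguerre polynomials and $P^{(k)}_{2B}$ the Landau-level projections. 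Combined with the obvious commutativity of $\cos(r_3 p_3)$ with $p_3^2$, this diagonalizes $R$ (hence $S$) simultaneously with $\Pi_X^2$, reducing the entire lemma to the scalar inequality $R_{k,p_3}^2\le 1-cE_{k,p_3}/(E_0+E_{k,p_3})$ for the eigenvalues $R_{k,p_3}$ and $E_{k,p_3}=2B(2k+1)+p_3^2$. At that point your Jensen step is superfluous and your ``main obstacle'' dissolves: the high-energy bound becomes a genuine scalar Riemann--Lebesgue statement, exactly as you hoped.

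The paper's proof also requires a third, \emph{intermediate-energy} regime that you have not addressed. Your low-energy Taylor bound is useful only for $E_{k,p_3}$ below some threshold, and Riemann--Lebesgue decay only guarantees $|R_{k,p_3}|$ small for $E_{k,p_3}$ above some other threshold; there is no a priori reason these regions overlap. The paper closes the gap by a compactness argument on the scalars $R_{k,p_3}$, which requires care as $B\to 0$: one must control the limit of $e^{-Bu/2}L_k(Bu)$ with $Bk$ held fixed via the Laguerre-to-Bessel asymptotics $e^{-x/(2k)}L_k(x/k)\to J_0(2\sqrt x)$, to rule out accumulation of $|R_{k,p_3}|$ at $1$. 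Without the simultaneous diagonalization, your proposed ``continuity-in-$B$'' and ``IMS partition'' steps would be hard to make rigorous, since you would be gluing operator inequalities on spectral subspaces of $\Pi_X^2$ for an operator you believe does not commute with it.
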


For the proof of this lemma we need an auxiliary result, which is probably well-known and whose proof is included for the sake of completeness. We denote by $\Pi_X^\bot = (\Pi_X^{(1)},\Pi_X^{(2)})^T$ the first two components of $\Pi_X$. Moreover, we denote by $P^{(k)}_{2B}$ the projections on the $k$-th Landau level and by $L_k=L_k^{(0)}$ the Laguerre polynomials.

\begin{lemma}\label{diagonal}
For every $\rho>0$,
$$
\frac{1}{2\pi} \int_{\Sph^1} d\omega\, \cos(\rho \omega\cdot \Pi_X^\bot) = e^{-B\rho^2/2} \sum_{k=0}^\infty L_k(B\rho^2) P^{(k)}_{2B} \,.
$$
\end{lemma}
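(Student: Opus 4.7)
My plan for proving this lemma is as follows. First, parametrize $\omega = (\cos\theta,\sin\theta)$ and introduce ladder operators adapted to the magnetic momentum in the perpendicular plane. Since $[\Pi_X^{(1)},\Pi_X^{(2)}] = -2\ii B$, define $a := (2\sqrt{B})^{-1}(\Pi_X^{(1)} - \ii\Pi_X^{(2)})$ and its adjoint $a^*$, which satisfy $[a,a^*]=1$. A short computation gives $(\Pi_X^\bot)^2 = 4B a^* a + 2B$, so the number operator $N := a^* a$ has integer spectrum and its eigenspace at eigenvalue $k$ coincides with $\ran P_{2B}^{(k)}$. Moreover a direct expansion gives
\[
\omega\cdot\Pi_X^\bot = \sqrt{B}\bigl(e^{\ii\theta}a + e^{-\ii\theta}a^*\bigr),
\]
and since cosine is even, $\frac{1}{2\pi}\int_{\Sph^1} d\omega\,\cos(\rho\omega\cdot\Pi_X^\bot) = \frac{1}{2\pi}\int_0^{2\pi} d\theta\, e^{\ii\rho\omega(\theta)\cdot\Pi_X^\bot}$.

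Next, I recognize the integrand as a Weyl displacement operator. Setting $\alpha_\theta := \ii\rho\sqrt{B}\,e^{-\ii\theta}$ (so $|\alpha_\theta|^2 = \rho^2 B$ independent of $\theta$), one has $\ii\rho\omega(\theta)\cdot\Pi_X^\bot = \alpha_\theta a^* - \bar\alpha_\theta a$. The Baker--Campbell--Hausdorff identity, applicable because $[\alpha_\theta a^*, -\bar\alpha_\theta a] = |\alpha_\theta|^2$ is central, produces the normal-ordered form
\[
e^{\alpha_\theta a^* - \bar\alpha_\theta a} = e^{-\rho^2 B/2}\, e^{\alpha_\theta a^*}\, e^{-\bar\alpha_\theta a}.
\]
Expanding both exponentials in power series in $a^*$ and $a$ produces generic terms $\frac{\alpha_\theta^j(-\bar\alpha_\theta)^k}{j!\,k!}(a^*)^j a^k$ whose $\theta$-dependence is the overall factor $e^{-\ii(j-k)\theta}$. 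Hence the average over $\theta$ retains only the diagonal terms $j=k$, for which the prefactor is $\alpha_\theta^k(-\bar\alpha_\theta)^k = (-\rho^2 B)^k$, which is $\theta$-independent.

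Finally, using $(a^*)^k a^k \,|n\rangle_a = \frac{n!}{(n-k)!}|n\rangle_a$ for $k\le n$ (and vanishing otherwise), the restriction of the averaged operator to the $n$-th number eigenspace collapses to the scalar
\[
e^{-\rho^2 B/2}\sum_{k=0}^n \frac{(-\rho^2 B)^k}{(k!)^2}\,\frac{n!}{(n-k)!} = e^{-\rho^2 B/2}\, L_n(\rho^2 B),
\]
by the standard series representation $L_n(x) = \sum_{k=0}^n \binom{n}{k}\frac{(-x)^k}{k!}$ of the Laguerre polynomial. Because $a$ and $a^*$ act trivially on the guiding-center degeneracy factor of $\ran P_{2B}^{(n)}$, this scalar lifts to the operator identity claimed. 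The one point that requires care is the infinite multiplicity of each Landau level, but this is handled automatically once one notes that the displacement operator depends only on $a, a^*$ and is therefore the identity in the guiding-center sector; the rest of the argument is combinatorial bookkeeping of the matrix elements.
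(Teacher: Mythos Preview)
Your proof is correct and follows essentially the same route as the paper: introduce the ladder operators $a,a^*$ for $\Pi_X^\bot$, normal-order the displacement operator $e^{\ii\rho\omega\cdot\Pi_X^\bot}$ via Baker--Campbell--Hausdorff, average over $\theta$ to retain only the diagonal terms $(a^*)^k a^k$, and then evaluate on number eigenstates to obtain the Laguerre polynomial. The paper handles the infinite degeneracy by explicitly writing the Landau basis as $(a^\dagger)^k(b^\dagger)^\ell|0\rangle$ with a commuting oscillator $b$, which is exactly your remark that the operator acts trivially on the guiding-center sector.
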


This lemma implies, in particular, that the operator on the left side commutes with the two-dimensional Landau Hamiltonian $(\Pi_X^\bot)^2$.

\begin{proof}[Proof of Lemma \ref{diagonal}]
Let $a=(\Pi_X^{(1)}-i\Pi_X^{(2)})/(2\sqrt B)$, so that $a^\dagger=(\Pi_X^{(1)}+i\Pi_X^{(2)})/(2\sqrt B)$ and $[a,a^\dagger]=1$. Thus, writing $\omega=(\cos\phi,\sin\phi)$ we have by the Baker--Campbell--Hausdorff formula,
$$
\exp(\ii\rho\omega\cdot\Pi_X^\bot) = \exp(\ii\rho \sqrt B (e^{-i\phi} a^\dagger + e^{i\phi} a)) = e^{-B\rho^2/2} \exp(\ii\rho \sqrt B e^{-i\phi} a^\dagger ) \exp(\ii\rho \sqrt B e^{i\phi} a).
$$
Expanding the exponential, we find
\begin{align*}
\frac1{2\pi} \int_{\Sph^1} d\omega\, \cos(\rho \omega\cdot \Pi_X^\bot) & = \frac1{2\pi}\int_{\Sph^1}d\omega\, \exp(\ii\rho \omega\cdot \Pi_X^\bot) \\
& =e^{-B\rho^2/2} \sum_{n,m=0}^\infty \frac{1}{n!\, m!} \frac1{2\pi} \int_{-\pi}^\pi d\phi\, \left( \ii\rho\sqrt B e^{-i\phi} a^\dagger\right)^n \left( \ii\rho\sqrt B e^{i\phi} a \right)^m \\
& = e^{-B\rho^2/2} \sum_{n=0}^\infty \frac{(-\rho^2 B)^n}{(n!)^2} (a^\dagger)^n a^n \,.
\end{align*}
It is well-known that there is a basis of the $k$-th Landau level of the form $(a^\dagger)^k (b^\dagger)^\ell |0\rangle$, $\ell\in\N_0$, where $b$ and $b^\dagger$ correspond to an independent oscillator. Since
$$
(a^\dagger)^n a^n (a^\dagger)^k (b^\dagger)^\ell |0\rangle =
\begin{cases}
k(k-1)\ldots(k-n+1) (a^\dagger)^k (b^\dagger)^\ell |0\rangle & \text{if}\ k\geq n \,,\\
0 & \text{if}\ k<n \,,
\end{cases}
$$
we deduce that
\begin{align*}
& \frac1{2\pi} \int_{\Sph^1} d\omega\, \cos(\rho \omega\cdot \Pi_X^\bot) (a^\dagger)^k (b^\dagger)^\ell |0\rangle \\
&\qquad = e^{-B\rho^2/2} \sum_{n=0}^k \frac{(-\rho^2 B)^n}{(n!)^2} k(k-1)\cdots(k-n+1) (a^\dagger)^k (b^\dagger)^\ell |0\rangle \\
& \qquad = e^{-B\rho^2/2} L_k(\rho^2 B) (a^\dagger)^k (b^\dagger)^\ell |0\rangle \,,
\end{align*}
where we used \cite[(22.3.9)]{AbSt}. This proves the claimed formula.
\end{proof}

\begin{proof}[Proof of Lemma \ref{rbound}]
The operator $R$ can be diagonalized explicitly by performing a Fourier transform in the $x_3$ variable and by decomposing into Landau levels in the $(x_1,x_2)$ variables. The operator then acts as multiplication by the numbers
\begin{equation} 
\label{rnP}
R_{k,p_3} = \pi \int_0^\infty du \int_{\mathbb{R}}dx_3\, |\varphi_*(\sqrt{u+x_3^2})|^2
\cos(x_3 p_3 ) e^{-(1/2)Bu} L_k(Bu) \,,
\quad k\in\N_0 \,,\ p_3\in\R \,.
\end{equation}
To obtain this formula we note that, since $\phi_*$ is spherically symmetric,
\begin{align*}
R & = \int_{\R^3} dr\, |\phi_*(r)|^2 e^{\ii r\cdot\Pi_X} = \int_0^\infty d\rho\,\rho \int_\R dx_3 \, |\phi_*(\sqrt{\rho^2+x_3^2})|^2 e^{\ii x_3 p_3} \int_{\Sph^1} d\omega\, e^{\ii\rho\omega\cdot\Pi_X^\bot} \\
& = \int_0^\infty d\rho\,\rho \int_\R dx_3 \, |\phi_*(\sqrt{\rho^2+x_3^2})|^2 \cos(x_3 p_3) \int_{\Sph^1} d\omega\, \cos(\rho\omega\cdot\Pi_X^\bot) \,.
\end{align*}
We now apply Lemma \ref{diagonal} and obtain
\begin{align*}
R & = 2\pi \sum_{k=0}^\infty P_{2B}^{(k)} \int_0^\infty d\rho\,\rho \int_{\R^3} dx_3\, |\phi_*(\sqrt{\rho^2+x_3^2})|^2 \cos(x_3 p_3) e^{-B\rho^2/2} L_k(B\rho^2) \,.
\end{align*}
Changing variables $u=\rho^2$ and performing a Fourier transform in the $x_3$ variable we obtain \eqref{rnP}.

Since in the same representation $\Pi_X^2$ becomes multiplication by
\begin{equation} 
\label{EnP}
E_{k,p_3}=2B(2k+1) + p_3^2 \,, \quad k\in\N_0\,,\ p_3\in\mathbb{R} \,,
\end{equation} 
we need to prove that for $0< B\leq B_0$,
$$
R_{k,p_3}^2 \leq 1 - c \frac{E_{k,p_3}}{E_0 + E_{k,p_3}} \,, \quad k\in\N_0\,,\ p_3\in\mathbb{R} \,.
$$
We prove this inequality separately for small, medium and large values of $E_{k,p_3}$.

\emph{Step 1.} We show that there are constants $E_*>0$ and $C>0$ such that for all $B>0$, $k\in\N_0$ and $p_3\in\R$ with $E_{k,p_3}\leq E_*$ we have
\begin{equation}\label{eq:rbound1}
|R_{k,p_3}| \leq 1- C E_{k,p_3} \,. 
\end{equation} 

Indeed, using the inequalities \eqref{eq:cosineq} we get from \eqref{eq:r} the operator inequalities
\begin{align*}
0 \leq R - \int_{\mathbb{R}^3}dr\, |\varphi_*(r)|^2\left(1-\frac12 (r\cdot\Pi_X)^2 \right) \leq  \frac1{24} \int_{\mathbb{R}^3}dr\, |\varphi_*(r)|^2 (r\cdot\Pi_X)^4 \,.
\end{align*}
Let us abbreviate
\begin{equation} 
\langle |r|^{2m}\rangle :=   \int_{\mathbb{R}^3} dr\, |\varphi_*(r)|^2|r|^{2m}\,, \quad m=1,2 \,, 
\end{equation} 
and note that these numbers are finite by the decay properties of $\phi_*$ \cite[Proposition~1]{FHSS}. We now compute, as in \eqref{eq:pisquared},
$$
\int_{\R^3} dr\, |\phi_*(r)|^2 (r\cdot\Pi_X)^2 = \frac13 \langle |r|^2\rangle \Pi_X^2
$$
and bound, using \eqref{eq:threeoperators},
$$
\int_{\R^3} dr\, |\phi_*(r)|^2 (r\cdot\Pi_X)^4 \leq C \langle |r|^4\rangle (\Pi_X^2)^2 \,.
$$
We obtain
$$
1 -\frac16\langle|r|^2\rangle \Pi_X^2\leq R \leq 1 -\frac16\langle|r|^2\rangle \Pi_X^2 + C \langle|r|^4\rangle (\Pi_X^2)^2 \,, 
$$
or, equivalently,
\begin{equation*} 
1 -\frac16\langle|r|^2\rangle E_{k,p_3} \leq R_{k,p_3} \leq 1 -\frac16\langle|r|^2\rangle E_{k,p_3} + C \langle|r|^4\rangle E_{k,p_3}^2 
\quad\text{for all}\ k\in\N_0\,,\ p_3\in\R \,. 
\end{equation*}
This implies the claimed bound \eqref{eq:rbound1}.

\emph{Step 2.} We show that
\begin{equation}
\label{eq:rbound2}
R_{k,p_z} \to 0 \ \text{as}\ E_{k,p_z} \to \infty\ \text{uniformly for all sufficiently small}\ B \,,
\end{equation}
that is, for every $B_0>0$ and $\epsilon>0$ there is an $E_*>0$ such that for all $k\in\N_0$ and $p_3\in\R$ with $E_{k,p_3}\geq E_*$ one has $|R_{k,p_3}|\leq\epsilon$.

This follows by a Riemann--Lebesgue-type argument. Indeed, for given $\epsilon>0$ we choose $f\in C_c^\infty(0,\infty)$ such that
$$ 
\pi \int_0^\infty du \int_{\mathbb{R}}dz\,\left|  |\varphi_*(\sqrt{u+z^2})|^2 - f(\sqrt{u+z^2})\right| \leq \varepsilon/2.
$$
Let
$$
\tilde R_{k,p_3} = \pi \int_0^\infty du \int_{\mathbb{R}}dx_3\, f(\sqrt{u+x_3^2})
\cos(x_3 p_3 ) e^{-(1/2)Bu} L_k(Bu) \,.
$$
Since (see \cite[(22.14.12)]{AbSt})
\begin{equation}
\label{eq:boundedbyone}
-1\leq \cos(x_3 p_3 ) \leq 1,\quad -1\leq e^{-(1/2)Bu} L_k(Bu)\leq 1 \,,
\end{equation}
we find that
$$
\left| R_{k,p_3} \right| \leq \left| \tilde R_{k,p_3} \right| + \epsilon/2 \,,
$$
and we are reduced to proving that $\tilde R_{k,p_3}\to 0$ as $E_{k,p_3}\to\infty$.

We prove two different bounds on $\tilde R_{k,p_3}$. First, we write
$$
\frac{d}{dx} \sin x = \cos x 
$$
and obtain
\begin{align*}
\tilde R_{k,p_3} & = \frac{\pi}{p_3} \int_0^\infty du \int_{\mathbb{R}}dx_3\, f(\sqrt{u+x_3^2}) \frac{d}{dx_3} \sin(x_3 p_3) e^{-(1/2)Bu} L_k(Bu) \\
& = - \frac{\pi}{p_3} \int_0^\infty du \int_{\mathbb{R}}dx_3\, f'(\sqrt{u+x_3^2}) \frac{x_3}{\sqrt{u+x_3^2}} \sin(x_3 p_3) e^{-(1/2)Bu} L_k(Bu) \,.
\end{align*}
Therefore, using \eqref{eq:boundedbyone},
$$
\left| \tilde R_{k,p_3} \right| \leq \frac{\pi}{|p_3|} \int_0^\infty du \int_{\mathbb{R}}dx_3\, \left| f'(\sqrt{u+x_3^2}) \right| \,.
$$
This is the first bound. For the second bound, we write
$$ 
\frac d{dx} (x L_k^{(1)}(x) ) = (k+1) L_k(x)
$$
with the generalized Laguerre polynomials $L_k^{(1)}$ and obtain
\begin{align*}
\tilde R_{k,p_3} & = \frac{\pi}{B(k+1)} \int_0^\infty du \int_{\mathbb{R}}dx_3\, f(\sqrt{u+x_3^2}) \cos(x_3 p_3) e^{-(1/2)Bu} \frac d{du} (Bu L_k^{(1)}(Bu)) \\
& = - \frac{\pi}{2B(k+1)} \int_0^\infty du \int_{\mathbb{R}}dx_3\left( f'(\sqrt{u+x_3^2}) \frac{1}{\sqrt{u+x_3^2}} - B f(\sqrt{u+x_3^2}) \right) \\
& \qquad \qquad\qquad\qquad \times \cos(x_3 p_3) e^{-(1/2)Bu} Bu L_k^{(1)}(Bu) \,.
\end{align*}
We now use the fact that for every $M>0$ there is a $C>0$ such that
$$
x \left| L_k^{(1)}(x) \right| e^{-x/2} \leq C ((k+1)x)^{1/4}
\qquad\text{for all}\ x\in [0,M] \,.
$$
(This bound is a consequence of the more precise uniform asymptotics in \cite{Er} for $k\geq k_0$. In fact, the bound is valid for $x\in[0,4b(k+1)]$ with any fixed $b<1$, and it can be further improved for $x\leq C (k+1)^{-1}$, but the stated bound suffices for our purposes. The bound for $k<k_0$ is immediate.) Using this bound with $x=Bu$ (which is bounded from above since $f$ has compact support and $B\leq B_0$) we can bound
\begin{align*}
\left| \tilde R_{k,p_3} \right|\leq  \frac{\pi C}{(2B(k+1))^{3/4}} \int_0^\infty du \int_{\mathbb{R}}dx_3\left( \frac{|f'(\sqrt{u+x_3^2})|}{\sqrt{u+x_3^2}} + B_0 |f(\sqrt{u+x_3^2})| \right) u^{1/4}  \,.
\end{align*}

Combining the two bounds we see that
$$
\left| \tilde R_{k,p_3} \right|\leq C' \min\{ |p_3|^{-1}, (B(k+1))^{-3/4}\} \,,
$$
and this is bounded by $C'' \max\{ E_{k,p_3}^{-1/2}, E_{k,p_3}^{-3/4} \}$, which is $\leq \epsilon/2$ if $E_{k,p_3}$ is large enough. This proves \eqref{eq:rbound2}.

\emph{Step 3.} We show that for any $E_\leq<E_\geq$ and any $B_0>0$ there is a $c>0$ such that for all $0<B\leq B_0$, $k\in\N_0$ and $p_3\in\R$ with $E_\leq \leq E_{k,p_3}\leq E_\geq$ one has
\begin{equation} 
\label{eq:rbound3}
\left| R_{k,p_3}\right |\leq 1-c \,.
\end{equation} 

From inequalities \eqref{eq:boundedbyone} and the fact that the equality $\cos(p_3 x_3) e^{-(1/2)Bu}L_n(Bu)=\pm 1$ holds only on a set of $(u,x_3)$ of measure zero we conclude that $| R_{k,p_3} |<1$ for all $k\in\N_0$ and $p_3\in\R$. Therefore, arguing by contradiction, if \eqref{eq:rbound3} were wrong, there would be a sequence $(B_j,k_j,{p_3}_j)$ with $0<B_j\leq B_0$, $E_\leq\leq E^{(j)}_{k_j,{p_3}_j}\leq E_\geq$ such that
$$
\left| R^{(j)}_{k_j,{p_3}_j} \right| \to 1 \,.
$$
(With the superscript on $E_{k,p_3}^{(j)}$ and $R_{k,p_3}^{(j)}$ we indicate that the corresponding quantities are evaluated at $B=B_j$.) After passing to a subsequence we may assume that ${p_3}_j\to p_3$, $B_j \to B$ and $E_{k_j,{p_3}_j}^{(j)}\to E$ for some $p_3\in\R$, $0\leq B\leq B_0$ and $E_\leq \leq E\leq E_\geq$. If $B>0$, then we may assume that $k_j\to k$ for some $k\in\N_0$ and we easily deduce that $|R_{k,p_3}|=1$, which is a contradiction. It remains to discuss the case $B=0$. Writing $B_j = (E_{k_j,{p_3}_j}^{(j)}-{p_3}_j^2)/(2(2k_j+1))\sim (E - p_3^2)/(4k_j)$ and using the fact that
$$ 
e^{-\frac{x}{2k}} L_k( x/k) \to J_0(2\sqrt{x})
\qquad\text{as}\ k\to\infty
$$
for all $x \geq 0$ (see \cite[(22.15.2)]{AbSt}), where $J_0$ is the Bessel function of the first kind of order zero, we deduce the dominated convergence that
$$
R^{(j)}_{n_j,{p_3}_j} \to \pi \int_0^\infty du \int_{\mathbb{R}}dx_3\, |\varphi_*(\sqrt{u+x_3^2})|^2
\cos(x_3 p_3 ) J_0(\sqrt{(E-p_3^2) u}) \,.
$$
Thus,
$$
\left| \pi \int_0^\infty du \int_{\mathbb{R}}dx_3\, |\varphi_*(\sqrt{u+x_3^2})|^2
\cos(x_3 p_3 ) J_0(\sqrt{(E-p_3^2)u}) \right| = 1 \,.
$$
Since $|\cos (a) J_0(b)|\leq 1$ for all $a,b$ (see \cite[(9.1.60)]{AbSt}) with strict inequality away from a set of measure zero, we obtain again a contradiction. This proves \eqref{eq:rbound3} and therefore concludes the proof of the lemma.
\end{proof}

%%%%%%%%%%%%%%%%%%%%%%%%%%%%%%%%%%%

\subsection{Proof of the decomposition lemma}

As a consequence of Proposition~\ref{opineq} we now deduce a first decomposition result for almost maximizers $\Phi$ of $1-V^{1/2} L_{T,B} V^{1/2}$. 

\begin{lemma}\label{decomp1}
Given $C_1,C_2>0$ there are $B_0>0$, $E_0>0$ and $C>0$ with the following properties. If $|T-T_c|\leq C_1 B\leq C_1 B_0$ and if $\Phi\in L^2_{\rm symm}(\R^3\times\R^3)$ with $\|\Phi\|=1$ satisfies
\begin{equation}
\label{eq:almostmin}
\langle\Phi, (1- V^{1/2} L_{T,B} V^{1/2})\Phi\rangle \leq C_2 B \,,
\end{equation}
then there are $\psi\in L^2(\R^3)$ and $\xi\in L^2_{\rm symm}(\R^3\times\R^3)$ such that
$$
\Phi(X+\frac r2,X-\frac r2) = \cos(\Pi_X\cdot r/2)\psi(X) \phi_*(r) + \xi(X+\frac r2,X-\frac r2)
$$
with
$$
\left\langle\psi,\frac{\Pi_X^2}{E_0+\Pi_X^2}\psi\right\rangle + \|\xi\|^2 \leq C B \,.
$$
and
$$
\|\psi\|^2 \geq 1- CB \,.
$$
\end{lemma}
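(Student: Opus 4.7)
My plan is to convert the single scalar bound that Proposition~\ref{opineq} provides into all three conclusions, by exploiting the explicit form of $Q=\tfrac12(UPU^*+U^*PU)$. From Proposition~\ref{opineq}, for $|T-T_c|\leq C_1B$ and $B$ sufficiently small,
$$\kappa\,\langle\Phi,(1-Q)\Phi\rangle \leq \langle\Phi,(1-V^{1/2}L_{T,B}V^{1/2})\Phi\rangle + CB \leq (C_2+C)\,B,$$
so $\langle\Phi,(1-Q)\Phi\rangle\leq C'B$ with $C':=(C_2+C)/\kappa$. All three required estimates will be extracted from this single inequality together with the algebraic structure of $Q$.

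The first sub-step is to compute $Q\Phi$ explicitly. Writing $\Phi_r(X):=\Phi(X+r/2,X-r/2)$ and unfolding the definitions of $P=|\phi_*\rangle\langle\phi_*|$ (acting on the $r$-variable) and $U=e^{-\ii\Pi_X\cdot r/2}$ (acting fiberwise in $r$), I obtain
$$(UPU^*\Phi)(X+r/2,X-r/2) = \phi_*(r)\,(e^{-\ii\Pi_X\cdot r/2}\psi_-)(X),$$
with $\psi_-(X):=\int dr'\,\phi_*(r')(e^{\ii\Pi_X\cdot r'/2}\Phi_{r'})(X)$, and an analogous expression for $U^*PU\Phi$ in terms of a function $\psi_+$ with the opposite sign of the exponent. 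The substitution $r'\to-r'$, combined with $\phi_*(-r)=\phi_*(r)$ (spherical symmetry) and $\Phi_{-r}=\Phi_r$ (symmetry of $\Phi$), forces $\psi_+=\psi_-=:\psi$, so
$$Q\Phi(X+r/2,X-r/2) = \phi_*(r)\cos(\Pi_X\cdot r/2)\,\psi(X);$$
set $\xi:=\Phi-Q\Phi$. Since $P$ is a projection and $U$ is unitary, $\langle\Phi,UPU^*\Phi\rangle=\|PU^*\Phi\|^2=\|\psi\|^2$, and similarly for the other term; hence $\|\psi\|^2=\langle\Phi,Q\Phi\rangle\geq 1-C'B$. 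Because $Q$ has spectrum in $[0,1]$, $(1-Q)^2\leq 1-Q$ and therefore $\|\xi\|^2=\langle\Phi,(1-Q)^2\Phi\rangle\leq\langle\Phi,(1-Q)\Phi\rangle\leq C'B$.

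The hardest step is the bound on $\langle\psi,\Pi_X^2/(E_0+\Pi_X^2)\psi\rangle$, where the operator $R$ from \eqref{eq:r} and Lemma~\ref{rbound} enter. Using the half-angle identity $2\cos^2\alpha=1+\cos(2\alpha)$,
$$\|Q\Phi\|^2 = \int dr\,|\phi_*(r)|^2\,\langle\psi,\cos^2(\Pi_X\cdot r/2)\psi\rangle = \tfrac12\|\psi\|^2 + \tfrac12\langle\psi,R\psi\rangle,$$
so
$$\tfrac12\langle\psi,(1-R)\psi\rangle = \|\psi\|^2-\|Q\Phi\|^2 = \langle\Phi,(1-Q)Q\Phi\rangle.$$
The operator identity $(1-Q)-(1-Q)Q=(1-Q)^2\geq 0$ (valid because $Q$ commutes with $1-Q$) gives $(1-Q)Q\leq 1-Q$, hence $\langle\psi,(1-R)\psi\rangle\leq 2\langle\Phi,(1-Q)\Phi\rangle\leq 2C'B$. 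Since $\|R\|\leq 1$, the factorization $1-R^2=(1+R)(1-R)$ yields $1-R^2\leq 2(1-R)$, and combining with Lemma~\ref{rbound} produces
$$c\left\langle\psi,\frac{\Pi_X^2}{E_0+\Pi_X^2}\psi\right\rangle \leq \langle\psi,(1-R^2)\psi\rangle \leq 4C'B.$$
The main obstacle here is avoiding a suboptimal bound: a naive Cauchy--Schwarz estimate $|\langle\xi,Q\Phi\rangle|\leq\|\xi\|\,\|Q\Phi\|$ would give only $\sqrt B$; propagating the full order-$B$ scaling requires the algebraic rewriting $\langle\xi,Q\Phi\rangle=\langle\Phi,(1-Q)Q\Phi\rangle$ together with the operator inequality $(1-Q)Q\leq 1-Q$.
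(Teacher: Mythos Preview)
Your proof is correct and, in fact, somewhat more elementary than the paper's. Both arguments begin identically by invoking Proposition~\ref{opineq} to obtain $\langle\Phi,(1-Q)\Phi\rangle\leq C'B$, but then diverge in how $\psi$ and $\xi$ are chosen. The paper introduces the operator $A$ with $A^*A=Q$ and $AA^*=\tfrac12(1+R)$, then sets $\psi:=\tfrac{2}{1+R}\,A\Phi$ (which requires first checking that $1+R$ is boundedly invertible via Lemma~\ref{rbound}); this normalization forces $A\xi=0$, yielding exact orthogonality and the clean Pythagorean identity $\langle\Phi,(1-Q)\Phi\rangle=\tfrac14\langle\psi,(1-R^2)\psi\rangle+\|\xi\|^2$. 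You instead take $\psi=A\Phi$ directly (so your $\psi$ differs from the paper's by the factor $\tfrac{2}{1+R}$) and $\xi=(1-Q)\Phi$; in place of orthogonality you exploit the operator inequalities $(1-Q)^2\leq 1-Q$ and $Q(1-Q)\leq 1-Q$, valid because $0\leq Q\leq 1$, together with the key computation $\tfrac12\langle\psi,(1-R)\psi\rangle=\langle\Phi,Q(1-Q)\Phi\rangle$ and the bound $1-R^2\leq 2(1-R)$. Your route avoids inverting $1+R$ altogether, at the cost of losing the exact identity; the paper's route packages the two estimates into a single equality, which is structurally cleaner and may be advantageous if sharper constants or higher-order expansions were ever needed.
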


\begin{proof}
We begin the proof with some preliminary remarks. Let us introduce the operator $A: L^2_{\rm symm}(\R^3\times\R^3)\to L^2(\R^3)$ by
$$
(A\Phi)(X) := \int_{\R^3} dr\, \overline{\phi_*(r)} \cos(\Pi_X\cdot r/2) \Phi(X+r/2,X-r/2) \,.
$$
A simple computation shows that its adjoint $A^*:L^2(\R^3)\to L^2_{\rm symm}(\R^3\times\R^3)$ satisfies
$$
(A^*\psi)(X+r/2,X-r/2) = \cos(\Pi_X\cdot r/2) \psi(X)\phi_*(r) \,.
$$
Note that this is the form of the leading term in the decomposition of $\Phi$. Recalling definition \eqref{eq:defq} of $Q$, as well as the fact that $Q$ and $A$ act on symmetric functions, we find
\begin{equation}
\label{eq:astara}
A^* A = Q \,.
\end{equation}
On the other hand, using $(1/2)(1+\cos\lambda) = \cos^2(\lambda/2)$ we see that
\begin{equation}
\label{eq:aastar}
A A^* = \frac12 (1+R)
\end{equation}
with $R$ from \eqref{eq:r}. Next, we observe that the operator $1+R$ is boundedly invertible. In fact, since $1-R\leq 2$, we have from Lemma \ref{rbound}
$$
1+R \geq \frac12 (1-R^2) \geq \frac c2 \ \frac{\Pi_X^2}{E_0+\Pi_X^2} \geq \frac c2 \ \frac{2B}{E_0+2B} >0 \,.
$$

Now let $\Phi$ be as in the statement of the lemma. By Proposition \ref{opineq} and assumption \eqref{eq:almostmin} we obtain that
\begin{equation}
\label{eq:apriorisize}
\langle\Phi,(1-Q)\Phi\rangle \leq \kappa^{-1} (C+C_2)B \,.
\end{equation}
If we define
$$
\psi := \frac{2}{1+R} A\Phi \,,
$$
then the decomposition of $\Phi$ holds with
$$
\xi := \Phi - A^* \psi = \Phi - A^* \frac{2}{1+R} A\Phi \,.
$$
Because of \eqref{eq:aastar} we have
$$
A \left( 1-A^* \frac{2}{1+R} A \right) = \left( 1 - AA^* \frac{2}{1+R} \right) A = 0 \,,
$$
and therefore $\langle A^*\psi,\xi\rangle=0$. This implies
\begin{equation}
\label{eq:psinorm}
1=\|\Phi\|^2 = \|A^*\psi\|^2 + \|\xi\|^2 \leq \|\psi(X)\phi_*(r)\|^2 + \|\xi\|^2 = \|\psi\|^2 + \|\xi\|^2 \,.
\end{equation}
Because of \eqref{eq:astara} and \eqref{eq:aastar} we have
\begin{align*}
A \left( 1- Q \right) \left( 1-A^* \frac{2}{1+R} A \right) & = A \left( 1-A^*A \right) \left( 1-A^* \frac{2}{1+R} A \right) \\
& = \left( 1- A A^* \right) \left( 1 - AA^* \frac{2}{1+R} \right) A = 0 \,,
\end{align*}
and
$$
Q \left( 1- A^* \frac{2}{1+R}A \right) = A^* \left( 1 - AA^* \frac{2}{1+R} \right)A = 0 \,,
$$
and therefore $\langle A^*\psi,(1-Q)\xi\rangle=0$ and $\langle\xi,Q\xi\rangle =0$. This implies
\begin{equation}
\label{eq:apriorienergydecomp}
\left\langle\Phi, \left( 1- Q\right)\Phi\right\rangle
= \left\langle A^*\psi, \left( 1- Q\right) A^*\psi\right\rangle + \|\xi\|^2 \,.
\end{equation}
Finally, we compute, using \eqref{eq:aastar},
$$
A \left(1-Q\right) A^* = A \left( 1-A^* A \right) A^* = \left( 1- AA^*\right) A^*A = \frac{1-R}{2} \frac{1+R}{2} = \frac{1-R^2}{4}
$$
and obtain
\begin{equation}
\label{eq:aprioric}
\left\langle A^*\psi, \left( 1- Q\right)A^*\psi\right\rangle = \frac14 \langle\psi,\left(1-R^2\right)\psi\rangle \,.
\end{equation}
The claimed bounds on $\langle\psi,\Pi_X^2(E_0+\Pi_X^2)^{-1}\psi\rangle$ and $\|\xi\|^2$ follow from \eqref{eq:apriorisize}, \eqref{eq:apriorienergydecomp} and \eqref{eq:aprioric} together with Lemma \ref{rbound}. Inserting the bound on $\|\xi\|^2$ into \eqref{eq:psinorm} we obtain the claimed bound on $\|\psi\|^2$.
\end{proof}

Finally, we can provide a proof of the decomposition lemma.

\begin{proof}[Proof of Theorem \ref{decomp}]
Let $\psi$ be as in Lemma \ref{decomp1}. For $\epsilon\in[B,B_0]$ we set
$$
\psi_\leq := \1(\Pi_X^2 \leq \epsilon)\psi \,,
\qquad
\psi_> := \1(\Pi_X^2 > \epsilon)\psi \,.
$$
Recall from Lemma \ref{decomp1} that $\langle\psi,\Pi_X^2(E_0+\Pi_X^2)^{-1}\psi\rangle\leq CB$. This implies that for $k\geq 1$,
\begin{align}
\label{eq:aprioripi}
\left\|\left(\Pi_X^2\right)^{k/2} \psi_\leq\right\|^2 & \leq \epsilon^{k-1} \|\Pi_X\psi_\leq\|^2 \leq (E_0+\epsilon) \epsilon^{k-1} \left\langle\psi,\frac{\Pi_X^2}{E_0+\Pi_X^2}\psi\right\rangle \notag \\
& \leq C (E_0+\epsilon) \epsilon^{k-1} B
\end{align}
and
\begin{equation}
\label{eq:apriorioutside}
\|\psi_>\|^2\leq \frac{E_0+\epsilon}{\epsilon}\left\langle\psi,\frac{\Pi_X^2}{E_0+\Pi_X^2}\psi\right\rangle \leq C \frac{E_0+\epsilon}{\epsilon} B \,.
\end{equation}

We now define
$$
\sigma_0 := \cos(\Pi_X\cdot r/2) \psi_>(X) \phi_*(r)
$$
and
$$
\sigma:= \sigma_0 + \xi + \left( \cos(\Pi_X\cdot r/2) - 1 \right) \psi_\leq(X) \phi_*(r) \,,
$$
so that, by Lemma \ref{decomp1}, $\Phi= \psi_\leq(X)\phi_*(r) + \sigma$. According to Lemma \ref{decomp1} and \eqref{eq:apriorioutside}, we have
$$
\|\psi_\leq\|^2 = \|\psi\|^2 - \|\psi_>\|^2 \geq 1- CB - C \epsilon^{-1}B\geq 1- C'\epsilon^{-1} B \,.
$$

We now prove the claimed bounds on $\sigma_0$ and $\sigma-\sigma_0$. According to \eqref{eq:apriorioutside}, we have
$$
\|\sigma_0\|^2 \leq C \frac{E_0+\epsilon}{\epsilon} B \,.
$$
Moreover, for each $r\in\R^3$, since $1-\cos\lambda=2\sin^2(\lambda/2)\leq\lambda$,
$$
\left\| \left( \cos(\Pi_X\cdot r/2)-1 \right) \psi_\leq \right\|^2 \leq \frac12 \|r\cdot\Pi_X \psi_\leq\|^2 \leq \frac{r^2}{2} \|\Pi_X \psi_\leq\|^2 \leq \frac{C}{2} (E_0+\epsilon)B r^2 \,,
$$
where we used \eqref{eq:aprioripi}. Thus,
$$
\left\| \left(\cos(\Pi_X\cdot r/2)-1\right) \psi_\leq \phi_* \right\|^2 \leq \frac{C}{2} (E_0+\epsilon)B \int_{\R^3} dr\, |\phi_*(r)|^2r^2 \,.
$$
The last integral is finite by the decay properties of $\phi_*$, see \cite[Proposition 1]{FHSS}. Recalling the bound on the norm of $\xi$ from Lemma \ref{decomp1} we finally obtain the claimed bound $\|\sigma-\sigma_0\|^2 \leq C' B$. (Note that the bound on $\left(\cos(\Pi_X\cdot r/2)-1\right) \psi_\leq \phi_*$ could be improved to $B\epsilon$ if we use $1-\cos\lambda\leq\lambda^2/2$, but this does not improve the final bound on $\sigma-\sigma_0$.) This proves the theorem.
\end{proof}

%%%%%%%%%%%%%%%%%%%%%%%%%%%%%

%%%%%%%%%%%%%%%%%%%%%%%%%

\section{Upper bound on the critical temperature}

In this section we prove part (2) of Theorem \ref{main}. In view of Corollary \ref{aprioriuppertemp} it suffices to consider $T$ satisfying $|T-T_c|\leq C_1B$. Moreover, it clearly suffices to consider functions $\Phi$ with $\|\Phi\|=1$ satisfying
$$
\langle\Phi, (1- V^{1/2} L_{T,B} V^{1/2})\Phi\rangle \leq C_2 B
$$
(for if there are no such $\Phi$, then the theorem is trivially true). According to Theorem~\ref{decomp}, for any parameter $\epsilon\in [B,B_0]$, $\Phi$ can be decomposed as
$$
\Phi = \psi_\leq(X)\phi_*(r) + \sigma \,.
$$
Thus,
$$
\langle\Phi, (1- V^{1/2} L_{T,B} V^{1/2})\Phi\rangle = I_1 + I_2 + I_3
$$
with
\begin{align*}
I_1 & = \langle \psi_\leq(X)\phi_*(r), (1- V^{1/2} L_{T,B} V^{1/2}) \psi_\leq(X)\phi_*(r) \rangle \,,\\
I_2 & = \langle\sigma, (1- V^{1/2} L_{T,B} V^{1/2})\sigma\rangle \,,\\
I_3 & = 2\re \langle\sigma, (1- V^{1/2} L_{T,B} V^{1/2})\psi_\leq(X)\phi_*(r) \rangle \,.
\end{align*}

The term $I_1$ is the main term and we have, exactly as in the proof of the lower bound on the critical termperature,
$$
I_1 \geq -\Lambda_2 \frac{T_c-T}{T_c}\|\psi_\leq\|^2 + \Lambda_0 \langle\psi_\leq,\Pi_X^2\psi_\leq\rangle - C \epsilon B \,.
$$
Here the remainder $\epsilon B$ comes from the bound on $\|\Pi_X^2\psi_\leq\|^2$ in \eqref{eq:decomppsibounds}.

Let us bound the term $I_2$. Using the operator inequality from Proposition \ref{opineq} (dropping the non-negative term $\kappa(1-Q)$) and recalling that $T\geq T_c - C_1 B$ we obtain
$$
I_2 \geq - C B \|\sigma\|^2 \geq - C' \epsilon^{-1} B^2 \,,
$$
where we used the bound on $\sigma$ from \eqref{eq:decompsigmabound}.

It remains to bound $I_3$. According to Lemmas \ref{lltilde} and \ref{ltildem}, we have
\begin{align*}
I_3 & \geq \langle\sigma, (1- V^{1/2} N_{T,B} V^{1/2})\psi_\leq(X)\phi_*(r)\rangle - C B \|\sigma\| \|\psi_\leq\| \\
& \geq \langle\sigma, (1- V^{1/2} N_{T,B} V^{1/2})\psi_\leq(X)\phi_*(r)\rangle - C B^{3/2} \epsilon^{-1/2} \,.
\end{align*}
Here in the first inequality we used the assumption that $(1+|r|)V$ is bounded and in the second inequality we used \eqref{eq:decomppsibounds} and \eqref{eq:decompsigmabound}.

We next decompose
\begin{align*}
& \langle\sigma, (1- V^{1/2} N_{T,B} V^{1/2})\psi_\leq(X)\phi_*(r)\rangle  = \langle\sigma, (1- V^{1/2} \chi_\beta(p_r^2-\mu) V^{1/2})\psi_\leq(X)\phi_*(r)\rangle \\
& \qquad\qquad\qquad\qquad\qquad\qquad\qquad\quad + \langle\sigma, V^{1/2}(\chi_\beta(p_r^2-\mu)- N_{T,B}) V^{1/2}\psi_\leq(X)\phi_*(r)\rangle \,.
\end{align*}
We have, in view of \eqref{eq:chiineq}, $|T-T_c|\leq C_1B$ and the boundedness of $V$,
\begin{align*}
& \langle\sigma, (1- V^{1/2} \chi_\beta(p_r^2-\mu) V^{1/2})\psi_\leq(X)\phi_*(r)\rangle \\
& \qquad \geq \langle\sigma, (1- V^{1/2} \chi_{\beta_c}(p_r^2-\mu) V^{1/2})\psi_\leq(X)\phi_*(r)\rangle - C B \|\sigma\| \|\psi_\leq\| \\
& \qquad\geq - C' B^{3/2} \epsilon^{-1/2} \,,
\end{align*}
where we again used \eqref{eq:decomppsibounds} and \eqref{eq:decompsigmabound}. Thus, we are left with the term involving the difference $2\chi_\beta(p_r^2-\mu)- N_{T,B}$. According to Lemma \ref{mchi} we have
$$
\langle\sigma-\sigma_0, V^{1/2}(\chi_\beta(p_r^2-\mu)- N_{T,B}) V^{1/2}\psi_\leq(X)\phi_*(r)\rangle \geq - C \|\sigma-\sigma_0\| \|\Pi_X^2\psi_\leq\| \geq - C' B \epsilon^{1/2} \,,
$$
where we used \eqref{eq:decompsigmabound2}. Finally, using the explicit form of $\sigma_0$ we write
\begin{align*}
& \langle\sigma_0, V^{1/2}(\chi_\beta(p_r^2-\mu)- N_{T,B}) V^{1/2}\psi_\leq(X)\phi_*(r)\rangle \\
& \quad = \langle \psi_>(X)\phi_*(r), V^{1/2}(\chi_\beta(p_r^2-\mu)- N_{T,B}) V^{1/2}\psi_\leq(X)\phi_*(r)\rangle \\
& \quad\quad + \langle (\cos(\Pi_X\cdot r/2)-1) (\psi_>(X) \phi_*(r)), V^{1/2}(\chi_\beta(p_r^2-\mu)- N_{T,B}) V^{1/2}\psi_\leq(X)\phi_*(r)\rangle \,.
\end{align*}
Since the operator $V^{1/2}(2\chi_\beta(p_r^2-\mu)- N_{T,B}) V^{1/2}$ commutes with $\Pi_X^2$ and since $\psi_>$ and $\psi_\leq$ are localized where $\Pi_X^2>\epsilon$ and where $\Pi_X^2\leq\epsilon$, respectively, we have
$$
\langle \psi_>(X)\phi_*(r), V^{1/2}(\chi_\beta(p_r^2-\mu)- N_{T,B}) V^{1/2}\psi_\leq(X)\phi_*(r)\rangle = 0 \,.
$$
In order to bound the other term we decompose $\cos(\Pi_X\cdot r/2)-1=(1/2)(U-1)+(1/2)(U^*-1)$. Since $U$ commutes with $V$ we have, by Lemma \ref{mchicomm},
\begin{align*}
& \left|\langle (U-1) (\psi_>(X) \phi_*(r)), V^{1/2}(\chi_\beta(p_r^2-\mu)- N_{T,B}) V^{1/2}\psi_\leq(X)\phi_*(r)\rangle \right| \\
& \quad = \left|\langle \psi_>(X) \phi_*(r), V^{1/2}(U^*-1)(\chi_\beta(p_r^2-\mu)- N_{T,B}) V^{1/2}\psi_\leq(X)\phi_*(r)\rangle \right| \\
& \quad\leq \|\psi_>\| \||r|V^{1/2}\phi_*\| \left\||r|^{-1} (U^*-1)(\chi_\beta(p_r^2-\mu)- N_{T,B}) V^{1/2}\psi_\leq(X)\phi_*(r) \right\| \\
& \quad \leq C\|\psi_>\| \|\Pi_X^3\psi_\leq\| \\
& \quad \leq C' B \epsilon^{1/2} \,.
\end{align*}
Here we used \eqref{eq:decomppsibounds} and \eqref{eq:decomppsibounds2}. Note that the factor of $|r|$ presents no problem because $|r|V\in L^\infty$ and $|r|^{1/2}\phi_*\in L^2$ by \cite[Prop. 1]{FHSS}. (We note that this bound could be improved by not splitting $\cos(\Pi_X\cdot r/2)-1$ into two terms and by proving a version of Lemma \ref{mchicomm} with $\cos(\Pi_X\cdot r/2)-1$ instead of $U-1$. This would lead to a bound of the form $B\epsilon$, which, however, would not improve the final result because the bound on $\sigma-\sigma_0$ is also of the form $B\epsilon^{1/2}$.)

To summarize, we have shown that
\begin{align*}
\langle\Phi, (1- V^{1/2} L_{T,B} V^{1/2})\Phi\rangle & \geq -\Lambda_2 \frac{T_c-T}{T_c}\|\psi_\leq\|^2 + \Lambda_0 \langle\psi_\leq,\Pi_X^2\psi_\leq\rangle \\
& \qquad - C B \left( \epsilon + \epsilon^{-1} B + B^{1/2} \epsilon^{-1/2} + \epsilon^{1/2} \right).
\end{align*}
In order to minimize the error, we choose $\epsilon= B^{1/2}$ and obtain
$$
\langle\Phi, (1- V^{1/2} L_{T,B} V^{1/2})\Phi\rangle \geq -\Lambda_2 \frac{T_c-T}{T_c}\|\psi_\leq\|^2 + \Lambda_0 \langle\psi_\leq,\Pi_X^2\psi_\leq\rangle - C B^{5/4} \,.
$$
We estimate $\langle\psi_\leq,\Pi_X^2\psi_\leq\rangle \geq 2B \|\psi_\leq\|^2$ and obtain
$$
\langle\Phi, (1- V^{1/2} L_{T,B} V^{1/2})\Phi\rangle \geq B \left( \left( 2 \Lambda_0-\Lambda_2 \frac{T_c-T}{T_c B} \right) \|\psi_\leq\|^2 - C B^{1/4} \right) \,.
$$
Recalling that $\|\psi_\leq\|^2 \geq c>0$ we finally conclude that
$$
\langle\Phi, (1- V^{1/2} L_{T,B} V^{1/2})\Phi\rangle >0
$$
provided that $T> T_c - 2T_c(\Lambda_0/\Lambda_2)B + C B^{5/4}$. This concludes the proof of the upper bound on the critical temperature.

%%%%%%%%%%%%%%%%%%%%%%%%

\appendix

\section{Comparison with WHH theory}\label{appendix}

In this appendix we show that our main result is consistent with what is known from WHH theory in the physics literature.
   
The model studied in this paper describes a system of electrons characterized by the dispersion relation $p^2-\mu$ (choosing units $\hbar=k_B=e=2m^*=1$) and an attractive two body interaction $-2V(x-y)$ depending only on the distance $|x-y|$, and the equations we use are the same as the ones underlying WHH theory \cite{HeWe,WeHeHo}. WHH theory is based on two approximations: (i) the {\em local approximation}, which (essentially) amounts to replacing $V(x-y)$ by a local interaction $g\delta(x-y)$, (ii) the {\em phase approximation}, which in our notation amounts to replacing $L_{T,B}$ by the operator $N_{T,B}$ in \eqref{eq:defmt}. Moreover, in WHH theory the function $\alpha(x,y)$ is computed using the separation ansatz $\alpha(x,y)=\psi(\frac12(x+y))\tau(x-y)$. Our work provides a rigorous justification of these simplifications and, furthermore, generalizes WHH theory to situations where the range of the two-body interaction cannot be neglected. While the emphasis in this paper was on temperatures close to $T_c$, we also developed tools applicable to the full temperature range. 

Previous work on WHH theory in the physics literature studying the local- and the phase approximations include \cite{EL1} and \cite{GG,RSK}, respectively. We mention that our result is restricted to weak coupling superconductors without impurities; see, e.g., \cite{SchSch} for an extension of WHH theory to strong coupling superconductors.

Throughout this appendix we assume that $\mu>0$. Our main result is a rigorous derivation of the following formula for the slope of $B_{c2}(T)$ at $T=T_c$, which we obtain without using the simplifications of WHH theory,
\begin{equation} 
\label{ourslope} 
\left.\frac{dB_{c2}(T)}{dT}\right|_{T=T_c} = - \frac{\int_{\R^3}  t_*(p)^2\, \cosh^{-2}(\beta_c(p^2-\mu)/2)\, dp}{\int_{\R^3}  t_*(p)^2 \left( g_1(\beta_c(p^2-\mu)) + \frac 23 \beta_c p^2 g_2(\beta_c(p^2-\mu))\right) dp}
\end{equation} 
with the special function $g_1(z)$ and $g_2(z)$ given in \eqref{eq:auxiliary}, and where $t_*(p)$ is the momentum dependent gap (usually denoted as $\Delta(p)$ in the physics literature).

The WHH result for the slope of $B_{c2}(T)$ at $T=T_c$ is 
  \begin{equation} 
\label{slope1} 
\left.\frac{dB_{c2}(T)}{dT}\right|_{T=T_c} = - \frac{T_c}{\mu} \frac{6}{\gamma_1} 
\end{equation} 
with
$$
\gamma_1 := \int_{\mathbb{R}}g_2(z)\,dz = 7\zeta(3)/\pi^2 \approx 0.8526 \,;
$$
see Eq.\ (23) in \cite{La2} with $\langle v_\perp^2\rangle_{FS} = 2v_F^2/3$ and $\mu = v_F^2/4$. In the rest of this appendix we show how to obtain the WHH result in \eqref{slope1} from ours in \eqref{ourslope}, and how to derive corrections to this. 

Since  $t_*(p)$ only depends on $|p|$, it can be written as a function of $p^2$,
\begin{equation*} 
t_*(p) = \tilde t_*(p^2)\, . 
\end{equation*} 
This  allows to change integration variables in \eqref{ourslope} to $x=\beta_c(p^2-\mu)$ and to obtain
\begin{align*}
\left. \frac{dB_{c2}(T)}{dT}\right|_{T=T_c} = %- \frac{\int_{-\mu}^\infty d\epsilon \sqrt{\mu+\epsilon} \, \tilde t^*(\epsilon)^2\cosh^{-2}(\beta_c\epsilon/2)}{\int_{-\mu}^\infty d\epsilon \sqrt{\mu+\epsilon} \, \tilde t^*(\epsilon)^2 \left( g_1(\beta_c\epsilon) + \frac 23 \beta_c (\epsilon+\mu)g_2(\beta_c\epsilon) \right)} = \\ 
-\frac{ \int_{\mathbb{R}} G(x/(\beta_c\mu))\cosh^{-2}(x/2)\, dx}{ \int_{\mathbb{R}} G(x/(\beta_c\mu)) \left( g_1(x) + \frac 23 (x+\beta_c \mu) g_2(x)\right)dx}
\end{align*}
with the notation, $\Theta$ being the Heaviside function,
\begin{equation*}
G(z) := \frac{\tilde t_*((1+z)\mu)^2}{\tilde t_*(\mu)^2} \sqrt{1+z} \,\Theta(1+z) \,.
\end{equation*} 
We found it convenient to fix an arbitrary multiplicative constant so that $G(0)=1$.

For standard BCS superconductors, $G(z)$ is a smooth function in some interval containing $z=0$, and $1/(\beta_c\mu)$ is very small ($10^{-3}$ or so, typically). Thus it is appropriate to compute the integrals above by inserting the Taylor series of $G$,  
$$
 G(x/(\mu \beta_c)) = 1 + G'(0) \frac{x}{\beta_c\mu} + G''(0) \frac{x^2}{2(\beta_c\mu)^2} +\ldots 
$$
A simple computation gives the integral in the nominator,
\begin{equation*} 
 \int_{\mathbb{R}} G(x/(\beta_c\mu))\cosh^{-2}(x/2) \,dx = 4 + \frac{2\pi^2}{3} G''(0)(\beta_c\mu)^{-2} + O((\beta_c\mu)^{-4})\,. 
\end{equation*} 
The corresponding computation for the integral in the denominator is more complicated, since it contains a logarithmic term,
\begin{align*} 
& \int_{\mathbb{R}} G(x/(\beta_c\mu)) \left( g_1(x) + \frac 23 (x+\beta_c \mu) g_2(x)\right)dx \\
& \qquad = \frac23 \gamma_1 \beta_c\mu + \left[2G'(0)\ln(\gamma_2\beta_c\mu)+\frac23 G''(0)\right](\beta_c\mu)^{-1}  + o((\beta_c\mu)^{-1})
\end{align*} 
with
\begin{equation*} 
\gamma_2:= \exp\left(\frac23 - \int_0^\infty \ln(x)\frac{d}{dx}(x^2 g_1(x))dx \right) \approx 0.8124 \,. 
\end{equation*} 
Using this we obtain 
\begin{equation*} 
\left. \frac{dB_{c2}(T)}{dT}\right|_{T=T_c} =  -\frac{6}{\gamma_1\beta_c\mu}\Bigl( 1 -\frac3{\gamma_1}G'(0)\frac{\ln(\gamma_2\beta_c\mu)}{(\beta_c\mu)^{2}}  + \left( \frac{\pi^2}{6}  - \frac1{\gamma_1}\right)\frac{G''(0)}{(\beta_c\mu)^{2}} +o\left(\frac1{(\beta_c\mu)^2}\right) \Bigr),
\end{equation*} 
which yields \eqref{slope1} and provides corrections to this WHH result. 

We finally mention that the terms $G'(0)$ and $G''(0)$ can be evaluated using \cite{FHNS}
\begin{equation*} 
t_*(p) = c_{*} \left( \int_{\mathbb{R}^3} V(r)j_0(\sqrt{\mu}|r|)j_0(|p||r|)dr + o(1)\right)
\end{equation*}
with the spherical Bessel function $j_0(z)=\sin(z)/z$ and an irrelevant constant $c_*$. This formula holds in the weak coupling limit, that is, when $V$ is replaced by $\lambda V$ with a constant $\lambda\ll 1$. We mention that the limit $\lambda\ll 1$ is consistent with the limit $\beta_c\mu\gg 1$ which we performed before. (On the other hand, from a mathematical perspective it is not completely obvious that Theorem \ref{main} is applicable since the assumption $T_c>T_1>0$ is not satisfied uniformly in $\lambda$. We plan to address this in future work.)

%%%%%%%%%%%%%%%%%%%%%%%%

\bibliographystyle{amsalpha}

\end{document}